\documentclass[10pt,reqno]{amsart}

\usepackage[T1]{fontenc}
\usepackage{lmodern}
\usepackage{amsmath}
\usepackage{amssymb}
\usepackage{mathtools}
\usepackage{booktabs}
\usepackage{array}
\usepackage{enumitem}
\usepackage{microtype}
\usepackage{hyperref}
\usepackage{url}
\usepackage{xcolor}
\usepackage{placeins}
\usepackage{tikz}
\usepackage{tikz-cd}
\AtBeginDocument{\apptocmd{\thebibliography}{\setlength{\itemsep}{0pt}}{}{}}

\usepackage{longtable}
\usetikzlibrary{arrows.meta,calc}
\definecolor{pathblue}{RGB}{30,84,139}
\definecolor{pathorange}{RGB}{177,83,23}
\definecolor{pathteal}{RGB}{28,117,108}

\hypersetup{
  hidelinks,
  pdftitle={Topological Semantics for Scoped Computational Paths},
  pdfauthor={Arthur Freitas Ramos, Ruy J. G. B. de Queiroz, Anjolina Grisi de Oliveira, and Tiago M. L. de Veras},
  pdfsubject={Topological semantics for scoped computational-path rewrite presentations},
  pdfkeywords={computational paths, rewriting, fundamental groupoid, quotient topology, topological groupoid}
}

\allowdisplaybreaks

\theoremstyle{definition}
\newtheorem{definition}{Definition}[section]
\newtheorem{example}[definition]{Example}

\theoremstyle{plain}
\newtheorem{theorem}[definition]{Theorem}
\newtheorem{lemma}[definition]{Lemma}
\newtheorem{proposition}[definition]{Proposition}
\newtheorem{corollary}[definition]{Corollary}
\theoremstyle{remark}
\newtheorem{remark}[definition]{Remark}

\newcommand{\Trace}{\mathsf{Trace}}

\newcommand{\RwEq}{\mathrel{\simeq_{\!\mathcal P}}}

\newcommand{\Fin}{\mathrm{fin}}
\newcommand{\Ord}{\mathrm{ord}}
\newcommand{\Top}{\mathsf{Top}}
\newcommand{\Z}{\mathbb Z}
\newcommand{\N}{\mathbb N}

\newcommand{\LeanName}[1]{\path{#1}}

\title[Topological Semantics for Scoped Computational Paths]
  {Topological Semantics for Scoped Computational Paths}

\author{Arthur Freitas Ramos}

\author{Ruy J. G. B. de Queiroz}

\author{Anjolina Grisi de Oliveira}

\author{Tiago M. L. de Veras}

\subjclass[2020]{Primary 22A22, 54B15; Secondary 03B38, 55Q05, 68V20, 54B30}
\keywords{computational paths, rewriting, fundamental groupoid, quotient topology, topological groupoid}

\begin{document}

\raggedbottom

\begin{abstract}
Computational paths record the steps of an equality derivation.
We give them a topological semantics that distinguishes derivable rewrites
from arbitrary homotopies.  Coherent representatives pair traces with paths
homotopic to their realizations.  We compare a topology retaining the entire
trace with one observing only endpoints, length, and paths.

Quotienting by the declared rewrites gives a groupoid.  Multiplication is
continuous when composable pairs carry the quotient topology inherited
from composable representatives.  This topology can differ from the usual
subspace topology on pairs of quotient arrows.  We characterize when they
agree, give compact-Hausdorff and discrete sufficient conditions, and use
the Hawaiian earring to exhibit a failure of agreement.

The comparison with geometric homotopy classes is injective exactly when the
presentation is geometrically complete.  Normal-form certificates give a
criterion for completeness.  In the universal presentation, all paths are
primitive steps and all endpoint-fixed homotopies are allowed rewrites;
its quotient recovers the quotient-topologized fundamental groupoid.
Circle and torus examples recover the classical based-loop classifications
by $\mathbb Z$ and $\mathbb Z^2$.

A Lean development supports the construction.  A focused Lean 4.32.0 result
registered in Palomar covers the topology comparison, additive circle and
torus classifications, and a conditional Hawaiian-earring obstruction
transfer.  We distinguish that result from the earlier Lean 4.24.0
development and from the mathematical exposition.

\end{abstract}

\maketitle

\begin{center}
\footnotesize
Arthur Freitas Ramos: Microsoft, USA; \texttt{arfreita@microsoft.com}\\
Ruy J. G. B. de Queiroz and Anjolina Grisi de Oliveira: Centro de
Inform\'atica, Universidade Federal de Pernambuco, Brazil;\\
\texttt{ruy@cin.ufpe.br}, \texttt{ago@cin.ufpe.br}\\
Tiago M. L. de Veras: Departamento de Matem\'atica, Universidade Federal
Rural de Pernambuco, Brazil; \texttt{tiago.veras@ufrpe.br}\\[2pt]
\emph{Corresponding author: Arthur Freitas Ramos}
\end{center}
\enlargethispage{7pt}

\section{Introduction}

Computational paths record equality as explicit finite traces of elementary
transformations.  Instead of retaining only the endpoints of an equality
witness, they retain the sequence of steps by which one expression is
transformed into another.  This additional structure supports rewriting,
normalization, and groupoid operations, but it also creates a semantic design
problem: a topological interpretation should be sound without identifying,
by definition, every trace whose realization is homotopic.  The trace calculus
and its identity-type motivation are developed in
\cite{RamosDeQueirozDeOliveira2016propositional,Ramos2017identity,
RamosDeQueirozDeOliveiraDeVeras2018explicit}.

We study this problem for continuous geometric data.  A rewrite presentation
specifies primitive geometric steps and named rewrites between parallel finite
traces.  Every named rewrite carries an endpoint-fixed homotopy between the
corresponding realized paths.  The induced rewrite equality is generated only
by these named rules, congruence, and the structural groupoid laws.  We call
the relation scoped because the available identifications are controlled by
the presentation rather than by all ambient homotopies.

Passing to rewrite classes introduces a separate topological problem.
Concatenation is continuous on the space of explicitly composable
representatives, but the intended arrow space is a quotient.  The ordinary
set of composable quotient arrows is a pullback inside a product of quotient
spaces, and products of quotient maps need not be quotient.  Consequently,
continuity of concatenation on representatives does not automatically imply
continuity of multiplication for the ordinary pullback topology.  The
construction below resolves this issue by first taking the quotient of
explicitly composable representatives.  Its quotient topology is the final
composable-pair topology, for which multiplication is continuous by
construction; the ordinary pullback topology is then compared with it as a
separate theorem.

The geometric paths are ordinary continuous maps $I\to X$.
When we require a homotopy to exist, its particular choice is not retained
in a coherent representative.  In this sense the coherence condition is
proof-irrelevant; the path itself is still data.  By contrast, a computational
trace explicitly records its primitive steps.  The comparison with the
fundamental groupoid interprets these traces as geometric paths; it does not
identify type-theoretic identity proofs with topological paths or assume
univalence or higher-inductive types \cite{HofmannStreicher98,Brown06,Hatcher02}.

\subsection{A computational path in one example}

Here is the basic idea before introducing the topological carrier.  Suppose
that $a$, $b$, and $c$ are expressions and that $r$ and $s$ are elementary
transformations
\[
  a\xrightarrow{r}b\xrightarrow{s}c.
\]
The composite computational path $p=r;s$ records both steps.  Its endpoints
say that $a$ is transformed into $c$, but the path also remembers the route
through $b$.  It has an identity, a reversal
$p^{-1}=s^{-1};r^{-1}$, and a composite with any compatible trace.  A named
rewrite may relate two parallel traces, for example
\[
  r;s\;\rightsquigarrow\;u,
  \qquad a\xrightarrow{u}c,
\]
but that identification is available only when the presentation declares it.
Thus a computational-path derivation is a finite proof object, not merely an
ordinary equality of endpoints.

In the geometric setting, each elementary step is realized by a continuous
interval path.  The word $r;s$ is realized by concatenating the realizations
of $r$ and $s$, and a named rewrite carries an endpoint-fixed homotopy between
the two realizations.  The fundamental semantic implication is therefore
\[
  p\;\text{has a scoped derivation to}\;q
  \quad\Longrightarrow\quad
  |p|\simeq_{\mathrm{rel}\,\partial I}|q|.
\]
The converse is deliberately not built into the syntax: it is the separate
geometric-completeness problem studied later.  For the one-generator circle,
the same picture becomes the signed-word cancellation
\[
  a;a^{-1};a\;\rightsquigarrow\;a,
\]
which is a small preview of the normal-form argument in Section~8.

The construction has three layers:
\[
  \begin{gathered}
    \text{finite traces and scoped rewrites \;(syntax)}\\[-2pt]
    \mathord{\downarrow}\\[-2pt]
    \text{coherent carrier and quotient \;(semantics)}\\[-2pt]
    \mathord{\downarrow}\\[-2pt]
    \text{geometric homotopy classes \;(comparison)}
  \end{gathered}
\]
The first transition equips traces with coherent geometric representatives
and forms the scoped quotient.  The second is the realization map to geometric
homotopy classes.  Soundness gives the forward comparison;
normal forms or another completeness argument are needed to make it faithful.
The notation used below follows this order: $\mathsf{Tr}_{\mathcal P}$ is the
flat word carrier, $T$ is the common coherent carrier equipped with both a
trace-sensitive topology $T_{\mathrm{tr}}$ and an observable topology
$T_{\mathrm{obs}}$, and $G_{\mathcal P}$ is the scoped quotient arrow space.

The main results address three questions.
\begin{enumerate}[label=(\roman*),leftmargin=2.2em]
\item \emph{What information does the topology retain?}
  We define the trace-sensitive and observable topologies and prove that the
  identity from the former to the latter is continuous.  Their scoped
  quotients can have different topologies even though they have the same
  underlying set of arrows.
\item \emph{When is composition continuous?}
  We construct the groupoid with continuous multiplication on the final
  composable-pair domain.  We give four equivalent conditions for this
  domain to have the ordinary pullback topology.  Compact-Hausdorff and
  discrete hypotheses ensure agreement; the Hawaiian-earring based-loop
  example shows that agreement can fail.
\item \emph{When do rewrites capture geometric equality?}
  We prove a completeness criterion using normal forms and identify the
  universal presentation with the quotient-topologized fundamental groupoid.
  The finite-generator circle and torus examples illustrate the criterion.
  Maps of presentations also induce continuous maps of their quotient
  groupoids.
\end{enumerate}

The contribution is the combination of these results for scoped rewrite
presentations, rather than a new quotient-space theorem or a new calculation
of the fundamental groups of the circle and torus.  The construction separates
three choices: the available rewrites, the topology on representatives, and
the topology on composable quotient pairs.  This separation makes clear which
conclusions follow from soundness, which require completeness, and which
require an additional topological hypothesis.  Section~10 explains which
parts are checked by the Lean developments, including the focused result
registered in Palomar \cite{PalomarScoped2026}.

Our earlier work supplies the ingredients that this paper organizes into a
topological semantics.  The identity-type and explicit-calculus papers
develop the interpretation of equality as a computational path and the
associated rewrite operations
\cite{RamosDeQueirozDeOliveira2016propositional,Ramos2017identity,
RamosDeQueirozDeOliveiraDeVeras2018explicit}; the 2021 papers use those
operations to calculate fundamental groups and fundamental groupoids
\cite{RamosDeQueirozDeOliveira2021calculation,
RamosDeQueirozDeOliveira2021fundamental}.  The weak-groupoid and weak
$\omega$-groupoid developments clarify the higher coherence carried by the
calculus \cite{DeVerasRamosDeQueirozDeOliveira2025weak,
RamosEtAl2025OmegaPreprint}, while the topological application, the Lean
formalization, and the Seifert--van Kampen preprint provide geometric and
formal case studies \cite{DeVerasRamosDeQueirozDeOliveira2025topological,
RamosEtAl2025LeanPreprint,RamosEtAl2025SVKPreprint}.  The present paper
addresses the missing semantic layer: it gives a general scoped quotient for
continuous presentations, distinguishes the final and ordinary composable
topologies, proves the universal property and completeness interface, and
shows how the framework recovers genuine circle and torus calculations.  It
therefore extends the earlier computational-path program without identifying
the present quotient with the higher-coherence constructions themselves.
The book-length account of the calculus gives additional background
\cite{RamosDeQueirozDeOliveiraGabbay2026}.
The fundamental-groupoid terminology follows the standard topological
treatment of groupoids and covering-space calculations \cite{Brown06,Hatcher02}.
There is also a substantial literature on the topology of fundamental
groupoids.  Brown and Danesh-Naruie construct a lifted topology making
quotient groupoids topological groupoids under local hypotheses
\cite{BrownDaneshNaruie75}; more recent Lasso and Brown topologies pursue the
same goal by changing the topology on the fundamental-groupoid quotient
\cite{PakdamanShahini21,PakdamanShahini23}.  By contrast, the quotient topology
on path classes is intentionally retained here: the quotient-topology
literature shows that even the based fundamental group can lose joint
multiplication continuity \cite{Fabel11,BrazasFabel13}.  Our final-domain
construction isolates that obstruction as the product-quotient comparison
theorem instead of silently replacing the quotient topology with a
better-behaved one; Proposition~\ref{prop:hawaiian-earring} makes the
connection explicit for the Hawaiian earring.

The paper is organized as follows.  Section~2 defines geometric traces, the
two representative topologies, scoped presentations, and realization soundness.
Section~3 constructs the
quotient groupoid and proves its semantic factorization property.  Section~4
compares the final and ordinary composable-pair topologies.  Functoriality is
treated in Section~5, and geometric comparison and completeness in Section~6.
Section~7 gives the universal presentation; the circle and torus calculations
are in Section~8.  Section~9 explains the logical and rewriting perspective,
and Section~10 describes the Lean artifact.

\section{Geometric traces and scoped presentations}

Throughout, $X$ is a topological space and $I=[0,1]$.  Give the interval-path
space $X^I$ the compact-open topology.  Let $E$ be a topological space of
oriented primitive steps, with continuous endpoint maps $s,t:E\to X$ and a
continuous realization map
\[
  \rho:E\longrightarrow X^I,
  \qquad \rho(e)(0)=s(e),\quad \rho(e)(1)=t(e).
\]
The continuity requirement is part of the presentation data.  Let $E^+$ be a
topological copy of $E$, let $E^-$ be a formal inverse copy, and write
\(\widetilde E=E^+\sqcup E^-\).  On $E^+$ we use the given endpoint and
realization maps; on $E^-$ the endpoint maps are interchanged and the
realization is reversed.  Thus there are continuous extensions
\[
  \widetilde s,\widetilde t:\widetilde E\longrightarrow X,
  \qquad
  \widetilde\rho:\widetilde E\longrightarrow X^I,
\]
whose restrictions to $E^+$ are $s,t,\rho$ and whose restrictions to the
inverse copy (under $E^-\cong E$) are $t,s,e\mapsto\rho(e)^{-1}$,
respectively.

Figure~\ref{fig:oriented-steps} separates the space of step labels from the
space in which their paths run.  The inverse copy does not require an
orientation on the topological space $E$: only the source, target, and
direction of the represented path are reversed.
\begin{figure}[htbp]
\centering
\begin{tikzpicture}[font=\small,>=Stealth]
\node[anchor=west,font=\small\bfseries] at (0,2.4) {(a) Step labels};
\draw[rounded corners,fill=pathblue!5,draw=pathblue] (0,0.95) rectangle (2.25,1.95);
\draw[rounded corners,fill=pathorange!5,draw=pathorange] (0,-0.55) rectangle (2.25,0.45);
\node at (1.12,1.45) {$e^+\in E^+\cong E$};
\node at (1.12,-0.05) {$e^-\in E^-\cong E$};
\draw[->,pathblue,thick] (2.4,1.45) -- node[above] {$\widetilde\rho$} (4.05,1.45);
\draw[->,pathorange,thick] (2.4,-0.05) -- node[above] {$\widetilde\rho$} (4.05,-0.05);
\node[anchor=west,font=\small\bfseries] at (4.25,2.4) {(b) Paths in $X$};
\draw[->,pathblue,thick] (4.5,1.45) .. controls (5.1,2.05) and (6.45,2.05) .. (7.1,1.45);
\fill (4.5,1.45) circle (1.5pt) node[below] {$s(e)$};
\fill (7.1,1.45) circle (1.5pt) node[below] {$t(e)$};
\node at (5.8,2.03) {$\rho(e)$};
\draw[<-,pathorange,thick] (4.5,-0.05) .. controls (5.1,0.55) and (6.45,0.55) .. (7.1,-0.05);
\fill (4.5,-0.05) circle (1.5pt) node[below] {$s(e)$};
\fill (7.1,-0.05) circle (1.5pt) node[below] {$t(e)$};
\node at (5.8,0.53) {$\rho(e)^{-1}$};
\node[align=center] at (3.6,-1.15) {$\widetilde s(e^-)=t(e),\quad
 \widetilde t(e^-)=s(e),\qquad
 \widetilde\rho(e^-)(t)=\rho(e)(1-t)$};
\end{tikzpicture}
\caption{Formal reversal changes the direction of a represented path, not
the topology of the step space.  The two curves depict the same geometric
route traversed in opposite directions.  Endpoint maps take values in $X$;
the realization map takes values in $X^I$.}
\label{fig:oriented-steps}
\end{figure}

For composable interval paths $\alpha$ and $\beta$ and $m,n\in\N$, write
$\alpha\star_{m,n}\beta$ for the length-weighted concatenation.  When
$m,n>0$, it is given by
\[
 (\alpha\star_{m,n}\beta)(t)=
 \begin{cases}
   \alpha\!\left(\dfrac{m+n}{m}t\right),
     &0\leq t\leq\dfrac{m}{m+n},\\[7pt]
   \beta\!\left(\dfrac{(m+n)t-m}{n}\right),
     &\dfrac{m}{m+n}\leq t\leq 1.
 \end{cases}
\]
Use the natural zero-length conventions
$\alpha\star_{0,n}\beta=\beta$ and
$\alpha\star_{m,0}\beta=\alpha$; when $m=n=0$, use the constant path at
the common endpoint.  Thus the two input paths occupy $m$ and $n$ equal
word slots, respectively.  For composable paths $\alpha$, $\beta$, and
$\gamma$ occupying $m$, $n$, and $k$ word slots, respectively, this gives the
following literal identities, including the zero-length cases under the stated
conventions:
\[
 (\alpha\star_{m,n}\beta)\star_{m+n,k}\gamma
 =\alpha\star_{m,n+k}(\beta\star_{n,k}\gamma),
 \qquad
 (\alpha\star_{m,n}\beta)^{-1}
 =\beta^{-1}\star_{n,m}\alpha^{-1}.
\]

\begin{example}[Reading the weights]
Suppose that $\alpha$ occupies two word slots and $\beta$ occupies one.
Then $\alpha\star_{2,1}\beta$ follows $\alpha$ until time $2/3$ and
follows $\beta$ on the remaining third of the interval.  If $\gamma$ also
occupies one slot, strict reassociation is visible in the concrete identity
\[
  (\alpha\star_{2,1}\beta)\star_{3,1}\gamma
  =\alpha\star_{2,2}(\beta\star_{1,1}\gamma).
\]
Thus the weights record word-slot lengths, not an additional choice of
parenthesization or of geometric homotopy witness.
\end{example}

\begin{definition}[geometric trace]
For $n\geq 1$, let $W_n\subseteq\widetilde E^n$ be the subspace of composable
words $(e_1,\ldots,e_n)$ satisfying
\(t(e_i)=s(e_{i+1})\) for every $i$.  Put $W_0=X$, representing empty traces,
with both endpoint maps equal to the identity on $X$, and define the
topological trace carrier by the disjoint union
\[
  \mathsf{Tr}_{\mathcal P}=\coprod_{n\in\N}W_n.
\]
Give $\N$ the discrete topology.  The length map is the coproduct index and
is continuous.  For $p=(e_1,\ldots,e_n)\in W_n$ with
$n\geq1$, define its realization by equal word slots,
\[
 |p|(t)=\widetilde\rho(e_i)(nt-i+1)
 \quad\text{for }1\leq i\leq n,\quad
 \frac{i-1}{n}\leq t\leq\frac{i}{n}.
\]
For $p\in W_0=X$, let $|p|$ be the constant path at its point.  Thus
\(\Trace(a,b)\) denotes the set of words with endpoints $a,b$; empty
words, word concatenation, and reversal are the identity, composition, and
inverse operations on traces.  In particular, for $p\in W_m$ and
$q\in W_n$,
\[
 |p;q|=|p|\star_{m,n}|q|.
\]
Representing traces as flat words makes the unit, associativity,
double-inverse, and reversal-of-composite laws literal equalities.
Their equal-slot realizations satisfy the same equalities.
Inverse cancellation is different: a word followed by its reversal need
not be the empty word, and its realization contracts by a homotopy.

\end{definition}

\begin{lemma}[continuity of trace realization]
\label{lem:trace-realization-continuous}
The equal-slot realization map
\[
  |\cdot|:\mathsf{Tr}_{\mathcal P}\longrightarrow X^I
\]
is continuous for the coproduct trace topology.
\end{lemma}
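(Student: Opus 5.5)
Since $\mathsf{Tr}_{\mathcal P}=\coprod_n W_n$ carries the coproduct topology, a map out of it is continuous exactly when its restriction to each summand $W_n$ is continuous. The plan is therefore to prove continuity of $|\cdot|\colon W_n\to X^I$ separately for each $n$. For $n=0$ the restriction is the constant-path map $X\to X^I$, $x\mapsto c_x$. This is continuous for the compact-open topology: the preimage of a subbasic set $V(K,U)$ is $U$ when $K\neq\emptyset$ and all of $X$ when $K=\emptyset$.

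For $n\ge1$, the restriction factors as
\[
  W_n\hookrightarrow\widetilde E^n\xrightarrow{\widetilde\rho^{\,n}}(X^I)^n,
\]
followed by an ``equal-slot gluing'' map $\Gamma_n$. The map $\Gamma_n$ is defined on the subspace $C_n\subseteq(X^I)^n$ of tuples $(\gamma_1,\dots,\gamma_n)$ with $\gamma_i(1)=\gamma_{i+1}(0)$. The inclusion is continuous, and so is $\widetilde\rho^{\,n}$, since $\widetilde\rho$ is continuous on $E^+\sqcup E^-$. Continuity of $\widetilde\rho$ on the inverse copy uses that reversal $X^I\to X^I$, $\gamma\mapsto\gamma\circ(1-\mathrm{id})$, is continuous, because precomposition with a continuous self-map of $I$ is continuous in the compact-open topology. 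The factorization lands in $C_n$ because composability $\widetilde t(e_i)=\widetilde s(e_{i+1})$ gives $\widetilde\rho(e_i)(1)=\widetilde\rho(e_{i+1})(0)$.

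It remains to show that $\Gamma_n\colon C_n\to X^I$ is continuous. This is the main step. I would use the adjunction-free characterization via subbasic sets. Fix $K\subseteq I$ compact and $U\subseteq X$ open, and let $\Gamma_n(\gamma)\in V(K,U)$. Write $K_i=K\cap[\tfrac{i-1}{n},\tfrac in]$, which is compact, and let $\phi_i\colon[\tfrac{i-1}{n},\tfrac in]\to I$, $t\mapsto nt-i+1$, which is a homeomorphism. Then $\Gamma_n(\gamma)(K)\subseteq U$ if and only if $\gamma_i(\phi_i(K_i))\subseteq U$ for every $i$. Hence
\[
  \Gamma_n^{-1}\bigl(V(K,U)\bigr)=C_n\cap\prod_i V\bigl(\phi_i(K_i),U\bigr),
\]
which is open in $C_n$. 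Well-definedness at the break points $t=i/n$ is exactly the matching condition defining $C_n$. This argument avoids any local compactness hypothesis on $X$; it needs only compactness of $I$ and of the pieces $K_i$.

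Composing these continuous maps gives continuity on each $W_n$, and hence on the coproduct. The hard part is just the gluing step: I would make sure to avoid invoking the exponential law, which requires hypotheses on $X$, and instead use the direct subbasis computation above, which works for arbitrary $X$.
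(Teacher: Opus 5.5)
Your proof is correct but takes a different route from the paper's. The paper works with the adjoint map. On each closed slot $W_n\times[(i-1)/n,i/n]$ it writes $(p,t)\mapsto\widetilde\rho(e_i(p))(nt-i+1)$, which is continuous because $\widetilde\rho$ and the evaluation map $X^I\times I\to X$ are. It pastes the $n$ pieces into a continuous map $W_n\times I\to X$ and then curries by the exponential law; $W_0$ is handled the same way via the projection $X\times I\to X$. You instead factor through $\widetilde\rho^{\,n}$ into the composable-tuple subspace $C_n\subseteq(X^I)^n$ and check continuity of the gluing map $\Gamma_n$ directly on subbasic sets. Your identity $\Gamma_n^{-1}(V(K,U))=C_n\cap\prod_i V(\phi_i(K_i),U)$ is right. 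Add one line that each glued path $\Gamma_n(\gamma)$ is continuous by the pasting lemma on the closed intervals, not merely well defined at the breakpoints. Your version isolates a reusable fact: continuity of concatenation on the endpoint pullback in $(X^I)^n$. With unequal breakpoints, this is exactly the continuity of $\star_{m,n}$ that the paper asserts without proof right after this lemma. The paper's version is shorter, given standard facts. Your stated reason for avoiding the exponential law is mistaken, however. Currying a continuous map $Y\times I\to X$ to $Y\to X^I$ gives a continuous map for arbitrary $X$ and $Y$, by the tube lemma over each compact $K$. Continuity of evaluation needs only that $I$ be locally compact Hausdorff. So neither proof places any hypothesis on $X$, and the two arguments have the same generality.
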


\begin{proof}
On the zero-letter component $W_0=X$, the realization is the constant-path
map.  Its adjoint $X\times I\to X$, $(a,t)\mapsto a$, is continuous, so the
compact-open exponential law gives continuity into $X^I$.  Fix $n\geq1$.
For each $1\leq i\leq n$, the closed slot
\[
  C_i=W_n\times[(i-1)/n,i/n]\subseteq W_n\times I
\]
supports the continuous formula
\[
  (p,t)\longmapsto
  \widetilde\rho(e_i(p))(nt-i+1),
\]
where $e_i:W_n\to\widetilde E$ is the $i$th coordinate.  Continuity follows
from the continuity of $\widetilde\rho$ and of evaluation for the compact-open
topology.  On a
common boundary $t=i/n$, the two formulas agree because composability gives
\(t(e_i)=s(e_{i+1})\).  The pasting lemma therefore gives a continuous adjoint
\(W_n\times I\to X\).  Since $I$ is compact Hausdorff, the compact-open
exponential law gives a continuous map $W_n\to X^I$.  The coproduct
universal property now assembles these component maps into the displayed
realization map.
\end{proof}

The compact-open topology also gives continuous constant-path and reversal
maps on $X^I$.  On each fixed pair of length components, the weighted
operation $\star_{m,n}$ is continuous on the corresponding endpoint
pullback; the length coordinates are discrete, so these operations assemble
continuously over the coproduct.  Word concatenation and reversal are
continuous on the corresponding endpoint pullbacks of the $W_n$.  Together
with Lemma~\ref{lem:trace-realization-continuous}, these are the continuity
facts used below; no continuity of a product of arbitrary quotient maps is
being assumed.

Let
\[
  \mathsf{Coh}_{\mathcal P}(p,\gamma)\;:\Longleftrightarrow\;
  \text{there exists an endpoint-fixed homotopy from $\gamma$ to $|p|$}.
\]
Only the existence of such a homotopy is required; the homotopy itself is
not an entry of the pair.  The coherent carrier is the set
\[
  T=\{(p,\gamma)\in\mathsf{Tr}_{\mathcal P}\times X^I:
      \gamma(0)=s(p),\ \gamma(1)=t(p),\ \mathsf{Coh}_{\mathcal P}(p,\gamma)\}.
\]
The two entries have different jobs.  The trace $p$ retains the finite
sequence of primitive steps to which the scoped rewrite relation applies,
while $\gamma$ is an independently chosen geometric representative observed
by the topology.  The coherence condition connects them without turning every
ambient homotopy into a syntactic rewrite.  The scoped quotient identifies all
coherent representatives of the same scoped trace class.  Thus $\gamma$
influences the representative topologies and defines the raw realization map,
although no particular choice of $\gamma$ survives in a quotient arrow.

\subsection*{Geometric meaning of coherence}
Having the same endpoints is necessary but not sufficient for coherence.
Let $A=(-1,0)$, $B=(0,1)$, and $C=(1,0)$ in $\mathbb R^2$.
Let $\alpha$ traverse the two straight segments $AB$ and $BC$ at equal
speed on the two halves of $I$, and put $\beta(t)=(2t-1,0)$.
In the plane,
\[
 H(u,t)=(1-u)\alpha(t)+u\beta(t)
\]
is a continuous homotopy from $\alpha$ to $\beta$ fixing $A,C$.
Thus, if $|p|=\alpha$, the pair $(p,\beta)$ is coherent.
If instead $X=\mathbb R^2\setminus\{(0,\tfrac13)\}$, both paths still
exist, but the loop $\alpha\star_{1,1}\beta^{-1}$ surrounds the missing
point once.  Its winding is nonzero, so no endpoint-fixed homotopy exists
in $X$ \cite{Hatcher02}.  Figure~\ref{fig:coherence} depicts this change.
The obstruction is not a discontinuity of either path: it is the
impossibility of filling the intervening loop inside the specified space.

More generally, for parallel paths $\alpha,\beta$,
\[
 \alpha\simeq_{\partial I}\beta
 \quad\Longleftrightarrow\quad
 \alpha\star_{1,1}\beta^{-1}\simeq_{\partial I}c_{\alpha(0)}.
\]
This follows from the groupoid laws of endpoint-fixed homotopy classes.
A homotopy may be regarded as a path in $X^I(A,C)$, but the computational
trace remains a finite word.  Only the existence of the homotopy is used
in $\mathsf{Coh}_{\mathcal P}$; its particular choice is not stored.
Nor does a simple or discrete step space $E$ remove obstructions in $X$.
A named rewrite must still have a genuine semantic homotopy, and the
existence of such a homotopy alone does not declare a scoped rewrite.
\begin{figure}[htbp]
\centering
\begin{tikzpicture}[font=\small,>=Stealth,scale=1.12]
\foreach \x/\lab in {0/{(a) In the plane},4.6/{(b) In the punctured plane}} {
 \begin{scope}[xshift=\x cm]
 \node[font=\small\bfseries] at (1.6,2.75) {\lab};
 \fill[pathteal!7] (0,0) -- (1.6,1.8) -- (3.2,0) -- cycle;
 \draw[->,pathblue,thick] (0,0) -- (1.6,1.8);
 \draw[->,pathblue,thick] (1.6,1.8) -- (3.2,0);
 \draw[->,pathorange,thick] (0,0) -- (3.2,0);
 \fill (0,0) circle(1.5pt) node[below left] {$A$};
 \fill (1.6,1.8) circle(1.5pt) node[above] {$B$};
 \fill (3.2,0) circle(1.5pt) node[below right] {$C$};
 \node[pathblue] at (0.55,1.1) {$\alpha$};
 \node[pathorange] at (1.6,-0.28) {$\beta$};
 \end{scope}
}
\foreach \h in {0.45,0.9,1.35} {
 \draw[pathteal,dashed] (0,0) -- (1.6,\h) -- (3.2,0);
}
\draw[->,pathteal] (1.6,1.6) -- (1.6,0.2);
\draw[fill=white,thick] (6.2,0.6) circle(0.085);
\node[anchor=south,inner sep=1pt] at (6.2,0.77) {$O\notin X$};
\node[align=center] at (1.6,-0.95) {$H(u,t)=(1-u)\alpha(t)+u\beta(t)$\\
 $\alpha\simeq_{\partial I}\beta$};
\node[align=center] at (6.2,-0.95) {$\left|\operatorname{wind}_{O}
 (\alpha\star\beta^{-1})\right|=1$\\
 $\alpha\not\simeq_{\partial I}\beta$};
\end{tikzpicture}
\caption{Coherence requires a homotopy inside the chosen ambient space.
Left: intermediate broken paths move $B$ to the segment $AC$, keeping $A,C$
fixed.  Right: deleting the interior point $O=(0,\tfrac13)$ prevents any
such homotopy, although both boundary paths remain continuous.
For $|p|=\alpha$, $(p,\beta)$ is coherent only in the left-hand example.
The shaded triangle is available in (a), but is not a filling in (b).}
\label{fig:coherence}
\end{figure}

We now put two canonical topologies on this same coherent carrier.  First,
let
\[
  \kappa_{\mathrm{tr}}:T\longrightarrow
    \mathsf{Tr}_{\mathcal P}\times X^I,
  \qquad
  \kappa_{\mathrm{tr}}(p,\gamma)=(p,\gamma).
\]
The \emph{trace-sensitive topology} $T_{\mathrm{tr}}$ is the initial topology
induced by $\kappa_{\mathrm{tr}}$; because $T$ is a subset of
$\mathsf{Tr}_{\mathcal P}\times X^I$, this is precisely the corresponding
subspace topology.  It makes the complete finite word a genuine topological
coordinate.  Second, let
\[
  \kappa_{\mathrm{obs}}:T\longrightarrow X\times X\times\N\times X^I\times X^I,
  \qquad
  \kappa_{\mathrm{obs}}(p,\gamma)=(s(p),t(p),\ell(p),|p|,\gamma).
\]
The \emph{observable topology} $T_{\mathrm{obs}}$ is the initial topology
induced by $\kappa_{\mathrm{obs}}$:
it is the coarsest topology making these semantic observables continuous.
Since the endpoint maps, length, and realization are continuous on the trace
carrier, $\kappa_{\mathrm{obs}}$ factors continuously through
$\kappa_{\mathrm{tr}}$.  Consequently the identity
\[
  i_T:T_{\mathrm{tr}}\longrightarrow T_{\mathrm{obs}}
\]
is continuous.  Thus the trace-sensitive topology is finer than the
observable topology.

Unless a superscript is displayed, the remainder of Sections~3--8 uses
$T_{\mathrm{obs}}$ and its quotient.  The trace-sensitive construction is not
discarded: the same carrier, scoped relation, and quotient construction can be
formed with $T_{\mathrm{tr}}$, and Proposition~\ref{prop:trace-sensitive-refinement}
records the comparison between the resulting spaces.
In the formulas below we suppress the subscript and write $T:=T_{\mathrm{obs}}$
and $T^{(2)}:=T_{\mathrm{obs}}^{(2)}$.

For the observable choice, write $\kappa=\kappa_{\mathrm{obs}}$.
Its five coordinates record the source, target, word length, realized trace,
and chosen geometric path.  They determine the topology, but not the
equivalence relation used in the scoped quotient.  Two distinct traces can
have the same five coordinates without being related by the allowed rewrites.
The topology is coarsest only among those making these five coordinates
continuous.

For $*=\mathrm{tr},\mathrm{obs}$, define the common composable set with its
chosen subspace topology by
\[
  T_*^{(2)}=\{(u,v)\in T_*\times T_*:t(u)=s(v)\}.
\]

\begin{proposition}[trace-sensitive refinement]
\label{prop:trace-sensitive-refinement}
Let $T_{\mathrm{tr}}$ and $T_{\mathrm{obs}}$ denote the two topologies above.
The identity $i_T:T_{\mathrm{tr}}\to T_{\mathrm{obs}}$ is continuous, and its
restriction to the common composable set is continuous:
\[
  i_T^{(2)}:T_{\mathrm{tr}}^{(2)}\longrightarrow T_{\mathrm{obs}}^{(2)}.
\]
Let $G_{\mathcal P}^{\mathrm{tr}}$ and $G_{\mathcal P}^{\mathrm{obs}}$ be
the quotient arrow spaces for the same scoped relation, formed from the two
representative topologies.  Then the identity on quotient sets induces a
continuous bijection
\[
  J:G_{\mathcal P}^{\mathrm{tr}}\longrightarrow G_{\mathcal P}^{\mathrm{obs}},
  \qquad J\,q_{\mathrm{tr}}=q_{\mathrm{obs}}\,i_T.
\]
The corresponding map on final composable-pair quotients is continuous as
well.  No converse continuity, and hence no homeomorphism, is asserted
without an additional hypothesis.
\end{proposition}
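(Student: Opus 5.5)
The plan is to reduce every claim to the continuity of $i_T$ and then apply universal properties of subspaces and quotients.

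First, continuity of $i_T$. By definition $T_{\mathrm{obs}}$ carries the initial topology for $\kappa_{\mathrm{obs}}$. So it suffices to show that $\kappa_{\mathrm{obs}}\circ i_T=\kappa_{\mathrm{obs}}$ is continuous on $T_{\mathrm{tr}}$. Each coordinate $(p,\gamma)\mapsto s(p),t(p),\ell(p),|p|,\gamma$ is obtained by composing $\kappa_{\mathrm{tr}}$ with a continuous map on $\mathsf{Tr}_{\mathcal P}\times X^I$. For the endpoint maps we use $\widetilde s,\widetilde t$ on the first or last letter, and the identity on $W_0$. The length $\ell$ is the coproduct index. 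For $|p|$ we use Lemma~\ref{lem:trace-realization-continuous}, and $\gamma$ is the second projection.

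Second, the restriction $i_T^{(2)}$. The space $T^{(2)}_{\mathrm{tr}}$ is a subspace of $T_{\mathrm{tr}}\times T_{\mathrm{tr}}$, and the map $i_T\times i_T$ is continuous into $T_{\mathrm{obs}}\times T_{\mathrm{obs}}$. This map carries the composable set onto the same composable set, because the condition $t(u)=s(v)$ does not depend on the topology. By the universal property of the subspace topology, the corestriction is continuous.

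Third, the map $J$. Here the scoped relation is the same on both sides, so $q_{\mathrm{obs}}\circ i_T$ is a continuous map that is constant on the fibres of $q_{\mathrm{tr}}$. Since $q_{\mathrm{tr}}$ is a quotient map, there is a unique continuous $J$ with $Jq_{\mathrm{tr}}=q_{\mathrm{obs}}i_T$. On underlying sets $J$ is the identity of the common set of classes, so it is a bijection.

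The same argument handles the final composable-pair quotients, once we record how they are formed. Each one is the quotient of $T^{(2)}_*$ by the product relation: $(u,v)\sim(u',v')$ exactly when $u\sim u'$ and $v\sim v'$. The same scoped relation applies to the same composable set on both sides. So $q^{(2)}_{\mathrm{obs}}\circ i^{(2)}_T$ is continuous and respects the fibres of $q^{(2)}_{\mathrm{tr}}$. The quotient universal property then gives the induced continuous map, which is again the identity on sets.

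I expect the main obstacle to be bookkeeping rather than mathematics. The final-pair quotient must be taken directly from $T^{(2)}_*$, and not from a product of quotients. This is exactly what makes the universal property apply here without any product-of-quotients issue. Finally, the caveat that no converse is asserted is only a remark and needs no proof. The statement makes no claim about the inverse, so nothing further is required; a later example might show that the inverse can fail to be continuous.
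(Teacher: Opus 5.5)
Your proposal is correct and follows the paper's proof. Continuity of $i_T$ comes from the initial-topology criterion, since $\kappa_{\mathrm{obs}}$ factors through $\kappa_{\mathrm{tr}}$; you restrict to the endpoint pullback for $i_T^{(2)}$; and you use the quotient criterion for $J$ and for the componentwise final composable-pair quotients. The only difference is that you spell out why the factorizing map is continuous coordinate by coordinate (endpoints via $\widetilde s,\widetilde t$ on the first and last letters, length as the coproduct index, realization by Lemma~\ref{lem:trace-realization-continuous}), which the paper leaves implicit.
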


\begin{proof}
The factorization of $\kappa_{\mathrm{obs}}$ through $\kappa_{\mathrm{tr}}$
proves continuity of $i_T$ by the initial-topology criterion.  Restriction to
the endpoint pullback gives $i_T^{(2)}$.  The identity preserves the scoped
equivalence relation, so it descends to $J$.  Since
$q_{\mathrm{tr}}$ is a quotient map and
$J\,q_{\mathrm{tr}}=q_{\mathrm{obs}}\,i_T$ is continuous, the quotient
criterion proves continuity of $J$.  The same argument applies to the
componentwise final composable-pair quotients.
\end{proof}

\begin{example}[Same geometry, different computation]
Suppose $p,q\in W_n$ have the same endpoints and the same equal-slot
realization, and let $\gamma$ be a common coherent representative.  If
$p\neq q$, then $(p,\gamma)$ and $(q,\gamma)$ have identical observable
codes.  They are therefore topologically indistinguishable in
$T_{\mathrm{obs}}$, even though the scoped quotient identifies them only when
$p\RwEq q$.  If the trace carrier is Hausdorff, the trace-sensitive topology
$T_{\mathrm{tr}}$ can separate these two computationally different points.
The difference is in what the topology can distinguish, not in which
rewrites are allowed.
\end{example}

\begin{example}[A genuine quotient-topology separation]
\label{ex:quotient-topology-separation}
Let $X=\{*\}$ have its unique topology, let
$E=\{e,f\}$ be discrete, and let both primitive steps have source and target
$*$ and realization the constant path $c_*$.  Declare no named rewrite
relating the one-letter traces $e$ and $f$; only the structural groupoid rules
are present.  Write
\[
  u_e=(e,c_*),\qquad u_f=(f,c_*),
\]
for the corresponding coherent points.  The trace-sensitive carrier is
discrete in this example: every finite word stratum is finite discrete, and
the path and coherence coordinates are singletons.  Hence the distinct
scoped classes $q_{\mathrm{tr}}(u_e)$ and $q_{\mathrm{tr}}(u_f)$ can be
separated.

The observable codes of $u_e$ and $u_f$ are identical: both have the same
endpoints, length, realization, and coherent representative.  They are
therefore topologically indistinguishable in $T_{\mathrm{obs}}$.  To make the
remaining scoped distinction explicit, let
\[
  \Phi:\{e^{\pm1},f^{\pm1}\}^{*}\longrightarrow F(e,f)
\]
be the word-evaluation map into the free group on $e$ and $f$.  The structural
unit, associativity, double-inverse, and cancellation coherences preserve
$\Phi$, whereas $\Phi(e)=e\neq f=\Phi(f)$ in $F(e,f)$.  Since there is no
named rewrite relating $e$ and $f$, their scoped classes remain distinct, but
the singleton $\{q_{\mathrm{obs}}(u_e)\}$ is not open.  Consequently the
continuous bijection
\[
  J:G_{\mathcal P}^{\mathrm{tr}}\longrightarrow
    G_{\mathcal P}^{\mathrm{obs}}
\]
from Proposition~\ref{prop:trace-sensitive-refinement} is not a
homeomorphism.  Thus the distinction between the two representative
topologies survives the scoped quotient; it disappears only under additional
hypotheses, such as the universal section of Section~7.
Figure~\ref{fig:same-geometry} summarizes the separation mechanism.
\end{example}

\begin{figure}[htbp]
\centering
\begin{tikzpicture}[font=\small,>=Stealth]
\node[font=\small\bfseries] at (0,1.6) {Trace-sensitive quotient};
\node[font=\small\bfseries] at (5.4,1.6) {Observable quotient};
\draw[pathblue,thick,rounded corners] (-1.65,-0.45) rectangle (-0.35,0.75);
\draw[pathblue,thick,rounded corners] (0.35,-0.45) rectangle (1.65,0.75);
\fill[pathblue] (-1,0.3) circle (2pt);
\fill[pathblue] (1,0.3) circle (2pt);
\node[below] at (-1,0.15) {$[u_e]$};
\node[below] at (1,0.15) {$[u_f]$};
\draw[pathorange,thick,rounded corners] (3.7,-0.45) rectangle (7.1,0.75);
\fill[pathorange] (4.4,0.3) circle (2pt);
\fill[pathorange] (6.4,0.3) circle (2pt);
\node[below] at (4.4,0.15) {$[u_e]$};
\node[below] at (6.4,0.15) {$[u_f]$};
\draw[->,thick] (1.95,0.3) -- (3.35,0.3) node[midway,above] {$J$};
\node[align=center] at (0,-1) {Distinct classes;\\open singletons};
\node[align=center] at (5.4,-1) {Distinct classes;\\identical open neighborhoods};
\end{tikzpicture}
\caption{The distinction is topological, not an additional identification.
In Example~\ref{ex:quotient-topology-separation}, both steps realize the
constant path in $X=\{*\}$, but their free-group values differ.  The map $J$
preserves both scoped classes.  In the observable quotient every open set
containing either class contains both.  Boxes illustrate separation properties
only; the shared box is not asserted to be an open two-point subset.}
\label{fig:same-geometry}
\end{figure}

\begin{example}[One trace, two coherent representatives]
Let $p$ be a trace and let $\gamma$ and $\delta$ be endpoint-fixed homotopic
representatives with
\[
  \gamma\simeq_{\partial I}|p|\simeq_{\partial I}\delta.
\]
Then $(p,\gamma)$ and $(p,\delta)$ are both points of $T$.  The observable
topology may observe the difference between their fifth coordinates, but the
scoped quotient identifies them because the trace coordinate is the same.
This is why a coherent representative is useful before quotienting without
becoming extra data in a quotient arrow.
\end{example}

For $u=(p,\gamma)\in T$, put $s(u)=s(p)$ and $t(u)=t(p)$.  In the
unadorned notation used from this point onward,
$T^{(2)}:=T_{\mathrm{obs}}^{(2)}$.

\begin{lemma}[continuity of coherent operations]\label{lem:coherent-operations-continuous}
For either $*=\mathrm{tr}$ or $*=\mathrm{obs}$, let $\iota:X\to T_*$ send
$a$ to the coherent identity $(1_a,c_a)$, let $\sigma:T_*\to T_*$ send
$(p,\gamma)$ to
$(p^{-1},\gamma^{-1})$, and let
\[
  \mu_*:T_*^{(2)}\longrightarrow T_*,
  \qquad ((p,\gamma),(q,\delta))\longmapsto
    (p;q,\gamma\star_{\ell(p),\ell(q)}\delta),
\]
where the weighted operation uses the discrete trace lengths.  Then all three
maps are continuous.
\end{lemma}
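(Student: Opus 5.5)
The approach is to verify, for each of the three maps, two things: that it lands in the coherent carrier $T$, and that its composite with the defining embedding ($\kappa_{\mathrm{tr}}$ or $\kappa_{\mathrm{obs}}$) is continuous. Since both $T_{\mathrm{tr}}$ and $T_{\mathrm{obs}}$ are initial topologies, the initial-topology criterion then yields continuity. The same holds for the subspace topology $T_*^{(2)}\subseteq T_*\times T_*$ on the domain of $\mu_*$.

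First comes well-definedness. For $\iota$, we have $|1_a|=c_a$, so the constant homotopy witnesses $\mathsf{Coh}_{\mathcal P}(1_a,c_a)$. For $\sigma$, if $H$ is an endpoint-fixed homotopy from $\gamma$ to $|p|$, then $H(u,1-t)$ is one from $\gamma^{-1}$ to $|p|^{-1}$. The paper's reversal identity gives $|p^{-1}|=|p|^{-1}$, since reversing each letter and reversing slot order is literal for equal-slot realization. For $\mu_*$, given homotopies $H$ from $\gamma$ to $|p|$ and $K$ from $\delta$ to $|q|$, the pointwise weighted concatenation $H(u,\cdot)\star_{\ell(p),\ell(q)}K(u,\cdot)$ is an endpoint-fixed homotopy. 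It ends at $|p|\star_{\ell(p),\ell(q)}|q|=|p;q|$ by the identity recorded in the definition of geometric traces. The zero-length conventions need a separate check. If, say, $\ell(p)=0$, then $\gamma\star_{0,n}\delta=\delta$ is coherent with $|q|=|p;q|$, so the claimed pair is still in $T$ (the coherent $\gamma$ is simply discarded, which is consistent since $p$ is empty).

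Next comes continuity of the coordinates. In the trace-sensitive case the coordinates are $(p,\gamma)$. The map $a\mapsto 1_a$ is the inclusion $X=W_0\hookrightarrow\mathsf{Tr}_{\mathcal P}$, and $a\mapsto c_a$ is continuous by the exponential law used in Lemma~\ref{lem:trace-realization-continuous}. Word reversal and concatenation on the endpoint pullbacks, and reversal on $X^I$, are continuous by the facts stated after that lemma. For the path coordinate of $\mu_{\mathrm{tr}}$, the domain $T_{\mathrm{tr}}^{(2)}$ decomposes as a disjoint union of clopen pieces indexed by $(m,n)=(\ell(p),\ell(q))$, because length is a continuous map to discrete $\N$. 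On each piece the coordinate is $\star_{m,n}$ applied to the continuous pair $(\gamma,\delta)$, and $\star_{m,n}$ is continuous on the endpoint pullback of $X^I\times X^I$ (pasting plus exponential law). Gluing over clopen pieces gives continuity.

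The observable case is the main obstacle, because one cannot use the trace coordinate $p$ as an input. Every coordinate of $\kappa_{\mathrm{obs}}\circ\mu_{\mathrm{obs}}$ must instead be expressed as a continuous function of the observable coordinates of the two inputs. I would write them out explicitly. The source $s(p;q)=s(p)$ and target $t(p;q)=t(q)$ are immediate. The length $\ell(p;q)=\ell(p)+\ell(q)$ is continuous because $\N\times\N$ is discrete. The realization $|p;q|=|p|\star_{\ell(p),\ell(q)}|q|$ and the path $\gamma\star_{\ell(p),\ell(q)}\delta$ are handled by the same clopen decomposition over $(m,n)$, now using the observable length coordinate. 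Similarly, for $\sigma$ the observable coordinates are $(t,s,\ell,|p|^{-1},\gamma^{-1})$, and for $\iota$ they are $(a,a,0,c_a,c_a)$. All are continuous, so the initial-topology criterion finishes both cases.
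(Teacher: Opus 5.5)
Your proposal is correct and takes essentially the same route as the paper. Coherence is preserved by reversing the endpoint-fixed homotopies and concatenating them pointwise with the weights, and continuity follows from the initial-topology criterion: in the observable case it is applied to the explicit codes $(a,a,0,c_a,c_a)$, $(t(p),s(p),\ell(p),|p|^{-1},\gamma^{-1})$ and $(s(p),t(q),\ell(p)+\ell(q),|p|\star_{\ell(p),\ell(q)}|q|,\gamma\star_{\ell(p),\ell(q)}\delta)$, and in the trace-sensitive case to the word and path coordinates. Your clopen decomposition by length pairs and your check of the zero-length conventions spell out details the paper leaves implicit.
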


\begin{proof}
For $T_{\mathrm{obs}}$, it is enough to check continuity after applying
$\kappa_{\mathrm{obs}}$.  The three resulting observable codes are respectively
\[
  (a,a,0,c_a,c_a),
\quad
  (t(p),s(p),\ell(p),|p|^{-1},\gamma^{-1}),
\]
and, on the composable pullback,
\[
  \bigl(s(p),t(q),\ell(p)+\ell(q),
    |p|\star_{\ell(p),\ell(q)}|q|,
    \gamma\star_{\ell(p),\ell(q)}\delta\bigr).
\]
The first is continuous because constant paths depend continuously on $a$;
the second uses reversal on $X^I$; and the third uses addition on the discrete
length coordinate and continuity of the weighted operation on each fixed
length component.  Each code depends continuously only on the observable
codes of the input(s), so the initial-topology criterion proves continuity of
all three operations.  For $T_{\mathrm{tr}}$, the corresponding first
coordinates are the continuous word maps
$a\mapsto 1_a$, $p\mapsto p^{-1}$, and $(p,q)\mapsto p;q$ on the trace
carrier; the second coordinates are the same continuous path operations just
used.  The initial-topology criterion for $\kappa_{\mathrm{tr}}$ therefore
proves continuity in the trace-sensitive topology as well.  Reversal and
weighted concatenation preserve the coherence predicate by reversing and
concatenating the endpoint-fixed homotopies.  In the Lean development, the
parenthesized internal trace is flattened to a signed word precisely to check
these first-coordinate maps.
\end{proof}

\begin{definition}[scoped presentation]
A continuous geometric rewrite presentation $\mathcal P$ on $(X,E)$ consists
of a predicate
\[
  \mathsf{Rule}_{\mathcal P}(p,q),\qquad p,q:\Trace(a,b),
\]
and, for every instance of $\mathsf{Rule}_{\mathcal P}(p,q)$, an endpoint-fixed homotopy
between the realizations of $p$ and $q$.
\end{definition}

Here $\mathsf{Rule}_{\mathcal P}$ specifies the named rewrites; it is
not the realization comparison $R_{\mathcal P}$ introduced in Section~6.
Each named rewrite must have a homotopy witness.  No completeness condition
is included in the definition.

\begin{definition}[scoped rewrite equality]\label{def:scoped-rewrite-equality}
For parallel traces, write $p\RwEq q$ for the inductively generated relation
whose constructors are reflexivity, the named generators $\mathsf{Rule}_{\mathcal P}$,
symmetry, transitivity, congruence under concatenation and reversal, and the
following structural groupoid coherences:
\[
\begin{array}{lll}
(1_a);p\RwEq p, & p;(1_b)\RwEq p, & (p;q);r\RwEq p;(q;r),\\
p^{-1};p\RwEq 1_b, & p;p^{-1}\RwEq 1_a, & (p^{-1})^{-1}\RwEq p,\\
(1_a)^{-1}\RwEq 1_a, & (p;q)^{-1}\RwEq q^{-1};p^{-1}. &
\end{array}
\]
Only the rules declared by $\mathcal P$ and these structural constructors are
available.  Literal equality of traces is handled by reflexivity and
substitution; ambient homotopy of their realizations is not an additional
constructor.
\end{definition}

\begin{lemma}[weighted concatenation invariance]
\label{lem:weighted-concat-invariance}
Let $\alpha\simeq_{\partial I}\alpha'$ and
$\beta\simeq_{\partial I}\beta'$ be endpoint-fixed homotopies of composable
paths.  Write $c_x$ for the constant path at $x$, and assume explicitly that
\[
\begin{aligned}
m=0&\Rightarrow \alpha(0)=\alpha(1)\ \text{and}\ \alpha\simeq_{\partial I}c_{\alpha(0)},\\
n=0&\Rightarrow \beta(0)=\beta(1)\ \text{and}\ \beta\simeq_{\partial I}c_{\beta(0)},\\
m'=0&\Rightarrow \alpha'(0)=\alpha'(1)\ \text{and}\ \alpha'\simeq_{\partial I}c_{\alpha'(0)},\\
n'=0&\Rightarrow \beta'(0)=\beta'(1)\ \text{and}\ \beta'\simeq_{\partial I}c_{\beta'(0)}.
\end{aligned}
\]
Thus every zero-weight factor has coincident endpoints and is endpoint-fixed
homotopic to the constant path at that endpoint.  For arbitrary
$m,n,m',n'\in\N$, with the zero-length conventions above,
\[
  \alpha\star_{m,n}\beta
  \simeq_{\partial I}
  \alpha'\star_{m',n'}\beta'.
\]
\end{lemma}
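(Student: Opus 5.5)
We reduce every weighted concatenation to the ordinary unweighted concatenation $\star_{1,1}$ and then use the classical fact that concatenation of endpoint-fixed homotopy classes is well defined. The first step is a reparametrization lemma: for $m,n>0$,
\[
  \alpha\star_{m,n}\beta=(\alpha\star_{1,1}\beta)\circ\phi_{m,n},
\]
where $\phi_{m,n}:I\to I$ is the piecewise-linear homeomorphism fixing $0$ and $1$ and sending $m/(m+n)$ to $1/2$. Since $\phi_{m,n}$ is homotopic to $\mathrm{id}_I$ rel $\partial I$ by the straight-line homotopy in the convex set $I$, composing gives
\[
  \alpha\star_{m,n}\beta\simeq_{\partial I}\alpha\star_{1,1}\beta .
\]
Thus in the case $m,n,m',n'>0$ it suffices to show $\alpha\star_{1,1}\beta\simeq_{\partial I}\alpha'\star_{1,1}\beta'$. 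For this we paste the given homotopies $H:\alpha\simeq\alpha'$ and $K:\beta\simeq\beta'$ side by side on $[0,\tfrac12]\times I$ and $[\tfrac12,1]\times I$. They agree along $t=\tfrac12$ because both are constant at $\alpha(1)=\beta(0)$, so the pasting lemma applies.

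Next we treat the zero-weight cases through a normalization of each side. Define $N(\alpha,\beta,m,n)$ to be one of three things:
\begin{itemize}[leftmargin=2.2em]
\item $\alpha\star_{1,1}\beta$ when both weights are positive;
\item $\beta$ when $m=0<n$, and $\alpha$ when $m>0=n$;
\item the constant path when $m=n=0$.
\end{itemize}
By the conventions and the previous paragraph, $\alpha\star_{m,n}\beta\simeq_{\partial I}N(\alpha,\beta,m,n)$. We then claim that $N(\alpha,\beta,m,n)\simeq_{\partial I}\alpha\star_{1,1}\beta$ in every case, using the hypotheses. If $m=0$, then $\alpha$ is a loop at $\alpha(0)=\beta(0)$ with $\alpha\simeq_{\partial I}c_{\beta(0)}$, so
\[
  \alpha\star_{1,1}\beta\simeq c\star_{1,1}\beta\simeq\beta ,
\]
by the pasting argument above and the standard left unit law for endpoint-fixed homotopy. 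The case $n=0$ is symmetric. If $m=n=0$, then $\alpha\star_{1,1}\beta\simeq c\star_{1,1}c=c$, where the constant paths are at the common point: $\alpha$ and $\beta$ are loops at $\alpha(0)=\alpha(1)=\beta(0)$.

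Consequently $\alpha\star_{m,n}\beta\simeq_{\partial I}\alpha\star_{1,1}\beta$ for all weights, and likewise $\alpha'\star_{m',n'}\beta'\simeq_{\partial I}\alpha'\star_{1,1}\beta'$. Chaining these with the pasted homotopy from the first paragraph and using transitivity of $\simeq_{\partial I}$ proves the lemma. The endpoints match throughout, since $\alpha(0)=\alpha'(0)$ and $\beta(1)=\beta'(1)$.

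The main obstacle is bookkeeping in the degenerate cases rather than any hard topology. We must check that the constant path produced by the convention $m=n=0$ sits at the right point, namely the common endpoint. We must also check that the hypotheses are used only in the form stated: coincident endpoints together with nullhomotopy rel endpoints. The unit law $c\star_{1,1}\beta\simeq\beta$ is itself a reparametrization, obtained by composing $\beta$ with a map $I\to I$ that is constant $0$ on $[0,\tfrac12]$; it is homotopic to the identity rel $\partial I$ by the same convexity argument. The whole proof therefore rests on that one reparametrization principle plus the pasting lemma.
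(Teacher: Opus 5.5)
Your proof is correct and takes essentially the same approach as the paper. You reparametrize $\star_{m,n}$ to $\star_{1,1}$ by a piecewise-linear endpoint-fixing homeomorphism homotopic to the identity, paste the given homotopies at $t=\tfrac12$, and handle zero weights with the nullhomotopy hypotheses; your uniform normalization of every case to $\alpha\star_{1,1}\beta$ via the unit law $c\star_{1,1}\beta\simeq\beta$ is just tidier bookkeeping for the paper's step of replacing a zero-weight factor by a constant and deleting it.
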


\begin{proof}
When all four weights are positive, let $\lambda_{m,n}:[0,1]\to[0,1]$ be
the piecewise-linear endpoint-fixing homeomorphism that maps
$m/(m+n)$ to $1/2$ and is linear on the two sides.  Explicitly,
\[
 \lambda_{m,n}(t)=
 \begin{cases}
   \dfrac{m+n}{2m}t,
     &0\leq t\leq\dfrac{m}{m+n},\\[6pt]
   \dfrac12+\dfrac{m+n}{2n}\left(t-\dfrac{m}{m+n}\right),
     &\dfrac{m}{m+n}\leq t\leq1.
 \end{cases}
\]
Then
\[
  \alpha\star_{m,n}\beta
  =(\alpha\star_{1,1}\beta)\circ\lambda_{m,n}.
\]
The maps $\lambda_{m,n}$ and $\lambda_{m',n'}$ are joined by an
endpoint-fixing piecewise-linear isotopy, so changing the breakpoint gives an
endpoint-fixed reparametrization homotopy.  Horizontal concatenation of the
given homotopies at the common breakpoint $1/2$ gives
\[
  \alpha\star_{1,1}\beta
  \simeq_{\partial I}
  \alpha'\star_{1,1}\beta'.
\]
Combining these homotopies proves the claim when all weights are positive.
If a weight is zero, first use the corresponding zero-slot hypothesis to replace
the corresponding factor by the constant path at its endpoint.  Such a factor
has coinciding endpoints by the displayed hypothesis, so the stated
convention then deletes that constant factor.  When the zero/nonzero status
changes between the two sides, this replacement inserts or removes the factor
up to endpoint-fixed homotopy.  The remaining positive-weight factors are
handled by the preceding reparametrization argument; when both weights are
zero, the convention yields the common endpoint constant.  Thus all
zero-length cases follow as well.
\end{proof}

\begin{theorem}[realization soundness]
\label{thm:realization-soundness}
If $p\RwEq q$, then the realized paths are endpoint-fixed homotopic.
\end{theorem}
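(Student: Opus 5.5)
I will prove the statement by induction on the derivation of $p\RwEq q$, since Definition~\ref{def:scoped-rewrite-equality} presents $\RwEq$ as an inductively generated relation. The property to be propagated is
\[
  P(p,q)\;:\Longleftrightarrow\; |p|\simeq_{\partial I}|q|,
\]
for parallel traces $p,q\in\Trace(a,b)$. Endpoint-fixed homotopy is an equivalence relation on paths with fixed endpoints. So the reflexivity, symmetry and transitivity constructors are immediate: I will use the constant homotopy, the time-reversed homotopy, and stacking of homotopies. For a named generator $\mathsf{Rule}_{\mathcal P}(p,q)$, the presentation supplies an endpoint-fixed homotopy between $|p|$ and $|q|$ by definition.

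For the congruence constructors I will use the literal realization identities already recorded. For reversal, the equal-slot realization satisfies $|p^{-1}|=|p|^{-1}$. Reversing the time coordinate of a homotopy $H$ from $|p|$ to $|q|$ therefore gives one from $|p^{-1}|$ to $|q^{-1}|$.

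For concatenation, suppose $p\RwEq p'$ and $q\RwEq q'$ with $p$ composable with $q$. Then $|p;q|=|p|\star_{\ell(p),\ell(q)}|q|$ and $|p';q'|=|p'|\star_{\ell(p'),\ell(q')}|q'|$. I will apply Lemma~\ref{lem:weighted-concat-invariance} with $\alpha=|p|$, $\alpha'=|p'|$, $\beta=|q|$, $\beta'=|q'|$, using the induction hypotheses as the two homotopies. The zero-weight hypotheses hold automatically. If $\ell(p)=0$ then $p\in W_0$ is an empty trace, so $|p|$ is literally the constant path $c_a$, with coincident endpoints and homotopic to the constant. The same holds for the other three factors.

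The structural coherences are then checked one by one. The unit laws, associativity, double inverse, $(1_a)^{-1}\RwEq 1_a$, and reversal of composites are literal equalities of flat words. Their realizations are equal by the remarks after the definition of geometric traces: the zero-length conventions and the displayed associativity and reversal identities for $\star$. So reflexive homotopies suffice for these.

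The only genuinely geometric cases are the cancellations $p;p^{-1}\RwEq 1_a$ and $p^{-1};p\RwEq 1_b$. Here $|p;p^{-1}|=|p|\star_{n,n}|p|^{-1}$. When $n>0$ this equals $|p|\star_{1,1}|p|^{-1}$, because the weights are equal. The standard contraction
\[
  H(u,t)=|p|\bigl(\min(2t,2-2t)(1-u)\bigr)
\]
gives an endpoint-fixed homotopy to $c_a$. When $n=0$ the composite is already the constant path. The case $p^{-1};p$ follows by applying the same argument to $p^{-1}$, using $(p^{-1})^{-1}=p$ literally.

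I expect the main obstacle to be bookkeeping rather than geometry. One must make sure the weighted operation with mismatched lengths, in the congruence case where $\ell(p)\neq\ell(p')$, is genuinely covered, including changes between zero and nonzero length. That is exactly what Lemma~\ref{lem:weighted-concat-invariance} was set up to absorb. I will spell out that its zero-weight hypotheses are discharged by the fact that zero-length traces realize constant paths.
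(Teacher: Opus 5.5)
Your proposal is correct and follows essentially the same route as the paper's proof. Both argue by induction on the derivation, handle the concatenation congruence with the weighted concatenation invariance lemma, treat the unit, associativity and reversal laws as literal flat-word equalities, and contract the inverse-cancellation cases by a backtracking homotopy. Your explicit check that the zero-weight hypotheses hold because empty traces realize constant paths is left implicit in the paper. Your scaled contraction $\min(2t,2-2t)(1-u)$ in place of the paper's truncated $\min\{2t,2(1-t),1-u\}$ is an equally valid variant.
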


\begin{proof}
Induct on the derivation of $p\RwEq q$.  Reflexivity is the constant
homotopy, and a named generator is sound by the corresponding presentation
witness.  Symmetry and transitivity use symmetry and vertical composition of
endpoint-fixed homotopies.  In the concatenation-congruence case, apply the
weighted concatenation invariance lemma to the two inductive homotopies and
the four trace lengths; this accounts for rewrites that change word length.
For reversal, reverse the path parameter, replacing a homotopy $H(s,t)$ by
$H(s,1-t)$.  Unit, associativity,
double-reversal, and reversal of a composite are literal equalities for the
flat word realization, hence use reflexive homotopies.  The two
inverse-cancellation cases use the standard
endpoint-fixed contraction homotopies made explicit below.  This covers every rule that can occur in a scoped derivation.
\end{proof}

\begin{lemma}[backtracking contracts in every space]
\label{lem:backtracking}
For any continuous path $\alpha:I\to X$, the formula
\[
 H(u,t)=\alpha\!\left(\min\{2t,\,2(1-t),\,1-u\}\right)
 \qquad (u,t\in I)
\]
is an endpoint-fixed homotopy from
$\alpha\star_{1,1}\alpha^{-1}$ to $c_{\alpha(0)}$.
\end{lemma}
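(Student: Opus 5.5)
The plan is to verify the formula directly: check that $H$ is well defined and continuous, then check the two boundary conditions in $u$ and the endpoint condition in $t$. Write $m(u,t)=\min\{2t,2(1-t),1-u\}$. The key observation is that $m$ is a continuous map $I\times I\to I$: each of $2t$, $2(1-t)$ and $1-u$ is continuous, and a minimum of finitely many continuous real functions is continuous. One checks that $0\le m\le 1$, because $1-u\in[0,1]$ and $\min\{2t,2(1-t)\}\in[0,1]$. Hence $H=\alpha\circ m$ is continuous as a composite, and no pasting lemma is needed.

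Next I check the ends of the homotopy.

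\begin{itemize}
\item \emph{At $u=0$:} the third entry is $1$, which never undercuts $\min\{2t,2(1-t)\}\le1$. So $H(0,t)=\alpha(2t)$ for $t\le\tfrac12$ and $H(0,t)=\alpha(2-2t)$ for $t\ge\tfrac12$. Since $\alpha^{-1}(s)=\alpha(1-s)$, this is exactly $\alpha\star_{1,1}\alpha^{-1}$ as given by the weighted-concatenation formula with $m=n=1$. For the second half: $\alpha^{-1}(2t-1)=\alpha(2-2t)$.
\item \emph{At $u=1$:} the minimum is $0$, so $H(1,t)=\alpha(0)$, which is $c_{\alpha(0)}$.
\item \emph{Endpoints fixed:} at $t=0$ and $t=1$ the minimum is $0$ for all $u$, so $H(u,0)=H(u,1)=\alpha(0)$. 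This matches the endpoints of the loop $\alpha\star_{1,1}\alpha^{-1}$, which starts at $\alpha(0)$ and ends at $\alpha^{-1}(1)=\alpha(0)$.
\end{itemize}

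There is no real obstacle. The only points needing care are that $H(0,\cdot)$ matches the paper's $\star_{1,1}$ convention on the second half-interval, and the remark that the same argument gives the other cancellation $\alpha^{-1}\star_{1,1}\alpha\simeq c_{\alpha(1)}$ by applying the lemma to $\alpha^{-1}$. For use in Theorem~\ref{thm:realization-soundness}, a cancellation $p;p^{-1}$ with $\ell(p)=n$ has realization $|p|\star_{n,n}|p|^{-1}=|p|\star_{1,1}|p|^{-1}$, since equal weights give the midpoint split. So the lemma applies directly with $\alpha=|p|$; the empty-word case $n=0$ is already literally a constant path.
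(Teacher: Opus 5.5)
Your proposal is correct and follows the paper's proof: continuity of $\alpha\circ\min\{2t,2(1-t),1-u\}$, then the checks at $u=0$, $u=1$ and $t\in\{0,1\}$, with the other inverse law obtained by applying the lemma to $\alpha^{-1}$. You also verify explicitly that the second half matches the $\star_{1,1}$ convention and that $\star_{n,n}=\star_{1,1}$, which the paper states only in passing.
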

\begin{proof}
The minimum is a continuous map $I^2\to I$, so $H$ is continuous.
At $u=0$ its argument is the triangular function
$\min\{2t,2(1-t)\}$, which traverses $\alpha$ and then retraces it.
At $u=1$ it is identically zero, and for every $u$ its values at
$t=0,1$ are zero.  These are the required boundary conditions.
Applying the same construction to $\alpha^{-1}$ proves the other inverse law.
\end{proof}

The homotopy stays entirely in $\alpha(I)$: no disk in $X$ is needed.
For a nonempty trace $p$, the two factors in $p;p^{-1}$ have equal length,
so its weighted realization is exactly the half-interval concatenation
in Lemma~\ref{lem:backtracking}; the empty case is constant.
Figure~\ref{fig:backtracking} makes the contrast with
Figure~\ref{fig:coherence} explicit.
\begin{figure}[htbp]
\centering
\begin{tikzpicture}[font=\small,>=Stealth,x=4.7cm,y=2.8cm]
\draw[->] (0,0) -- (1.12,0) node[right] {$t$};
\draw[->] (0,0) -- (0,1.18) node[above] {parameter of $\alpha$};
\draw[pathblue,very thick] (0,0) -- (0.5,1) -- (1,0);
\draw[pathorange,thick,dashed] (0,0) -- (0.25,0.5) -- (0.75,0.5) -- (1,0);
\draw[pathteal,thick,dash dot] (0,0) -- (0.125,0.25) -- (0.875,0.25) -- (1,0);
\draw[black,very thick] (0,0) -- (1,0);
\node[below left] at (0,0) {$0$};
\node[below] at (0.5,0) {$\tfrac12$};
\node[below] at (1,0) {$1$};
\node[left] at (0,1) {$1$};
\node[pathblue,anchor=west] at (1.22,0.95) {$u=0$: full out-and-back};
\node[pathorange,anchor=west] at (1.22,0.65) {$u=\tfrac12$: turn halfway};
\node[pathteal,anchor=west] at (1.22,0.35) {$u=\tfrac34$: turn at one quarter};
\node[anchor=west] at (1.22,0.05) {$u=1$: constant at $\alpha(0)$};
\end{tikzpicture}
\caption{Backtracking cancellation as an explicit homotopy.
The plotted function is $r_u(t)=\min\{2t,2(1-t),1-u\}$;
the actual path is $H(u,t)=\alpha(r_u(t))$.
As $u$ increases, the turning point retreats along $\alpha$ and the path
waits there before returning.  All intermediate paths stay in $\alpha(I)$,
so even a hole surrounded by $\alpha$ cannot obstruct this cancellation.}
\label{fig:backtracking}
\end{figure}

\begin{remark}
The soundness theorem is one-way.  Equality of realized geometric homotopy
classes does not imply scoped rewrite equality unless a separate geometric
completeness theorem is proved.  This distinction is the logical reason that
the presentation can retain computational information.
\end{remark}

\section{The quotient groupoid and its universal property}

For coherent traces $u=(p,\gamma)$ and $v=(q,\delta)$, write
$u\approx v$ when their endpoints agree and $p\RwEq q$.  By realization
soundness, equivalent coherent traces determine the same endpoint-fixed
homotopy class of geometric paths.  The quotient arrow space is
\[
  G_{\mathcal P}=T/{\approx},
\]
with the quotient topology induced by $q:T\to G_{\mathcal P}$.  Endpoint
maps descend to continuous maps
\[
  s,t:G_{\mathcal P}\longrightarrow X.
\]
Reflexive paths and reversal descend because the scoped relation is closed
under the corresponding constructors.

The explicit composable carrier is the $T^{(2)}$ defined in Section~2.
Its quotient relation is componentwise scoped equivalence, and its quotient is
denoted $G_{\mathcal P}^{(2),\Fin}$.  Write
\[
  q_{\Fin}^{(2)}:T^{(2)}\longrightarrow G_{\mathcal P}^{(2),\Fin}
\]
for the quotient map.  Concatenation of representatives gives a well-defined
map
\[
  m_{\Fin}:G_{\mathcal P}^{(2),\Fin}\longrightarrow G_{\mathcal P}.
\]
The composite $q\circ\mu:T^{(2)}\to G_{\mathcal P}$ is continuous and
constant on equivalent pairs.  The quotient criterion therefore makes
$m_{\Fin}$ continuous.  This argument uses the quotient of composable
representatives, not a product of quotient maps.

\begin{definition}[final-domain topological groupoid]
Let $G\rightrightarrows X$ be a groupoid with topologies on $G$ and $X$
such that source, target, identity, and inverse are continuous.
Give the set of composable arrow pairs $G\times_XG$ a specified topology,
denoted $C_{\mathrm{fin}}$.  We call these data a
\emph{final-domain topological groupoid} when multiplication
\[
  m:C_{\mathrm{fin}}\longrightarrow G
\]
is continuous.  The groupoid laws are the usual algebraic laws; only the
topology on the multiplication domain is specified separately.  That topology
is part of the structure; it need not be the subspace topology inherited from
the ordinary product $G\times G$.
\end{definition}

\begin{theorem}[canonical final-domain groupoid]
\label{thm:final-groupoid}
For every continuous geometric rewrite presentation $\mathcal P$:
\begin{enumerate}[label=(\alph*),leftmargin=2.2em]
\item $s,t$, the identity map $X\to G_{\mathcal P}$, and reversal
  $G_{\mathcal P}\to G_{\mathcal P}$ are continuous;
\item $m_{\Fin}$ is continuous;
\item the descended operations satisfy the left and right unit laws, inverse
  laws, and associativity.
\end{enumerate}
Thus $G_{\mathcal P}$ is a final-domain topological groupoid.  The phrase
``final-domain topological groupoid'' records the domain on which
multiplication is continuous.
\end{theorem}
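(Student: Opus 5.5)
The plan is to prove all three items from the continuity of the representative-level operations (Lemma~\ref{lem:coherent-operations-continuous}) together with the universal property of quotient maps. Algebraic identities are transported from the scoped relation $\RwEq$. Throughout, $T=T_{\mathrm{obs}}$ and $q:T\to G_{\mathcal P}$ is the quotient map; the same argument works verbatim for $T_{\mathrm{tr}}$.

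For (a), the source and target on $T$ are coordinates of $\kappa_{\mathrm{obs}}$, so they are continuous. They are constant on $\approx$-classes, since $\approx$ requires agreeing endpoints. Hence they descend continuously by the quotient criterion. The identity map is $q\circ\iota$, a composite of continuous maps, and needs no quotient argument. For reversal, $q\circ\sigma:T\to G_{\mathcal P}$ is continuous. It is constant on classes because $p\RwEq q$ implies $p^{-1}\RwEq q^{-1}$ by the reversal-congruence constructor, and reversal swaps endpoints uniformly. So $\sigma$ descends to a continuous $\bar\sigma$ with $\bar\sigma q=q\sigma$. Item (b) is the argument already given before the definition: $q\circ\mu$ is continuous on $T^{(2)}$. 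It is constant on componentwise-equivalent pairs by concatenation congruence applied twice with transitivity. Since $q^{(2)}_{\Fin}$ is a quotient map, $m_{\Fin}$ is continuous. One point needs checking: composability $t(u)=s(v)$ is a property of classes, so $G^{(2),\Fin}_{\mathcal P}$ has the set $G_{\mathcal P}\times_X G_{\mathcal P}$ as its underlying set. For this I would use surjectivity of $q$ and the endpoint condition on representatives: given composable classes $[u],[v]$, any representatives $u,v$ already satisfy $t(u)=s(v)$. Hence the induced map from $G^{(2),\Fin}_{\mathcal P}$ onto $G_{\mathcal P}\times_XG_{\mathcal P}$ is a bijection, and $C_{\mathrm{fin}}$ is its transported topology.

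For (c), every law is checked on representatives, where the trace coordinate decides equality of classes. The law $[\iota(s(u))]\cdot[u]=[u]$ reduces to $1_a;p\RwEq p$. The right unit law reduces to $p;1_b\RwEq p$, and associativity to $(p;q);r\RwEq p;(q;r)$. The inverse laws reduce to $p^{-1};p\RwEq 1_b$ and $p;p^{-1}\RwEq 1_a$. All of these are structural constructors of Definition~\ref{def:scoped-rewrite-equality}, and the unit and associativity ones even hold literally for flat words. The only subtlety is that a class is determined by the trace and endpoints, not by the path coordinate. Thus $(p;p^{-1},\gamma\star\gamma^{-1})$ and $(1_a,c_a)$ are $\approx$-related once the traces are related, whatever the geometric coordinates are. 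Coherence of these points is guaranteed by Lemma~\ref{lem:coherent-operations-continuous} together with Lemma~\ref{lem:backtracking}.

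The main obstacle is bookkeeping rather than depth. One must make sure the descended multiplication is genuinely defined on the set of composable classes, which is the identification in (b). One must also avoid implicitly using continuity of $q\times q$. Everything is routed through $q^{(2)}_{\Fin}$ and never through a product of quotient maps, so nothing beyond the quotient criterion is needed.
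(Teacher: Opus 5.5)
Your proposal is correct and follows essentially the same route as the paper. Both arguments take the raw identity, reversal, and concatenation to be continuous by Lemma~\ref{lem:coherent-operations-continuous}, descend them through $q$ and $q_{\Fin}^{(2)}$ by the quotient criterion (never through $q\times q$), and read the algebraic laws off the structural constructors of $\RwEq$ on representatives. Your explicit check that $G_{\mathcal P}^{(2),\Fin}$ has underlying set $G_{\mathcal P}\times_X G_{\mathcal P}$ is bookkeeping the paper leaves implicit, and one citation is superfluous: Lemma~\ref{lem:coherent-operations-continuous} alone puts $(p;p^{-1},\gamma\star\gamma^{-1})$ in $T$, while Lemma~\ref{lem:backtracking} concerns realization soundness of the cancellation constructor, not this theorem.
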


\begin{proof}
By Lemma~\ref{lem:coherent-operations-continuous}, the raw identity,
reversal, and concatenation maps are continuous for the stated topology.
Continuity of the endpoint maps and identities follows from the quotient
criterion, because the corresponding raw maps are continuous.  The same
argument applies to reversal.  The raw concatenation map
$T^{(2)}\to T$ is continuous and respects the componentwise quotient
relation, so it descends to $m_{\Fin}$; continuity follows from the defining
quotient topology on $G_{\mathcal P}^{(2),\Fin}$.

For the algebraic laws, choose representatives.  In the flat word model,
left and right units, associativity, double reversal, and reversal of a
composite are literal equalities of traces and of the weighted realizations;
their scoped witnesses are therefore reflexive structural coherences.  The
two inverse laws remain genuine endpoint-fixed contraction homotopies and are
represented by the constructors $p^{-1};p\RwEq 1_b$ and
$p;p^{-1}\RwEq 1_a$.  The quotient soundness criterion then turns each
scoped derivation into equality of quotient arrows.
The result is independent of representative choice because concatenation was
defined through the componentwise quotient.
\end{proof}

The theorem is stated with the observable representative topology, so the
unadorned notation $G_{\mathcal P}$ means $G_{\mathcal P}^{\mathrm{obs}}$.
The trace-sensitive quotient $G_{\mathcal P}^{\mathrm{tr}}$ has the same
underlying scoped arrows and the same descended algebraic operations.  The
continuity proof is parallel because Lemma~\ref{lem:coherent-operations-continuous}
proves continuity of the raw operations for both representative topologies;
Proposition~\ref{prop:trace-sensitive-refinement} supplies the continuous
comparison between the resulting quotient spaces.

The next theorem describes which interpretations of coherent traces pass
through $G_{\mathcal P}$.  Such an interpretation must respect the rewrites
and the groupoid operations.  It must also give the same value to
$(p,\gamma)$ and $(p,\delta)$ whenever both are coherent: the scoped quotient
does not retain the choice of geometric representative.

\begin{theorem}[semantic factorization]\label{thm:semantic-factorization}
Let $\mathcal H\rightrightarrows X$ be a topological groupoid, and let
$\varphi:T\to\mathcal H_1$ be continuous.  Assume that $\varphi$ preserves
endpoints, identities, reversal, and concatenation, and that every named
rewrite of $\mathcal P$ is sent to an equality in $\mathcal H_1$ for every
choice of coherent representatives.  Assume moreover that $\varphi$ is
invariant under changing the coherent representative: whenever
$(p,\gamma),(p,\delta)\in T$, one has
\[
  \varphi(p,\gamma)=\varphi(p,\delta).
\]
Then there is a unique continuous map
\[
  \overline{\varphi}:G_{\mathcal P}\longrightarrow\mathcal H_1
\]
such that $\overline{\varphi}\,q=\varphi$.  The map
$\overline{\varphi}$ preserves all groupoid operations, with composition on
the source interpreted on $G_{\mathcal P}^{(2),\Fin}$.
\end{theorem}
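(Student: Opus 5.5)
The plan is to apply the quotient universal property to $q:T\to G_{\mathcal P}$. The only real work is showing that $\varphi$ is constant on $\approx$-classes. Since $u\approx v$ means that $u=(p,\gamma)$ and $v=(q',\delta)$ have equal endpoints and $p\RwEq q'$, it suffices to show
\[
  \varphi(p,\gamma)=\varphi(q',\delta)\quad\text{whenever } p\RwEq q'.
\]
Representative invariance reduces this to a statement about traces alone. Define $\Psi(p)=\varphi(p,\gamma)$ for any coherent $\gamma$; such a $\gamma$ always exists, namely $\gamma=|p|$. By hypothesis this value is independent of the choice of $\gamma$. We then prove $\Psi(p)=\Psi(q')$ by induction on the derivation of $p\RwEq q'$ (Definition~\ref{def:scoped-rewrite-equality}).

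The inductive cases go as follows.
\begin{enumerate}[label=(\roman*),leftmargin=2.2em]
\item Reflexivity, symmetry and transitivity are immediate, since equality in $\mathcal H_1$ is an equivalence relation.
\item Named rules hold by the hypothesis on named rewrites.
\item For congruence under concatenation, take $p\RwEq p'$ and $r\RwEq r'$. Choose the representatives $(p,|p|)$, $(r,|r|)$ and their concatenation $(p;r,|p|\star|r|)$, which is coherent. Preservation of concatenation gives $\Psi(p;r)=\Psi(p)\Psi(r)$ in $\mathcal H$. The inductive hypotheses then give $\Psi(p;r)=\Psi(p')\Psi(r')=\Psi(p';r')$. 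Reversal is handled the same way, using $\sigma$.
\item The structural coherences that are literal word equalities (units, associativity, double inverse, inverse of a composite, inverse of identity) are trivial, because both sides are the same trace.
\item The cancellation laws $p;p^{-1}\RwEq 1_a$ and $p^{-1};p\RwEq 1_b$ use preservation of concatenation, reversal and identities: $\Psi(p;p^{-1})=\Psi(p)\Psi(p)^{-1}=1_{s(p)}=\Psi(1_a)$, by the groupoid law in $\mathcal H$.
\end{enumerate}
One point needs care: "preserves concatenation" must be read as $\varphi(\mu(u,v))=\varphi(u)\varphi(v)$ on $T^{(2)}$, and the weighted representative $\gamma\star_{\ell(p),\ell(q)}\delta$ is exactly what $\mu$ produces. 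Representative invariance is what lets us pass freely between that representative and any other.

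With constancy on classes established, $\overline\varphi$ exists as a set map with $\overline\varphi q=\varphi$. It is continuous because $q$ is a quotient map and $\varphi$ is continuous. It is unique because $q$ is surjective.

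Preservation of the operations then descends. Endpoints, identities and reversal follow from $s\,q=s$, $\overline\varphi(q\,\iota(a))=\varphi(\iota(a))=1_a$, and $\overline\varphi(q\,\sigma u)=\varphi(\sigma u)=\varphi(u)^{-1}$. For composition, every element of $G_{\mathcal P}^{(2),\Fin}$ is $q^{(2)}_{\Fin}(u,v)$ for some $(u,v)\in T^{(2)}$, by surjectivity of the composable quotient map. Hence
\[
  \overline\varphi\,m_{\Fin}\,q^{(2)}_{\Fin}(u,v)=\varphi(\mu(u,v))=\varphi(u)\varphi(v)=\overline\varphi(qu)\,\overline\varphi(qv).
\]
This is precisely why composition is interpreted on the final domain $G_{\mathcal P}^{(2),\Fin}$ rather than the ordinary pullback.

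The main obstacle I expect is the congruence and cancellation steps of the induction. These require the hypotheses to be used as identities about traces, independent of representatives. The delicate point is confirming that the composite and reversed representatives are coherent, which Lemma~\ref{lem:coherent-operations-continuous} supplies, so that representative invariance applies to them.
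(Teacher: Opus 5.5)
Your proposal is correct and follows the paper's own proof. Representative invariance reduces constancy on $\approx$ to a statement about traces, and an induction on scoped derivations establishes that constancy using the named-rewrite hypothesis, operation preservation, and the groupoid laws of $\mathcal H$; the quotient universal property and quotient criterion then give existence, uniqueness, and continuity, with composition descended along $q^{(2)}_{\Fin}$. You spell out the induction cases that the paper's terse proof leaves implicit.
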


\begin{proof}
The representative-invariance hypothesis handles equivalences that leave the
trace fixed and change only the chosen geometric path.  Together with the
named-rewrite hypothesis and preservation of the structural operations, an
induction on scoped derivations now shows that $\varphi$ is constant on
$\approx$.  The quotient universal property gives a unique map
$\overline{\varphi}$, and the quotient criterion gives continuity.  The
operation-preservation equations hold on representatives and hence descend.
For composition, the same argument is applied to the quotient map
$T^{(2)}\to G_{\mathcal P}^{(2),\Fin}$; no product-quotient assumption is
needed.
\end{proof}

\section{Final and ordinary composable-pair topologies}

This section uses the observable choice $G_{\mathcal P}=G_{\mathcal P}^{\mathrm{obs}}$
and $T=T_{\mathrm{obs}}$.  Replacing these by the trace-sensitive spaces gives
a parallel final-versus-ordinary comparison.  The continuous map $J$ of
Proposition~\ref{prop:trace-sensitive-refinement} does not, by itself, make the
reverse identity continuous or transfer product-quotient compatibility in
either direction; those remain topology-specific assertions.

There is another natural composable domain.  Let
\[
  G_{\mathcal P}^{(2),\Ord}
  =\{(x,y)\in G_{\mathcal P}\times G_{\mathcal P}:t(x)=s(y)\}
\]
with the subspace topology.  The underlying set of this space is the same as
the underlying set of the final domain, but its topology need not be the same.
The canonical ordinary-pair map is
\[
  q_{\Ord}^{(2)}:T^{(2)}\longrightarrow G_{\mathcal P}^{(2),\Ord}.
\]
It is continuous and surjective.  There is a unique bijection
\[
  j:G_{\mathcal P}^{(2),\Fin}\longrightarrow G_{\mathcal P}^{(2),\Ord}
\]
such that
\[
  q_{\Ord}^{(2)}=j\circ q_{\Fin}^{(2)}.
\]
The map $j$ is continuous because $q_{\Fin}^{(2)}$ is quotient.  Products of
quotient maps need not be quotient maps, so
quotientness of $q_{\Ord}^{(2)}$ is an additional property, not a formal
consequence of quotientness of $q$.

\begin{definition}[product-quotient compatibility]
The presentation has product-quotient compatibility if $q_{\Ord}^{(2)}$ is a
quotient map onto the ordinary composable-pair space.  Since
$q_{\Ord}^{(2)}=j\circ q_{\Fin}^{(2)}$ and $q_{\Fin}^{(2)}$ is quotient, this
is equivalent to $j$ being a quotient map.
\end{definition}

The two domains have the same elements: composable pairs of scoped classes.
They differ only in their open sets.  For the final topology, a set $U$ is
open exactly when
\[
  \bigl(q_{\Fin}^{(2)}\bigr)^{-1}(U)
  \quad\text{is open in }T^{(2)}.
\]
For the ordinary topology, openness is inherited from
$G_{\mathcal P}\times G_{\mathcal P}$.
The final topology is always at least as fine as the ordinary one, since
$j$ is continuous.  Product-quotient compatibility says that there are no
additional open sets in the final topology.

\begin{example}[A transparent compatible case]
If the coherent carrier and its quotient are discrete, then $T^{(2)}$, $G_{\mathcal P}^{(2),\Fin}$, and
$G_{\mathcal P}^{(2),\Ord}$ are discrete.  Both pair maps are then quotient
maps, and $j$ is automatically a homeomorphism.  This toy case illustrates
the compatibility conclusion; the theorem is needed because discreteness is
not available for general geometric presentations.
\end{example}

\begin{theorem}[ordinary/final comparison]
\label{thm:ordinary-final}
For every presentation $\mathcal P$, the following are equivalent:
\begin{enumerate}[label=(\arabic*),leftmargin=2.2em]
\item $q_{\Ord}^{(2)}:T^{(2)}\to G_{\mathcal P}^{(2),\Ord}$ is a quotient map;
\item $j:G_{\mathcal P}^{(2),\Fin}\to G_{\mathcal P}^{(2),\Ord}$ is a
  quotient map;
\item the inverse bijection $j^{-1}:G_{\mathcal P}^{(2),\Ord}\to
  G_{\mathcal P}^{(2),\Fin}$ is continuous;
\item $j$ is a homeomorphism, equivalently the final and ordinary
  composable-pair topologies coincide.
\end{enumerate}
Thus each condition is also equivalent to product-quotient compatibility.
When these conditions hold, multiplication is continuous from the ordinary
pullback domain.  No condition in this theorem is used by the canonical
final-domain groupoid theorem.
\end{theorem}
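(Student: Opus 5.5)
The plan is to use two facts set up just before the statement: $q_{\Ord}^{(2)}=j\circ q_{\Fin}^{(2)}$ with $q_{\Fin}^{(2)}$ a quotient map, and $j$ a continuous bijection. From these I would prove the cycle (1)$\Rightarrow$(2)$\Rightarrow$(3)$\Rightarrow$(4)$\Rightarrow$(1) by elementary quotient-map facts.

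For (1)$\Rightarrow$(2), let $U\subseteq G_{\mathcal P}^{(2),\Ord}$ with $j^{-1}(U)$ open in the final domain. Then $(q_{\Ord}^{(2)})^{-1}(U)=(q_{\Fin}^{(2)})^{-1}(j^{-1}(U))$ is open in $T^{(2)}$, so $U$ is open because $q_{\Ord}^{(2)}$ is quotient. For (2)$\Rightarrow$(3), a bijective quotient map is open: if $V$ is open in the final domain, then $j^{-1}(j(V))=V$ because $j$ is injective, so $j(V)$ is open. Hence $(j^{-1})^{-1}(V)=j(V)$ is open, which is continuity of $j^{-1}$. For (3)$\Rightarrow$(4), $j$ is already continuous, so a continuous inverse makes it a homeomorphism. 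Since $j$ is the identity on the common underlying set of composable pairs, this says exactly that the two topologies coincide. For (4)$\Rightarrow$(1), $q_{\Ord}^{(2)}=j\circ q_{\Fin}^{(2)}$ is a homeomorphism composed with a quotient map, hence a quotient map.

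Equivalence with product-quotient compatibility is condition (1) by definition, so the whole cycle gives it.

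For the multiplication clause, write $m_{\Ord}=m_{\Fin}\circ j^{-1}$ on $G_{\mathcal P}^{(2),\Ord}$. This is continuous by (3) and Theorem~\ref{thm:final-groupoid}(b). On elements it agrees with the descended multiplication, because $j$ is the identity on underlying pairs.

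The final sentence, that no condition here is used by Theorem~\ref{thm:final-groupoid}, needs no argument. Its proof only used quotientness of $q_{\Fin}^{(2)}$ and never mentioned $j$ or the ordinary domain.

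The main point to check is the openness of a bijective quotient map in (2)$\Rightarrow$(3); it is easy but relies on injectivity of $j$. Everything else is formal.
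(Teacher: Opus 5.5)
Your proof is correct and follows the paper's argument. Both rest on the factorization $q_{\Ord}^{(2)}=j\circ q_{\Fin}^{(2)}$, the pullback and composition facts for quotient maps, and the openness of a bijective quotient map; you arrange these as a cycle and write $m_{\Ord}=m_{\Fin}\circ j^{-1}$, whereas the paper states pairwise equivalences and applies the quotient criterion to $q_{\Ord}^{(2)}$ directly, which is the same reasoning.
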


\begin{proof}
Since $q_{\Fin}^{(2)}$ is quotient and
$q_{\Ord}^{(2)}=j\circ q_{\Fin}^{(2)}$, the composite $q_{\Ord}^{(2)}$ is
quotient exactly when $j$ is quotient.  Indeed, one implication is closure of
quotient maps under composition; for the converse, apply the quotient
criterion to a subset of $G_{\mathcal P}^{(2),\Ord}$ and pull it back first
along $j$ and then along $q_{\Fin}^{(2)}$.

The map $j$ is a continuous bijection.  A bijective quotient map has
continuous inverse, so (2) and (3) are equivalent; together with the already
known continuity of $j$, these are equivalent to (4).  The homeomorphism
condition is precisely equality of the two topologies on the common
underlying set.  Finally, when these conditions hold, multiplication is
continuous by the quotient criterion applied to its continuous
representative-level concatenation.
\end{proof}

\begin{remark}[terminology]
In the standard definition of a topological groupoid, multiplication is
continuous on the ordinary pullback $G_{\mathcal P}\times_XG_{\mathcal P}$.
Accordingly, the phrase ``topological groupoid'' without qualification is
 reserved for the case in which product-quotient compatibility has been
 proved.  The unconditional object of Theorem~\ref{thm:final-groupoid} is instead called a
\emph{final-domain topological groupoid}: it has the same arrow set and
groupoid laws, but its multiplication domain carries the final topology from
explicitly composable representatives.
\end{remark}

\begin{remark}[comparison with convenient categories]
One can instead change the ambient category of spaces, for example by using
compactly generated spaces and their $k$-ified products \cite{May99}.  That
remedy changes the topology assigned to the ordinary product.  The present
construction stays in $\Top$ and records the final topology forced by the
chosen representative quotient, so the failure of the ordinary product
remains visible.  No general identification of these two remedies is claimed.
\end{remark}

\begin{figure}[htbp]
\centering
\begin{tikzcd}[column sep=3.6em,row sep=3em]
 T^{(2)} \arrow[r,"q_{\Fin}^{(2)}"] \arrow[d,"q_{\Ord}^{(2)}"'] &
 G_{\mathcal P}^{(2),\Fin} \arrow[d,"j"] \arrow[r,"m_{\Fin}"] &
 G_{\mathcal P} \arrow[d,equal] \\
 G_{\mathcal P}^{(2),\Ord} \arrow[r,equal] &
 G_{\mathcal P}^{(2),\Ord} \arrow[r,"m_{\Ord}"'] &
 G_{\mathcal P}
\end{tikzcd}
\caption{The composition-continuity boundary.
The upper multiplication is always continuous.  The comparison $j$ is
always a continuous bijection, and is a homeomorphism exactly under
Theorem~\ref{thm:ordinary-final}'s equivalent quotient conditions.
These conditions imply continuity of $m_{\Ord}$; the converse implication
from continuity of that single operation is not asserted.}
\label{fig:final-ordinary}
\end{figure}
\FloatBarrier

\begin{proposition}[open-map criterion]
If the bijection from the final composable quotient to the ordinary pair space
is an open map, then product-quotient compatibility holds.  In particular, an
ordinary-pair map $q_{\Ord}^{(2)}$ that is open is quotient, so multiplication
is continuous for the ordinary pullback topology.
\end{proposition}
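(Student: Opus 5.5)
The plan is to reduce both assertions to Theorem~\ref{thm:ordinary-final}, whose condition (4) (equivalently (1)) is product-quotient compatibility.

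\emph{First assertion.} Suppose the comparison $j:G_{\mathcal P}^{(2),\Fin}\to G_{\mathcal P}^{(2),\Ord}$ is open. It is already known to be a continuous bijection, because $q_{\Fin}^{(2)}$ is quotient and $q_{\Ord}^{(2)}=j\circ q_{\Fin}^{(2)}$. A continuous open bijection is a homeomorphism: for $U$ open in the final domain, $(j^{-1})^{-1}(U)=j(U)$ is open, so $j^{-1}$ is continuous. This is condition (3) of Theorem~\ref{thm:ordinary-final}, and that theorem then gives product-quotient compatibility and continuity of $m_{\Ord}$.

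\emph{Second assertion.} Suppose $q_{\Ord}^{(2)}$ is open. Since it is continuous and surjective, I will show it is quotient directly. Let $V\subseteq G_{\mathcal P}^{(2),\Ord}$ have open preimage $W=(q_{\Ord}^{(2)})^{-1}(V)$. Surjectivity gives $V=q_{\Ord}^{(2)}(W)$, and openness makes this set open. So $q_{\Ord}^{(2)}$ is a quotient map, which is condition (1) of Theorem~\ref{thm:ordinary-final}.

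Alternatively, one can show that openness of $q_{\Ord}^{(2)}$ forces openness of $j$. For $U$ open in the final domain, $j(U)=q_{\Ord}^{(2)}\bigl((q_{\Fin}^{(2)})^{-1}(U)\bigr)$; this uses surjectivity of $q_{\Fin}^{(2)}$ and bijectivity of $j$. The image of an open set under an open map is open, which recovers the first case.

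In either case, continuity of multiplication on the ordinary pullback comes from the last clause of Theorem~\ref{thm:ordinary-final}: $m_{\Ord}=m_{\Fin}\circ j^{-1}$ is a composite of continuous maps.

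None of these steps is a real obstacle; the only care needed is bookkeeping. The "bijection" in the statement must be identified with $j$. The set-level identity $j(U)=q_{\Ord}^{(2)}((q_{\Fin}^{(2)})^{-1}(U))$ must use surjectivity of $q_{\Fin}^{(2)}$, so that $U$ is exactly the image of its preimage. No product-of-quotients fact is used anywhere.
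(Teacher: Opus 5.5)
Your proof is correct and matches what the paper intends: the paper gives no separate proof of this proposition and treats it as an immediate consequence of Theorem~\ref{thm:ordinary-final}, using exactly your two facts (a continuous open bijection is a homeomorphism, and a continuous open surjection is a quotient map). Your alternative argument is also valid: the identity $j(U)=q_{\Ord}^{(2)}\bigl((q_{\Fin}^{(2)})^{-1}(U)\bigr)$ shows that openness of $q_{\Ord}^{(2)}$ forces openness of $j$, which ties the two assertions together.
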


\begin{theorem}[compact-Hausdorff compatibility]
Suppose that the final composable domain
$G_{\mathcal P}^{(2),\Fin}$ is compact and that the ordinary composable domain
$G_{\mathcal P}^{(2),\Ord}$ is Hausdorff.  Then product-quotient compatibility
holds.  Hence the final and ordinary composable-pair topologies agree and
ordinary-pullback multiplication is continuous.
\end{theorem}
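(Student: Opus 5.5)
The plan is to apply the classical fact that a continuous bijection from a compact space onto a Hausdorff space is a homeomorphism. Theorem~\ref{thm:ordinary-final} then converts this into product-quotient compatibility. The map in question is the comparison $j:G_{\mathcal P}^{(2),\Fin}\to G_{\mathcal P}^{(2),\Ord}$. Before Theorem~\ref{thm:ordinary-final} it was shown that $j$ is a bijection satisfying $q_{\Ord}^{(2)}=j\circ q_{\Fin}^{(2)}$. It is continuous because $q_{\Fin}^{(2)}$ is a quotient map and $q_{\Ord}^{(2)}$ is continuous.

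Next I will show that $j$ is a closed map. Let $F\subseteq G_{\mathcal P}^{(2),\Fin}$ be closed. A closed subset of a compact space is compact, so $F$ is compact. Since $j$ is continuous, $j(F)$ is compact in $G_{\mathcal P}^{(2),\Ord}$. A compact subset of a Hausdorff space is closed, so $j(F)$ is closed. Because $j$ is a bijection, being closed means exactly that $j^{-1}$ is continuous. This is condition~(3) of Theorem~\ref{thm:ordinary-final}. That theorem then gives conditions (1), (2) and (4): $q_{\Ord}^{(2)}$ is a quotient map, which is product-quotient compatibility, and the final and ordinary composable-pair topologies coincide.

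For continuity of multiplication on the ordinary pullback, I will write $m_{\Ord}=m_{\Fin}\circ j^{-1}$. Here $m_{\Fin}$ is continuous by Theorem~\ref{thm:final-groupoid}(b), and $j^{-1}$ was just shown to be continuous. I could also simply quote the last assertion of Theorem~\ref{thm:ordinary-final}.

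I do not expect a real obstacle, since both hypotheses are placed on exactly the spaces where they are needed. The one point to watch is that compactness is assumed for the final domain and Hausdorffness for the ordinary domain, not the other way round. The argument would not go through in the reversed form: the ordinary topology is coarser, so compactness of the ordinary domain would not pass to the finer final domain. I would add a short remark making this explicit.
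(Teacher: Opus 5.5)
Your proposal is correct and follows essentially the same route as the paper. The paper also treats $j$ as a continuous bijection from the compact final domain to the Hausdorff ordinary domain, hence a closed map and a homeomorphism. It then concludes that $q_{\Ord}^{(2)}=j\circ q_{\Fin}^{(2)}$ is quotient, which gives compatibility and, via Theorem~\ref{thm:ordinary-final}, continuity of ordinary multiplication.
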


\begin{proof}
The identity map

\[
  j:G_{\mathcal P}^{(2),\Fin}\longrightarrow
  G_{\mathcal P}^{(2),\Ord}
\]

is a continuous bijection by the comparison theorem.  A continuous bijection
from a compact space to a Hausdorff space is a homeomorphism: its image of
every closed set is compact, hence closed, so its inverse is continuous.  The
composite $j\circ q_{\Fin}^{(2)}=q_{\Ord}^{(2)}$ is therefore a quotient map.
This is exactly product-quotient compatibility.
\end{proof}

\begin{corollary}[finite discrete presentations]
Let $\mathcal P$ be a scoped presentation whose arrow space is finite and
discrete and whose explicit composable-trace carrier is discrete.  Then the
ordinary and final composable-pair topologies coincide.
\end{corollary}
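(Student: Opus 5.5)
The plan is to deduce the corollary from Theorem~\ref{thm:ordinary-final}, either through the open-map criterion or through the compact-Hausdorff theorem. The discreteness hypotheses give the needed inputs quickly. The cleanest route is to show directly that the ordinary composable domain $G_{\mathcal P}^{(2),\Ord}$ is discrete, and then check condition (3) of Theorem~\ref{thm:ordinary-final}: the inverse bijection $j^{-1}$ is continuous.

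The first step is topological. Since $G_{\mathcal P}$ is discrete, the product $G_{\mathcal P}\times G_{\mathcal P}$ is discrete, because a product of finitely many discrete spaces is discrete. Every subspace of a discrete space is discrete, so $G_{\mathcal P}^{(2),\Ord}$ carries the discrete topology. Any map out of a discrete space is continuous, so $j^{-1}:G_{\mathcal P}^{(2),\Ord}\to G_{\mathcal P}^{(2),\Fin}$ is continuous. Theorem~\ref{thm:ordinary-final} then gives that $j$ is a homeomorphism, which means the two composable-pair topologies coincide.

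As a cross-check through the other stated route: $G_{\mathcal P}^{(2),\Fin}$ has the same finite underlying set as $G_{\mathcal P}^{(2),\Ord}$, so it is finite and hence compact. The discrete space $G_{\mathcal P}^{(2),\Ord}$ is Hausdorff. The compact-Hausdorff compatibility theorem therefore applies as well.

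The discreteness of the explicit composable-trace carrier $T^{(2)}$ is not actually needed for the argument above. It does show that $G_{\mathcal P}^{(2),\Fin}$ is itself discrete, since every subset then has open preimage under $q_{\Fin}^{(2)}$, which matches the transparent compatible example. I would mention this only as a remark.

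There is no real obstacle here. The only point needing care is that finiteness of the arrow space transfers to the composable-pair sets. This holds because both pair domains are subsets of $G_{\mathcal P}\times G_{\mathcal P}$ with the same underlying set, so they are finite.
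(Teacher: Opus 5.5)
Your proof is correct, but your main argument is not the one the paper uses. The paper derives the corollary from the compact-Hausdorff theorem. Discreteness of $T^{(2)}$ makes $G_{\mathcal P}^{(2),\Fin}$ discrete, and finiteness of the arrow space makes it finite and hence compact. The ordinary domain $G_{\mathcal P}^{(2),\Ord}$ is Hausdorff as a subspace of a discrete square.

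Your cross-check is essentially this argument. You rightly note that finiteness alone already gives compactness, so the discreteness of the final domain, and hence the hypothesis on $T^{(2)}$, plays no role there.

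Your primary route instead shows that $G_{\mathcal P}^{(2),\Ord}$ is discrete. Then $j^{-1}$ is automatically continuous, and condition~(3) of Theorem~\ref{thm:ordinary-final} holds. Strictly speaking, you only need that $j$ is a continuous bijection, which is established before that theorem.

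This route buys generality and economy: it uses only discreteness of $G_{\mathcal P}$. Finiteness of the arrow space and discreteness of the composable-trace carrier are both superfluous. Your remark that finiteness must transfer to the pair sets is relevant only to the cross-check. The paper's route buys an illustration rather than extra strength: it presents the corollary as a special case of the compact-Hausdorff sufficient condition just proved, which is why it carries the finiteness hypothesis.
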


\begin{proof}
The quotient of a discrete space is discrete for the quotient topology, so
the final composable domain is discrete.  It is finite because its underlying
set is a subset of the square of the finite arrow space.  The ordinary
composable domain is a finite subspace of a discrete square and is therefore
Hausdorff.  The compact-Hausdorff theorem applies.
\end{proof}

These sufficient conditions concern the two composable-pair spaces:
compactness is required of the final domain, and Hausdorffness of the ordinary
domain.  They make $j$ a homeomorphism and therefore give continuity of
ordinary multiplication.  Theorem~\ref{thm:final-groupoid} does not require
either hypothesis.

\subsection{A strict failure: the Hawaiian earring}

Let
\[
  \mathbb H=\bigcup_{n\geq 1}
  \left\{(x,y)\in\mathbb R^2:
  \left(x-\frac1n\right)^2+y^2=\frac1{n^2}\right\}
\]
be the Hawaiian earring, based at $x_0=(0,0)$.  Write
$\Omega=\Omega(\mathbb H,x_0)$ for its based loop space with the compact-open
topology and
\[
  q_\Omega:\Omega\longrightarrow\pi_1^q(\mathbb H,x_0)
\]
for the endpoint-fixed homotopy quotient.  Fabel proved that
$q_\Omega\times q_\Omega$ is not a quotient map and that multiplication in
$\pi_1^q(\mathbb H,x_0)$ is not continuous \cite{Fabel11}.

\begin{proposition}[Hawaiian-earring obstruction]
\label{prop:hawaiian-earring}
Use the observable based fiber of the universal presentation and the
fiberwise quotient construction of Lemma~\ref{lem:fiberwise-coherent-path-quotient}:
\[
  T_0:=T_{\mathcal U_{\mathbb H}}^{\mathrm{obs}}(x_0,x_0),
  \qquad
  G_0:=G_{\mathcal U_{\mathbb H}}^{\mathrm{obs}}(x_0,x_0),
\]
with quotient map $q_0:T_0\to G_0$ and projection
$\pi_0:T_0\to\Omega$.  Then the ordinary
pair map
\[
  q_0\times q_0:T_0\times T_0\longrightarrow G_0\times G_0
\]
is not quotient.  Consequently the final topology on composable pairs is
strictly finer than the ordinary product topology, and multiplication on the
ordinary product is discontinuous.
\end{proposition}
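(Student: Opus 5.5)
The plan is to transfer Fabel's theorem along a homeomorphism-compatible square relating the based fiber of the universal presentation to the based loop space. Concretely, I will use the fiberwise quotient construction of Lemma~\ref{lem:fiberwise-coherent-path-quotient}, together with the universal-presentation identification of Section~7. In the universal presentation every path is a primitive step and every endpoint-fixed homotopy is a named rewrite. So $q_0(u)=q_0(v)$ holds exactly when $\pi_0(u)\simeq_{\partial I}\pi_0(v)$, and the induced map $h:G_0\to\pi_1^q(\mathbb H,x_0)$ is a bijection with $h\,q_0=q_\Omega\,\pi_0$. The first step is to show that $h$ is a homeomorphism. It is continuous by the quotient criterion, since $q_\Omega\pi_0$ is continuous. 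For the converse I will use the section $s:\Omega\to T_0$, $\gamma\mapsto(\gamma,\gamma)$, where $\gamma$ is viewed as a one-letter word. This section is continuous for the observable topology because every observable coordinate of $s(\gamma)$ (endpoints, length $1$, $|\gamma|=\gamma$, and $\gamma$) depends continuously on $\gamma$. Then $q_0 s$ is continuous and constant on homotopy classes, so it descends to a continuous inverse $\pi_1^q\to G_0$ of $h$.

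The second step squares this. Suppose, for contradiction, that $q_0\times q_0$ is quotient. I will show that $q_\Omega\times q_\Omega$ is then quotient, contradicting \cite{Fabel11}. Write $q_\Omega\times q_\Omega=(h\times h)\circ(q_0\times q_0)\circ(s\times s)$; this holds because $q_0 s$ is the class of $\gamma$. Take $U\subseteq\pi_1^q\times\pi_1^q$ with $(q_\Omega\times q_\Omega)^{-1}(U)$ open, and set $V=(h\times h)^{-1}(U)$. I need $(q_0\times q_0)^{-1}(V)$ to be open in $T_0\times T_0$. This preimage equals $(\pi_0\times\pi_0)^{-1}\bigl((q_\Omega\times q_\Omega)^{-1}(U)\bigr)$, which is open because $\pi_0$ is continuous. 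Since $q_0\times q_0$ is assumed quotient, $V$ is open, and so $U$ is open, as $h\times h$ is a homeomorphism. This gives the contradiction, so $q_0\times q_0$ is not quotient. The factorisation through $s$ is only needed to see that every class is hit, and surjectivity is already clear.

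For the consequences, note that in the based fiber every pair is composable. So the ordinary composable domain is $G_0\times G_0$ with the product topology, and the final domain is the quotient of $T_0\times T_0$. By Theorem~\ref{thm:ordinary-final}, failure of condition (1) means $j$ is not a homeomorphism. Hence the final topology is strictly finer, since $j$ is always continuous. For discontinuity of multiplication, $h$ transports multiplication on $G_0$ to the multiplication of $\pi_1^q(\mathbb H,x_0)$; this holds because concatenation of representatives maps under $\pi_0$ to concatenation of loops up to reparametrization, and reparametrization does not change the homotopy class. The latter multiplication is discontinuous on the product by Fabel, so ordinary multiplication on $G_0\times G_0$ is discontinuous.

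The main obstacle is the homeomorphism $h$, specifically continuity of the section $s$ in the observable topology. It requires the universal presentation's step space to be $X^I$ (or its based part) with realization the identity, so that $\gamma\mapsto$ one-letter word is continuous into $W_1$. I will verify this against the Section~7 definition and the fiberwise lemma. If the fiberwise lemma instead fixes a different fiber topology, I will check instead that $\pi_0$ is a quotient map, which the section argument also gives, since $\pi_0 s=\mathrm{id}$.
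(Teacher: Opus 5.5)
Your proposal is correct and follows essentially the paper's argument. You obtain the homeomorphism $h$ with $h\,q_0=q_\Omega\,\pi_0$ from the continuous one-letter section, whereas the paper simply cites Lemma~\ref{lem:fiberwise-coherent-path-quotient}. You then pull back along $(q_\Omega\times q_\Omega)\circ(\pi_0\times\pi_0)$ to show that $q_\Omega\times q_\Omega$ would be quotient, contradicting Fabel. Finally, you derive the strict refinement from Theorem~\ref{thm:ordinary-final} and the discontinuity by transporting multiplication along $h$; the extra factorisation through $s\times s$ is, as you note, unnecessary.
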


\begin{proof}
The fiberwise lemma gives that $\pi_0$ is quotient and has the continuous
section $\sigma(\gamma)=(\langle\gamma\rangle,\gamma)$.  It also gives a
fiberwise universal-completion homeomorphism
\[
  h:G_0\longrightarrow\pi_1^q(\mathbb H,x_0),
  \qquad h\circ q_0=q_\Omega\circ\pi_0.
\]
Suppose that $q_0\times q_0$ were
quotient.  Postcomposition with the homeomorphism $h\times h$ would make
\[
  (q_\Omega\times q_\Omega)\circ(\pi_0\times\pi_0)
\]
a quotient map.  If $(q_\Omega\times q_\Omega)^{-1}(U)$ were open, its
preimage under the continuous map $\pi_0\times\pi_0$ would therefore be open;
the quotient criterion for the displayed composite would force $U$ to be
open.  Hence $q_\Omega\times q_\Omega$ would be quotient, contradicting
Fabel's theorem.

The ordinary/final comparison theorem now shows that the continuous identity
from the final pair domain to the ordinary pair domain is not a
homeomorphism.  Since the final topology is always at least as fine, it is
strictly finer here.  Finally, $h$ preserves loop multiplication, so
continuity of ordinary-product multiplication on $G_0$ would imply continuity
on $\pi_1^q(\mathbb H,x_0)$, again contradicting Fabel's theorem.
\end{proof}

The Hawaiian-earring statement is intentionally formulated for the observable
quotient used in the main final-versus-ordinary theorem.  The trace-sensitive
refinement maps continuously to that quotient by
Proposition~\ref{prop:trace-sensitive-refinement}; this does not reverse the
comparison or assert an ordinary-pair obstruction for the finer topology
without a separate product-quotient argument.

The discontinuity result is Fabel's, not a new property of the Hawaiian
earring.  Its role here is to show that the two composable-pair topologies
really can differ: the failure occurs at the inverse of the comparison map
$j$.  Keeping the final topology gives a continuous multiplication without
asserting continuity for the ordinary product topology.

\section{Functoriality}

Let $\mathcal P$ and $\mathcal Q$ be presentations on $(X,E)$ and $(Y,F)$.
A presentation map consists of a continuous map $f:X\to Y$, a continuous
step map preserving endpoints and realizations, and a proof that every named
rule of $\mathcal P$ maps to a scoped derivation in $\mathcal Q$.
Presentation maps form a category: the identity has the identity space and step
maps, and composition is the composite of the corresponding space and step
maps, with the scoped-derivation proofs composed by induction.

\begin{theorem}[functoriality]
Every presentation map $F:\mathcal P\to\mathcal Q$ induces a continuous map
\[
  F_1:G_{\mathcal P}\longrightarrow G_{\mathcal Q}
\]
that preserves source, target, identities, reversal, and final-domain
composition.  The assignment preserves identity maps and composites.
\end{theorem}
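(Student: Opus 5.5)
The plan is to construct $F_1$ at the level of coherent representatives and then descend it through the quotient $q:T_{\mathcal P}\to G_{\mathcal P}$, exactly as in the proof of Theorem~\ref{thm:final-groupoid}.

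\textbf{Raw map.} Write $f:X\to Y$ for the space map and $g:E\to F$ for the step map. Extend $g$ to $\widetilde g:\widetilde E\to\widetilde F$ by acting on each copy; this preserves the reversed endpoints and realizations. Apply $\widetilde g$ letterwise on each $W_n$. This gives a map $W_n\to W_n^{\mathcal Q}$: composability is preserved because $g$ preserves endpoints, and on $W_0=X$ we use $f$. The result is a continuous, length-preserving map $\mathsf{Tr}_{\mathcal P}\to\mathsf{Tr}_{\mathcal Q}$, since it is continuous on each coproduct component. Because $g$ preserves realizations, $|\widetilde g(p)|=f\circ|p|$. Define
\[
  \widehat F(p,\gamma)=(\widetilde g(p),\,f\circ\gamma).
\]
Coherence is preserved because postcomposing an endpoint-fixed homotopy $\gamma\simeq|p|$ with $f$ gives $f\circ\gamma\simeq f\circ|p|=|\widetilde g(p)|$. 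Continuity into $T_{\mathcal Q}^{\mathrm{obs}}$ is checked by the initial-topology criterion: the $\kappa_{\mathrm{obs}}$-code of $\widehat F(p,\gamma)$ is
\[
  \bigl(f s(p),\,f t(p),\,\ell(p),\,f_*|p|,\,f_*\gamma\bigr),
\]
and $f_*:X^I\to Y^I$ is continuous for the compact-open topology. So this code depends continuously on the $\kappa_{\mathrm{obs}}$-code of $(p,\gamma)$.

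\textbf{Descent.} The main obstacle is showing that $p\RwEq q$ in $\mathcal P$ implies $\widetilde g(p)\RwEq\widetilde g(q)$ in $\mathcal Q$. I will prove this by induction on the derivation in Definition~\ref{def:scoped-rewrite-equality}.
\begin{itemize}
\item A named rule is sent to a scoped derivation in $\mathcal Q$ by the data of the presentation map.
\item Reflexivity, symmetry and transitivity are handled by the same constructors in $\mathcal Q$.
\item For the congruence cases, use that $\widetilde g$ commutes strictly with word concatenation and reversal, then apply $\mathcal Q$'s congruence constructors.
\item Each structural coherence maps to the same structural coherence instance. This works because $\widetilde g$ is a letterwise homomorphism of flat words with $\widetilde g(1_a)=1_{f(a)}$.
\end{itemize}
It follows that $q_{\mathcal Q}\circ\widehat F$ is continuous and constant on $\approx$. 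By the quotient criterion it descends to a unique continuous $F_1$ with $F_1q_{\mathcal P}=q_{\mathcal Q}\widehat F$.

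\textbf{Operations.} Preservation of source and target follows from $s\widehat F=fs$. At the raw level, $\widehat F$ sends $(1_a,c_a)$ to $(1_{f(a)},c_{f(a)})$ and commutes with $\sigma$. It also commutes with $\mu$, using
\[
  f\circ(\gamma\star_{m,n}\delta)=(f\circ\gamma)\star_{m,n}(f\circ\delta)
\]
and the fact that $\widetilde g$ preserves lengths. For final-domain composition, $\widehat F\times\widehat F$ restricts to a continuous map $T^{(2)}_{\mathcal P}\to T^{(2)}_{\mathcal Q}$ that respects the componentwise relation. It therefore descends to a continuous $F_1^{(2)}:G^{(2),\Fin}_{\mathcal P}\to G^{(2),\Fin}_{\mathcal Q}$ with $m_{\Fin}F_1^{(2)}=F_1m_{\Fin}$, which can be checked on representatives.

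\textbf{Functoriality.} The identity presentation map induces the identity on representatives, and composite step maps induce composite raw maps. Uniqueness of descended maps through the surjection $q$ then gives $(\mathrm{id})_1=\mathrm{id}$ and $(F'F)_1=F'_1F_1$. The scoped-derivation proofs carried by the maps do not affect $F_1$, since $F_1$ depends only on $f$ and $g$.
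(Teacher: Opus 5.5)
Your proposal is correct and follows essentially the same route as the paper. Both use the raw map $(p,\gamma)\mapsto(F_*p,f\circ\gamma)$, get continuity from the observable code and compact-open postcomposition, descend through $q$ by induction on scoped derivations, rerun the argument on $T^{(2)}$ for final-domain composition, and use quotient extensionality for identities and composites; you also spell out details the paper leaves implicit, such as $|\widetilde g(p)|=f\circ|p|$ and the independence of $F_1$ from the derivation proofs carried by the presentation map.
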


\begin{proof}
Write $f:X\to Y$ for the underlying map of spaces and $F_*p$ for the trace
obtained by applying the step map to every letter of $p$.  The raw map is
\[
  \widetilde F:T_{\mathcal P}\longrightarrow T_{\mathcal Q},
  \qquad (p,\gamma)\longmapsto(F_*p,f\circ\gamma).
\]
It preserves coherence because realizations are preserved by the step map and
endpoint-fixed homotopies may be postcomposed with $f$.  It is continuous:
on every word stratum the induced trace map is continuous, postcomposition
$X^I\to Y^I$ is continuous for the compact-open topology, and the observable
code of $\widetilde F(p,\gamma)$ is consequently a continuous function of the
observable code of $(p,\gamma)$.  Induction on scoped rewrite derivations
proves that $\widetilde F$ sends related traces to related traces.  It therefore
descends through the quotient, and the quotient criterion gives continuity on
arrows.  Applying the step map to reflexivity, reversal, and concatenation
gives the corresponding preservation equations before quotienting.
The same argument on the explicit composable carrier gives a continuous map on
final composable domains and proves final composition preservation.  Identity
and composite step maps are definitionally the identity and composite maps on
raw traces; quotient extensionality completes the proof.
\end{proof}

The same construction gives ordinary-pair continuity for every presentation
map.  If both source and target presentations satisfy product-quotient
compatibility, the ordinary multiplication maps are continuous as well.  Continuity of the induced map on arrows does not require this
additional compatibility assumption.
For the trace-sensitive representative topology, the full-word component of
$F_*$ is continuous on each coproduct stratum and the path component is the
same compact-open postcomposition map.  Thus the functoriality proof has a
parallel trace-sensitive form; the unadorned statement uses the observable
quotient as above.

\section{Geometric comparison and complete presentations}

Let $H_{\mathrm{geo}}$ be the quotient of coherent traces by equality of
endpoints and endpoint-fixed homotopy of chosen geometric representatives, and
write
\[
  h:T\longrightarrow H_{\mathrm{geo}}
\]
for the quotient map.  Define the final composable geometric domain by
\[
  H_{\mathrm{geo}}^{(2),\Fin}
  =T^{(2)}/\!\sim_{\mathrm{geo}}^{(2)},
  \qquad
  (u,v)\sim_{\mathrm{geo}}^{(2)}(u',v')
  \Longleftrightarrow h(u)=h(u')\ \text{and}\ h(v)=h(v').
\]
The weighted concatenation invariance lemma makes representative-level
concatenation descend to a well-defined composition on
$H_{\mathrm{geo}}^{(2),\Fin}$, with continuity supplied by its quotient
topology.  Composing the coherent identity map with $h$ gives a continuous
identity map into $H_{\mathrm{geo}}$, while coherent reversal preserves
geometric equivalence and therefore descends continuously.  Together, these
operations equip $H_{\mathrm{geo}}$ with its final-domain groupoid structure.
The realization comparison and its composable-domain map are
\[
  R_{\mathcal P}:G_{\mathcal P}\longrightarrow H_{\mathrm{geo}}
\]
and
\[
  R_{\mathcal P}^{(2)}:G_{\mathcal P}^{(2),\Fin}\longrightarrow
  H_{\mathrm{geo}}^{(2),\Fin},
  \qquad ([u],[v])\longmapsto([u]_{\mathrm{geo}},[v]_{\mathrm{geo}}).
\]
The first map sends a scoped class to its geometric homotopy class.
The second applies that map to each member of a composable pair.

Here $H_{\mathrm{geo}}$ contains only \emph{represented} geometric classes:
it is a quotient of $T$, not of the entire ambient path space $X^I$.
Surjectivity of $R_{\mathcal P}$ is therefore built into its target; it does
not assert that the presentation represents every ambient homotopy class.
The induced map $H_{\mathrm{geo}}\to\Pi_1^q(X)$ is a continuous injection,
since representative projection is continuous and $h$ is quotient.  We do not
assert that this injection is a topological embedding.  The universal
presentation supplies all ambient paths through its one-letter section;
the finite-generator circle and torus examples instead establish coverage of
all based-loop homotopy classes at their chosen base points.

Soundness makes $R_{\mathcal P}$ well-defined.  Geometric completeness makes
it injective: ambient homotopy introduces no extra equalities among represented
traces.  Because both quotients here come from the same carrier $T$, equality
of their equivalence relations then makes the comparison a homeomorphism.

\begin{theorem}[comparison morphism]
The map $R_{\mathcal P}$ is continuous, surjective on the geometric quotient,
and preserves source, target, identity, and reversal.
The induced pair map $R_{\mathcal P}^{(2)}$ is continuous, and the two maps
commute with composition on the final domains.  In particular,
$R_{\mathcal P}$ is a continuous groupoid morphism for the final-domain
structures.
\end{theorem}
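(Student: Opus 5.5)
The proof will run entirely through the quotient criterion, using that both $G_{\mathcal P}$ and $H_{\mathrm{geo}}$ are quotients of the same carrier $T$. First I check that the scoped relation $\approx$ is contained in geometric equivalence. Let $u=(p,\gamma)\approx v=(q,\delta)$. Coherence gives $\gamma\simeq_{\partial I}|p|$ and $\delta\simeq_{\partial I}|q|$. Realization soundness (Theorem~\ref{thm:realization-soundness}) gives $|p|\simeq_{\partial I}|q|$. Composing these homotopies yields $\gamma\simeq_{\partial I}\delta$ with the same endpoints, so $h(u)=h(v)$. Hence $h$ is constant on $\approx$-classes and descends to a unique $R_{\mathcal P}$ with $R_{\mathcal P}\,q=h$. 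This map is continuous because $q$ is quotient and $h$ is continuous. Surjectivity is immediate, since $h$ is surjective and factors through $R_{\mathcal P}$.

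Next come the structural operations. On $H_{\mathrm{geo}}$, source and target descend from $T$, and so do the identity $h\circ\iota$ and reversal $h\circ\sigma\mapsto\,$descended reversal. The preservation equations $s\,R_{\mathcal P}=s$, $t\,R_{\mathcal P}=t$, $R_{\mathcal P}(1_a)=1_a$ and $R_{\mathcal P}(x^{-1})=R_{\mathcal P}(x)^{-1}$ all hold on representatives, because both sides are computed from the same raw maps $\iota$ and $\sigma$ of Lemma~\ref{lem:coherent-operations-continuous}. Since $q$ is surjective, they descend to the quotient.

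For the pair map, I apply the same argument to $T^{(2)}$. Componentwise scoped equivalence implies componentwise geometric equivalence by the first step, so the map $h^{(2)}:T^{(2)}\to H_{\mathrm{geo}}^{(2),\Fin}$ is constant on the fibres of $q_{\Fin}^{(2)}$. It therefore descends to $R_{\mathcal P}^{(2)}$, and $R_{\mathcal P}^{(2)}$ is continuous because $q_{\Fin}^{(2)}$ is quotient. Commutation with composition, $R_{\mathcal P}\circ m_{\Fin}=m^{\mathrm{geo}}_{\Fin}\circ R_{\mathcal P}^{(2)}$, can be checked after precomposing with the surjection $q_{\Fin}^{(2)}$. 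There both sides equal $h\circ\mu$: the left side by definition of $m_{\Fin}$, and the right side by definition of the descended geometric composition, which is well defined by Lemma~\ref{lem:weighted-concat-invariance}.

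I expect this last step to be the main obstacle. The delicate point is that geometric composition on $H_{\mathrm{geo}}^{(2),\Fin}$ is defined via weighted concatenation $\gamma\star_{\ell(p),\ell(q)}\delta$, whose weights depend on trace lengths. Geometrically equivalent pairs may have different lengths, including zero. So I must verify the zero-weight hypotheses of Lemma~\ref{lem:weighted-concat-invariance}. A zero-length trace lies in $W_0$, so $|p|$ is constant, and coherence then makes $\gamma$ a null-homotopic loop. This is exactly the required hypothesis. Once that holds, the remaining work is the formal bookkeeping that both quotient maps are surjective and that equalities of maps out of a quotient are checked on representatives. Combining the continuity of $R_{\mathcal P}$ and $R_{\mathcal P}^{(2)}$ with these preservation equations gives the final claim: $R_{\mathcal P}$ is a continuous morphism of final-domain groupoids.
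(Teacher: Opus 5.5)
Your proposal is correct and follows essentially the same route as the paper. You factor $h$ and the raw pair quotient through $q$ and $q_{\Fin}^{(2)}$ via realization soundness, take continuity and surjectivity from the quotient criterion, and check operation preservation on representatives, using the weighted concatenation invariance lemma for composition; your explicit check of its zero-weight hypotheses (a length-zero trace has constant realization, so coherence makes its chosen loop null-homotopic) fills in a detail the paper leaves implicit.
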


\begin{proof}
Realization soundness makes the map well-defined.  By construction
$h=R_{\mathcal P}\circ q$, and $h$ is continuous for the quotient topology on
$H_{\mathrm{geo}}$.  Since $q:T\to G_{\mathcal P}$ is quotient, the quotient
criterion makes $R_{\mathcal P}$ continuous.  Every geometric class has a
coherent raw representative by definition, proving surjectivity.  The
operation equations follow from the recursive realization of reflexivity,
concatenation, and reversal.  The map $R_{\mathcal P}^{(2)}$ sends a
final-domain composable pair to the corresponding final-domain geometric pair,
and the weighted concatenation invariance lemma gives preservation of the
induced compositions.  The raw quotient map
$T^{(2)}\to H_{\mathrm{geo}}^{(2),\Fin}$ is continuous and, by realization
soundness, constant on componentwise scoped equivalence.  It therefore
factors through $q_{\mathrm{fin}}^{(2)}$, and the quotient criterion makes
$R_{\mathcal P}^{(2)}$ continuous.
\end{proof}

\begin{definition}[geometric completeness]
The presentation is geometrically complete if, for all coherent representatives
$u,v$, equality of their geometric homotopy classes implies scoped equivalence:
\[
  R_{\mathcal P}(q(u))=R_{\mathcal P}(q(v))
  \quad\Longrightarrow\quad u\approx v.
\]
\end{definition}

\begin{theorem}[faithfulness and homeomorphism]
\label{thm:faithfulness-homeomorphism}
The comparison $R_{\mathcal P}$ is faithful, equivalently injective on arrows
because its object map is the identity, if and only if $\mathcal P$ is
geometrically complete.  If $\mathcal P$ is geometrically complete, then
\[
  R_{\mathcal P}:G_{\mathcal P}\xrightarrow{\ \cong\ }H_{\mathrm{geo}}
\]
is a homeomorphism on arrow spaces and an isomorphism of abstract groupoids.
\end{theorem}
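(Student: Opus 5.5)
The plan is to prove the theorem in two parts: first the injectivity equivalence, then the homeomorphism claim.

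\emph{Injectivity.} For the equivalence, I unwind the definitions using the factorization $h=R_{\mathcal P}\circ q$ from the comparison-morphism theorem. Every arrow of $G_{\mathcal P}$ has the form $q(u)$ for some $u\in T$. Hence injectivity of $R_{\mathcal P}$ says precisely that
\[
R_{\mathcal P}(q(u))=R_{\mathcal P}(q(v))\;\Longrightarrow\; q(u)=q(v),
\]
and $q(u)=q(v)$ is the same as $u\approx v$. This is verbatim the definition of geometric completeness, so both directions are immediate. The faithfulness formulation follows because $R_{\mathcal P}$ is the identity on the object space $X$ and preserves source and target. Faithfulness on each hom-set $G_{\mathcal P}(a,b)$ is therefore the same as injectivity on the whole arrow space: two arrows with distinct endpoint pairs already have distinct images, since $s$ and $t$ are preserved.

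\emph{Homeomorphism.} Assume completeness. By the comparison-morphism theorem, $R_{\mathcal P}$ is surjective and continuous, so with injectivity it is a continuous bijection. For continuity of the inverse, I compare the two equivalence relations on the common carrier $T$.
\begin{enumerate}[label=(\roman*),leftmargin=2.2em]
\item Realization soundness (Theorem~\ref{thm:realization-soundness}) shows that $u\approx v$ implies $h(u)=h(v)$.
\item Completeness gives the converse.
\end{enumerate}
One point needs care in (i). The relation defining $H_{\mathrm{geo}}$ uses the chosen representatives $\gamma,\delta$, not the realizations $|p|,|q|$. Coherence bridges this gap, since $\gamma\simeq_{\partial I}|p|$ and $\delta\simeq_{\partial I}|q|$. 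With that, $\approx$ and $\sim_{\mathrm{geo}}$ are the same relation on $T$. Two quotient topologies of the same space by the same relation coincide, so the inverse map $h(u)\mapsto q(u)$ is well defined. It is continuous by the quotient criterion applied to the quotient map $h$, because $q$ is continuous and constant on the fibres of $h$.

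\emph{Groupoid isomorphism.} The comparison-morphism theorem already shows that $R_{\mathcal P}$ preserves source, target, identities, reversal and composition. A bijective groupoid morphism that is the identity on objects has an inverse that is again a morphism, which gives the abstract isomorphism. Composition matters only algebraically here; optionally, the same argument applied to $R_{\mathcal P}^{(2)}$ on $T^{(2)}$ gives a homeomorphism of the final composable domains as well.

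The only step that is not purely formal is this identification of the two equivalence relations on $T$. It relies on soundness, which is Theorem~\ref{thm:realization-soundness}, together with coherence to pass from realizations to chosen representatives.
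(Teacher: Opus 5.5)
Your proposal is correct and follows the paper's proof. Injectivity is geometric completeness unwound through $h=R_{\mathcal P}\circ q$, and under completeness $q$ is constant on the fibres of the quotient map $h$, so it factors through a continuous inverse $S$ with $S\circ h=q$. You also use coherence explicitly to pass from realizations to chosen representatives in the soundness direction; the paper covers that step through the well-definedness of $R_{\mathcal P}$ in the comparison-morphism theorem.
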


\begin{proof}
If the comparison is injective, equality of the geometric homotopy classes
of two coherent representatives gives equality of their scoped classes.
This is precisely geometric completeness.  Conversely, completeness turns equality of comparison images
into equality of scoped classes.  The comparison is always surjective and
continuous.  Under completeness, let
$h:T\to H_{\mathrm{geo}}$ be the geometric quotient map.  If
$h(u)=h(v)$, completeness gives $u\approx v$, hence $q(u)=q(v)$.  Therefore
$q$ factors uniquely through a map
\[
  S:H_{\mathrm{geo}}\longrightarrow G_{\mathcal P},
  \qquad S\circ h=q.
\]
Since $h$ is quotient and $q$ is continuous, $S$ is continuous.  The two
composites are identity maps by quotient induction.
\end{proof}

For the trace-sensitive version, give both $G_{\mathcal P}$ and
$H_{\mathrm{geo}}$ the quotient topologies induced by $T_{\mathrm{tr}}$.
The same proof applies: both spaces are quotients of that one carrier.
Geometric completeness is unchanged because it concerns the equivalence
relations, not the topologies.  The homeomorphism conclusion always compares
the scoped and geometric quotients formed from the same representative
topology.

\begin{remark}[Two distinct obligations]
Geometric completeness concerns \emph{which arrows are equal}:
it compares the scoped and geometric equivalence relations.
Product-quotient compatibility concerns \emph{which composable-pair sets
are open}: it compares the final and ordinary domain topologies and permits
continuity of multiplication to pass to the ordinary pullback.
The former does not imply the latter.  The universal presentation is
geometrically complete for every $X$, whereas the Hawaiian-earring example
in Proposition~\ref{prop:hawaiian-earring} obstructs ordinary multiplication.
Thus identifying all geometric classes correctly does not by itself solve
the continuity problem for composition.
\end{remark}

\begin{definition}[normal-form certificate]
A normal-form certificate for $\mathcal P$ consists of a set
$\mathsf{NF}$ of normal-form codes, a representative map
$\nu:\mathsf{NF}\to T$, and a normalizer $N:T\to\mathsf{NF}$ such that,
for all coherent representatives $u,v$,
\begin{enumerate}[label=(\alph*),leftmargin=2.2em]
\item $u\approx\nu(N(u))$; and
\item $h(u)=h(v)$ implies $N(u)=N(v)$.
\end{enumerate}

Clause (a) requires a scoped derivation to the chosen normal representative.
Clause (b) requires geometrically equivalent representatives to receive the
same code.  Thus distinct codes must represent distinct geometric classes.
A code records the normal form, not a chosen homotopy or a proof of
normalization.  No additional condition on the representatives is needed:
equal codes have equal images under $\nu$.

\end{definition}

\begin{theorem}[normal-form completeness criterion]\label{thm:normal-form-completeness}
A normal-form certificate implies geometric completeness.  Therefore
the realization comparison is faithful and is a homeomorphism on arrow spaces.
\end{theorem}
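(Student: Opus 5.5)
The plan is to verify geometric completeness directly from the two certificate clauses, and then invoke Theorem~\ref{thm:faithfulness-homeomorphism} for the remaining conclusions. Fix coherent representatives $u,v\in T$ with $R_{\mathcal P}(q(u))=R_{\mathcal P}(q(v))$. Since $h=R_{\mathcal P}\circ q$ by construction, this hypothesis says exactly that $h(u)=h(v)$. Clause~(b) of the certificate then gives $N(u)=N(v)$, and hence $\nu(N(u))=\nu(N(v))$ as points of $T$.

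Next I would chain clause~(a) twice. It gives $u\approx\nu(N(u))$ and $v\approx\nu(N(v))$. Since $\approx$ is an equivalence relation on $T$, we obtain
\[
  u\approx\nu(N(u))=\nu(N(v))\approx v.
\]
Indeed, the scoped relation $\RwEq$ has symmetry and transitivity constructors, and agreement of endpoints is transitive. This yields $u\approx v$, which is precisely the definition of geometric completeness.

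The second sentence of the theorem then follows immediately from Theorem~\ref{thm:faithfulness-homeomorphism}. Under completeness, $R_{\mathcal P}$ is injective, hence faithful because its object map is the identity. It is also a homeomorphism $G_{\mathcal P}\cong H_{\mathrm{geo}}$: its inverse $S$ is induced by factoring $q$ through the quotient map $h$.

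The only step needing care, and the main obstacle such as it is, is the transitivity step. The relation $\approx$ is defined on coherent pairs, so the intermediate point $\nu(N(u))$ must be a genuine element of $T$. This holds because $\nu$ takes values in $T$. Its endpoints must also match those of $u$ and $v$, and this is built into clause~(a), since $\approx$ requires agreement of endpoints. No topological input beyond the previous theorem is required. In particular, neither $N$ nor $\nu$ needs to be continuous, because continuity of the comparison and of its inverse is supplied by the quotient criterion in Theorem~\ref{thm:faithfulness-homeomorphism}.
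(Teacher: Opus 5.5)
Your proposal is correct and matches the paper's proof: you use clause (b) to turn $h(u)=h(v)$ into $N(u)=N(v)$, then clause (a) and its symmetric form to get the chain $u\approx\nu(N(u))=\nu(N(v))\approx v$, and finally read off faithfulness and the homeomorphism from Theorem~\ref{thm:faithfulness-homeomorphism}. Your remarks that $h=R_{\mathcal P}\circ q$ translates the completeness hypothesis, and that neither $N$ nor $\nu$ needs to be continuous, also agree with the paper's own comments.
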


\begin{proof}
Assume $h(u)=h(v)$.  Clause (b) gives
$N(u)=N(v)$.  Combining this with clause (a) for $u$ and the symmetric form
of clause (a) for $v$ yields

\[
  u\approx\nu(N(u))=\nu(N(v))\approx v.
\]
Figure~\ref{fig:normal-form} separates the two obligations in this argument.
Neither $N$ nor $\nu$ is required to be continuous: this is a set-level
certificate, and the topological conclusion follows from quotient
factorization, not from a continuous normalization procedure.

This is precisely geometric completeness.  The faithfulness and homeomorphism
claims follow from the comparison theorem.
\end{proof}

\begin{figure}[htbp]
\centering
\begin{tikzpicture}[font=\small,>=Stealth]
\node (geo) at (0,2) {$h(u)=h(v)$};
\node (code) at (0,0.9) {$N(u)=N(v)$};
\draw[->,pathteal,thick] (geo) -- (code)
 node[midway,right] {\quad(b) semantic separation};
\node (common) at (0,-0.25) {$\nu(N(u))=\nu(N(v))$};
\draw[->] (code) -- (common) node[midway,right] {\quad apply $\nu$};
\node (u) at (-3.8,-0.25) {$u$};
\node (v) at (3.8,-0.25) {$v$};
\draw[pathblue,thick] (u) -- (common)
 node[midway,above] {$\approx$} node[midway,below] {(a)};
\draw[pathblue,thick] (common) -- (v)
 node[midway,above] {$\approx$} node[midway,below] {(a), symmetry};
\end{tikzpicture}
\caption{A normal-form certificate proves completeness by two different
means: syntactic normalization (a) connects each trace to its chosen normal
representative; semantic separation (b) forces the two codes to agree.
The bottom row is then scoped transitivity, yielding $u\approx v$.
The arrows express logical steps, not continuity assumptions on $N$ or $\nu$.}
\label{fig:normal-form}
\end{figure}

\begin{corollary}[fiberwise normal-form completeness]\label{cor:fiberwise-normal-form}
For fixed endpoints $a,b\in X$, the same definitions and proof apply to
$T(a,b)\subseteq T$.
In particular, a normal-form certificate on $T(a,a)$ proves geometric
completeness of the based-loop comparison on that fiber.
\end{corollary}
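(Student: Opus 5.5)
The plan is to rerun the proofs of Theorem~\ref{thm:faithfulness-homeomorphism} and Theorem~\ref{thm:normal-form-completeness} verbatim on the fiber. First fix the fiber data. Put
\[
  T(a,b)=\{u\in T: s(u)=a,\ t(u)=b\},
\]
with the subspace topology from $T$. The scoped relation $\approx$ and the geometric relation defining $h$ both require equal endpoints, so each equivalence class lies in a single fiber. Hence both relations restrict to $T(a,b)$. Define
\[
  G_{\mathcal P}(a,b)=T(a,b)/{\approx},\qquad H_{\mathrm{geo}}(a,b)=T(a,b)/{\sim_{\mathrm{geo}}},
\]
each with its own quotient topology, and let $q_{ab}$ and $h_{ab}$ be the quotient maps. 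I deliberately use the fiberwise quotient topology rather than the subspace topology of the fiber inside $G_{\mathcal P}$. These need not agree, since restricting a quotient map to a saturated subspace is not automatically quotient. The statement only says that ``the same definitions'' apply, so defining the fiber quotients directly from $T(a,b)$ is legitimate and avoids that issue.

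Second, restrict the comparison. Realization soundness (Theorem~\ref{thm:realization-soundness}) shows that $h_{ab}$ is constant on $\approx$-classes in $T(a,b)$. It therefore factors as $h_{ab}=R_{ab}\circ q_{ab}$, and the quotient criterion makes $R_{ab}$ continuous. Surjectivity holds because every class in $H_{\mathrm{geo}}(a,b)$ has a representative in $T(a,b)$.

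Third, restrict the certificate. Suppose $\mathsf{NF}$, $\nu$ and $N$ are given on $T(a,b)$, with $\nu$ landing in $T(a,b)$, and satisfy clauses (a) and (b) for $u,v\in T(a,b)$. The chain $u\approx\nu(N(u))=\nu(N(v))\approx v$ then goes through unchanged, because symmetry and transitivity of $\approx$ do not leave the fiber. This gives fiberwise geometric completeness. The factorization argument of Theorem~\ref{thm:faithfulness-homeomorphism} then produces a continuous $S_{ab}$ with $S_{ab}\circ h_{ab}=q_{ab}$, so $R_{ab}$ is a homeomorphism. Taking $b=a$ gives the based-loop statement.

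The only point needing care is the topological one in the first step. The proof must never use the global quotient map $q$ restricted to $T(a,b)$, only the fiberwise quotients, and every quotient-criterion application must be made in $T(a,b)$. Since each step uses only the quotient universal property and never the ambient space, that is all that is required. If the paper wants fiberwise spaces carrying the subspace topology, an extra hypothesis such as $T(a,b)$ being open or closed and saturated would be needed, and I would flag this rather than claim it.
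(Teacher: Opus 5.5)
Your proposal is correct and matches the paper's argument. Both the scoped and geometric relations preserve endpoints, so the normal-form clauses, the chain $u\approx\nu(N(u))=\nu(N(v))\approx v$, and the quotient factorization restrict unchanged to $T(a,b)$, and your choice to give the fiber quotients their own quotient topology, rather than the subspace topology from $G_{\mathcal P}$, is the convention the paper adopts for $G_{\mathcal P}(a,a)$ in Section~8.
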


\begin{proof}
All endpoint maps and the scoped relation preserve fixed endpoint fibers, so
the two normal-form clauses and the proof of the criterion restrict to
$T(a,b)$ without change.
\end{proof}

\section{The universal presentation}

For every topological space $X$, take the primitive step space to be the
continuous interval path space $X^I$.  Its endpoint maps are evaluation at
$0$ and $1$.  Declare two parallel traces to be named generators exactly when
their realized paths are endpoint-fixed homotopic.
Thus every continuous path $\gamma$ is itself available as a one-letter
primitive trace.  If two realizations are endpoint-fixed homotopic, that
homotopy is literally one of the declared universal rules.  This is why the
universal presentation is complete in one step: it builds the ambient
homotopy relation into the presentation, rather than deriving it from a
normalizer.

The universal quotient identification below is valid for either representative
topology.  We write $T_{\mathcal U_X}$ and $G_{\mathcal U_X}$ without a
superscript for the observable choice; the trace-sensitive version is obtained
by replacing the carrier topology throughout.

\begin{lemma}[coherent-path quotient identification]
\label{lem:coherent-path-quotient-identification}
For the universal presentation, the projection
\[
  \pi:T_{\mathcal U_X}\longrightarrow X^I,
  \qquad \pi(p,\gamma)=\gamma,
\]
is a continuous quotient map.  Consequently, the quotient of
$T_{\mathcal U_X}$ by endpoint-fixed homotopy of the chosen representative is
canonically homeomorphic to the usual quotient-topologized path-class space
\[
  \Pi_1^q(X)=X^I/{\simeq_{\partial I}},
\]
with the endpoint maps retained.
\end{lemma}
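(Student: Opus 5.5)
The plan is to exhibit a continuous section of $\pi$ and then use the standard fact that a continuous surjection admitting a continuous section is a quotient map. Continuity of $\pi$ itself is immediate. In the observable topology, $\pi$ is the fifth coordinate of $\kappa_{\mathrm{obs}}$. In the trace-sensitive topology, it is the second coordinate of $\kappa_{\mathrm{tr}}$. Both are continuous by the initial-topology criterion.

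For the section, define
\[
  \sigma:X^I\longrightarrow T_{\mathcal U_X},\qquad \sigma(\gamma)=(\langle\gamma\rangle,\gamma),
\]
where $\langle\gamma\rangle\in W_1$ is the one-letter trace whose letter is $\gamma\in E^+=X^I$. I first check that $\sigma(\gamma)$ lies in $T$. Its endpoints are $\gamma(0)$ and $\gamma(1)$, and the equal-slot realization of a one-letter word is $|\langle\gamma\rangle|=\widetilde\rho(\gamma)=\gamma$, so the constant homotopy witnesses $\mathsf{Coh}$.

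Next, continuity of $\sigma$. In the trace-sensitive topology it suffices that $\gamma\mapsto\langle\gamma\rangle$ is continuous into $\mathsf{Tr}$; this holds because it is the inclusion $X^I=E^+\hookrightarrow\widetilde E=W_1\hookrightarrow\mathsf{Tr}$. In the observable topology, the code of $\sigma(\gamma)$ is $(\gamma(0),\gamma(1),1,\gamma,\gamma)$. Evaluation at $0$ and $1$ is continuous for the compact-open topology, and the length coordinate is constant, so this code is continuous. Since $\pi\circ\sigma=\mathrm{id}$, the map $\pi$ is surjective. Moreover, if $\pi^{-1}(U)$ is open then $U=\sigma^{-1}(\pi^{-1}(U))$ is open, so $\pi$ is a quotient map.

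For the identification, let $q_\partial:X^I\to\Pi_1^q(X)$ be the homotopy quotient. The composite $q_\partial\circ\pi:T\to\Pi_1^q(X)$ is a composite of quotient maps, hence a quotient map. Its fibres are exactly the classes of the relation ``equal endpoints and endpoint-fixed homotopic chosen representatives'', which is the relation defining the quotient of $T$ in the statement. Two quotient maps out of the same space with the same fibres induce a canonical homeomorphism between their targets. The endpoint maps are preserved because on both sides they are induced from evaluation of the chosen path at $0$ and $1$.

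The main point needing care is not the section argument but the bookkeeping behind it. One must confirm that in the universal presentation the letter $\gamma$ genuinely realizes to $\gamma$ under the equal-slot formula, so that $\rho$ is the identity $X^I\to X^I$. One must also confirm that the relation in the statement is defined on the chosen path $\gamma$ rather than on the trace. Once both are confirmed, the argument is a formal consequence of the universal property of quotients. It is the same for either representative topology, since only the section's continuity depends on that choice.
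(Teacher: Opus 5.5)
Your proposal is correct and follows essentially the same route as the paper. You get continuity of $\pi$ from the initial-topology coordinates, use the continuous one-letter section $\sigma(\gamma)=(\langle\gamma\rangle,\gamma)$ with $\pi\sigma=\mathrm{id}$ to force $\pi$ to be quotient, and then apply quotient factorization to the homotopy relation; you also spell out details the paper leaves implicit, such as the observable code $(\gamma(0),\gamma(1),1,\gamma,\gamma)$ and the fact that $q_\partial\circ\pi$ is a quotient map whose fibres are exactly the homotopy classes of chosen representatives.
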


\begin{proof}
For $T_{\mathrm{obs}}$, continuity of $\pi$ is immediate from the fifth
coordinate of $\kappa_{\mathrm{obs}}$.  For $T_{\mathrm{tr}}$, it is the
restriction of the second projection
$\mathsf{Tr}_{\mathcal P}\times X^I\to X^I$.  Every $\gamma\in X^I$ has the
coherent representative
\[
  \sigma(\gamma)=(\langle\gamma\rangle,\gamma),
\]
where $\langle\gamma\rangle$ is the one-letter universal trace and the
coherence is the constant endpoint-fixed homotopy.  The map $\sigma$ is
continuous on the one-letter coproduct component for $T_{\mathrm{tr}}$; it is
also continuous for $T_{\mathrm{obs}}$ because its observable code is built
continuously from $\gamma$.  Since $\pi\sigma=\mathrm{id}$, if
$\pi^{-1}(U)$ is open then
$U=\sigma^{-1}(\pi^{-1}(U))$ is open.  Thus $\pi$ is quotient.  The general quotient-factorization
lemma applied to the homotopy relation now induces mutually inverse
continuous maps between the two displayed quotient spaces.
\end{proof}

\begin{lemma}[fiberwise coherent-path quotient]
\label{lem:fiberwise-coherent-path-quotient}
Fix $a,b\in X$, and let
\[
  X^I(a,b)=\{\gamma\in X^I:\gamma(0)=a,\ \gamma(1)=b\}
\]
carry its subspace topology.  For $*=\mathrm{tr},\mathrm{obs}$, define the
coherent carrier with these fixed endpoints
\[
  T_{\mathcal U_X}^{*}(a,b)=
  \{(p,\gamma)\in T_{\mathcal U_X}^{*}:s(p)=a,\ t(p)=b\}
\]
with its own subspace topology, and define
$G_{\mathcal U_X}^{*}(a,b)$ to be its quotient by scoped equivalence, with
the quotient topology.  The restricted projection
\[
  \pi_{a,b}^{*}:T_{\mathcal U_X}^{*}(a,b)\longrightarrow X^I(a,b)
\]
is quotient.  Hence $G_{\mathcal U_X}^{*}(a,b)$ is canonically homeomorphic
to the endpoint-fixed homotopy quotient of $X^I(a,b)$.
\end{lemma}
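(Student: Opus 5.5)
The plan is to follow the proof of Lemma~\ref{lem:coherent-path-quotient-identification}, restricted to the fixed-endpoint fiber, and then to identify the two quotient relations using universal completeness. First, continuity. The restricted projection $\pi_{a,b}^{*}$ is the restriction of $\pi$ to the subspace $T_{\mathcal U_X}^{*}(a,b)$. Its image lies in $X^I(a,b)$ because coherent pairs satisfy $\gamma(0)=s(p)=a$ and $\gamma(1)=t(p)=b$. A continuous map into $X^I$ with image in the subspace $X^I(a,b)$ is continuous into that subspace, so $\pi_{a,b}^{*}$ is continuous.

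Next, the section. Restrict $\sigma(\gamma)=(\langle\gamma\rangle,\gamma)$ to $X^I(a,b)$. Since $\langle\gamma\rangle$ has source $\gamma(0)=a$ and target $\gamma(1)=b$, this lands in $T_{\mathcal U_X}^{*}(a,b)$. The restricted map is continuous for both representative topologies: it is the restriction of the continuous map $\sigma$ to a subspace, corestricted to a subspace containing its image. Then $\pi_{a,b}^{*}\sigma=\mathrm{id}$. If $(\pi_{a,b}^{*})^{-1}(U)$ is open, then $U=\sigma^{-1}\bigl((\pi_{a,b}^{*})^{-1}(U)\bigr)$ is open. So $\pi_{a,b}^{*}$ is quotient, being a continuous map with a continuous section. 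The key point is that subspace and quotient constructions are not being commuted here: the section argument is run directly on the fiber. This is why the lemma is stated separately, since the restriction of a quotient map to a saturated subspace need not be quotient in general.

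For the homeomorphism, I compare equivalence relations on $T_{\mathcal U_X}^{*}(a,b)$. Let $q_{a,b}$ denote the scoped quotient and $k:X^I(a,b)\to X^I(a,b)/{\simeq_{\partial I}}$ the homotopy quotient. I need to show that
\[
u\approx v \iff \pi_{a,b}^{*}(u)\simeq_{\partial I}\pi_{a,b}^{*}(v).
\]
Write $u=(p,\gamma)$ and $v=(p',\gamma')$.

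For the forward direction, soundness (Theorem~\ref{thm:realization-soundness}) gives $|p|\simeq|p'|$. Coherence gives $\gamma\simeq|p|$ and $\gamma'\simeq|p'|$, so $\gamma\simeq\gamma'$.

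For the converse, $\gamma\simeq\gamma'$ together with coherence gives $|p|\simeq_{\partial I}|p'|$. By the definition of the universal presentation, $\mathsf{Rule}(p,p')$ then holds as a named generator, so $p\RwEq p'$ and hence $u\approx v$.

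This yields mutually inverse set maps with $k\circ\pi_{a,b}^{*}$ factoring through $q_{a,b}$ and vice versa. Continuity in each direction follows from the quotient criterion: use $q_{a,b}$ being quotient for one direction, and $k\circ\pi_{a,b}^{*}$ being quotient (a composite of quotient maps) for the other.

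The step needing most care is the universal-rule converse. One must check that the generator is declared for parallel traces in the fiber. Parallelism holds because both traces have endpoints $a,b$.
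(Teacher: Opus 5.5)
Your proposal is correct and follows essentially the same route as the paper: you prove continuity of the restricted projection, show it is quotient directly on the fiber via the one-letter section $\sigma_{a,b}$, identify scoped equivalence with endpoint-fixed homotopy using soundness and the universal named rules, and conclude by quotient factorization. You also spell out the two directions of the relation comparison and the two continuity arguments, which the paper compresses into its final sentence.
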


\begin{proof}
The projection is continuous for either representative topology.  Its
one-letter section
\[
  \sigma_{a,b}(\gamma)=(\langle\gamma\rangle,\gamma)
\]
lands in the fixed endpoint fiber and is continuous by the same coordinate
argument as in Lemma~\ref{lem:coherent-path-quotient-identification}.
If $(\pi_{a,b}^{*})^{-1}(U)$ is open, then
$U=\sigma_{a,b}^{-1}((\pi_{a,b}^{*})^{-1}(U))$ is open.  Thus the restricted
projection is quotient without appealing to a restriction theorem for the
global projection.  In the universal presentation, scoped equivalence on this
fiber is exactly endpoint-fixed homotopy, so quotient factorization gives the
claimed homeomorphism.
\end{proof}

Two ingredients are needed for the universal identification
(Figure~\ref{fig:universal-section}).  The continuous
one-letter section makes representative projection a quotient map.  The
universal rewrite rules then identify scoped equivalence with endpoint-fixed
homotopy.  A section alone would not identify an arbitrary scoped quotient
with the ambient path-class space; the agreement of relations is essential.

\begin{theorem}[universal completion]\label{thm:universal-completion}
For the universal presentation $\mathcal U_X$, scoped rewrite equality of
parallel traces is equivalent to endpoint-fixed homotopy of their realizations.
Consequently,
\[
  G_{\mathcal U_X}\cong \Pi_1^q(X)
\]
as abstract groupoids, and the comparison is a homeomorphism on arrow spaces.
\end{theorem}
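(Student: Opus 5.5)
The proof splits into a relation-level statement and a topological identification. For the relation-level part, I would first take $p\RwEq q$ and apply realization soundness (Theorem~\ref{thm:realization-soundness}) to obtain $|p|\simeq_{\partial I}|q|$. Conversely, if $|p|\simeq_{\partial I}|q|$ for parallel traces, then by the very definition of $\mathcal U_X$ the pair $(p,q)$ is an instance of $\mathsf{Rule}_{\mathcal U_X}$. The required homotopy witness is the given homotopy itself, so $p\RwEq q$ follows by a single named-generator step. This shows that scoped rewrite equality coincides with endpoint-fixed homotopy of realizations.

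Next I would pass from traces to coherent representatives. For $u=(p,\gamma)$ and $v=(q,\delta)$ in $T_{\mathcal U_X}$, coherence gives $\gamma\simeq_{\partial I}|p|$ and $\delta\simeq_{\partial I}|q|$. Hence $u\approx v$ holds if and only if $\gamma\simeq_{\partial I}\delta$ with matching endpoints, that is, if and only if $h(u)=h(v)$. So the universal presentation is geometrically complete, and Theorem~\ref{thm:faithfulness-homeomorphism} gives a homeomorphism $G_{\mathcal U_X}\cong H_{\mathrm{geo}}$. This homeomorphism is also a groupoid isomorphism for the final-domain structures. Alternatively, the trivial normal-form certificate $N=h$, $\nu=$ choose a one-letter representative, would give the same conclusion through Theorem~\ref{thm:normal-form-completeness}.

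Then I would identify $H_{\mathrm{geo}}$ with $\Pi_1^q(X)$. The relation defining $H_{\mathrm{geo}}$ is exactly the pullback along $\pi$ of endpoint-fixed homotopy on $X^I$. By Lemma~\ref{lem:coherent-path-quotient-identification}, $\pi$ is a quotient map with continuous section $\sigma$. The induced map $H_{\mathrm{geo}}\to\Pi_1^q(X)$ is therefore a continuous bijection: it is surjective via $\sigma$ and injective by the relation agreement. Its inverse, $[\gamma]\mapsto h(\sigma(\gamma))$, is continuous by the quotient criterion applied to the continuous map $h\circ\sigma$ on $X^I$. Composing the two homeomorphisms yields the arrow-space homeomorphism $G_{\mathcal U_X}\cong\Pi_1^q(X)$, and both maps commute with the endpoint maps.

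Finally I would check the groupoid operations. Source, target and identities are immediate, since the constant path $c_a$ realizes the empty trace $1_a$. Reversal sends $(p,\gamma)$ to $(p^{-1},\gamma^{-1})$, which maps to $[\gamma]^{-1}$. Composition sends $\gamma\star_{\ell(p),\ell(q)}\delta$, and by Lemma~\ref{lem:weighted-concat-invariance} this is homotopic to $\gamma\star_{1,1}\delta$, the standard product in $\Pi_1^q(X)$. The zero-length hypotheses of that lemma hold here because empty traces have constant realizations and coherence supplies the required null-homotopies. I expect this zero-weight bookkeeping to be the main point needing care, together with making sure that the abstract isomorphism is read on the ordinary set of composable pairs and makes no continuity claim for the ordinary pullback topology.
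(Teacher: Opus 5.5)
Your proposal is correct and follows the paper's route: soundness plus the one-step universal generator give the relation equivalence, and then geometric completeness, the faithfulness/homeomorphism theorem, and the coherent-path quotient identification lemma give the topological and groupoid identification. You add detail the paper leaves implicit, namely the explicit continuous inverse $[\gamma]\mapsto h(\sigma(\gamma))$ and the check of the operations including the zero-weight hypotheses of the weighted concatenation lemma, but the argument is the same.
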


\begin{proof}
Soundness is the general soundness theorem.  Conversely, an endpoint-fixed
homotopy is itself one of the named universal generators, so it is a scoped
rewrite in one step.  The equivalence of quotient relations follows after
including endpoint equality.  The coherent-path quotient identification lemma
supplies the missing topological identification with $\Pi_1^q(X)$, and the
geometric completeness theorem then gives the homeomorphism and groupoid
isomorphism.
\end{proof}

\begin{corollary}[collapse of the two universal topologies]
\label{cor:universal-topology-collapse}
For the universal presentation, the comparison map from
Proposition~\ref{prop:trace-sensitive-refinement}
\[
  J_{\mathcal U_X}:G_{\mathcal U_X}^{\mathrm{tr}}
    \longrightarrow G_{\mathcal U_X}^{\mathrm{obs}}
\]
is a homeomorphism.  More precisely, both spaces are canonically
homeomorphic to $\Pi_1^q(X)$, and these homeomorphisms identify
$J_{\mathcal U_X}$ with the identity on path classes.
\end{corollary}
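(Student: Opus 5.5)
The strategy is to factor $J_{\mathcal U_X}$ through $\Pi_1^q(X)$ using two homeomorphisms, one built from each representative topology, and then check that the resulting square commutes at the level of representatives. For $*=\mathrm{tr},\mathrm{obs}$, Lemma~\ref{lem:coherent-path-quotient-identification} says the projection $\pi^*:T^*_{\mathcal U_X}\to X^I$, $(p,\gamma)\mapsto\gamma$, is a quotient map; it has the continuous one-letter section $\sigma$, and this holds in both topologies. By Theorem~\ref{thm:universal-completion}, scoped equivalence on $T_{\mathcal U_X}$ coincides with equality of endpoints together with endpoint-fixed homotopy of the chosen representatives. (Coherence makes $\gamma\simeq|p|$ and $\delta\simeq|q|$, so $p\RwEq q$ holds iff $\gamma\simeq\delta$.) Hence the composite $q_\Omega^{X}\circ\pi^*:T^*_{\mathcal U_X}\to\Pi_1^q(X)$ is a quotient map. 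A composite of quotient maps is quotient, and its fibres are exactly the $\approx$-classes. The quotient-factorization argument then produces homeomorphisms
\[
  h_*:G^{*}_{\mathcal U_X}\longrightarrow\Pi_1^q(X),\qquad h_*\,q_*=q^X\circ\pi^*,
\]
where $q^X:X^I\to\Pi_1^q(X)$ is the path-class quotient map. These $h_*$ are the canonical homeomorphisms referred to in the statement.

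Next I verify $h_{\mathrm{obs}}\circ J_{\mathcal U_X}=h_{\mathrm{tr}}$. It suffices to test this after precomposing with the surjection $q_{\mathrm{tr}}$. Using $J\,q_{\mathrm{tr}}=q_{\mathrm{obs}}\,i_T$ from Proposition~\ref{prop:trace-sensitive-refinement}, the left side becomes $q^X\pi^{\mathrm{obs}}i_T$. Since $i_T$ is the identity on the underlying set, $\pi^{\mathrm{obs}}i_T=\pi^{\mathrm{tr}}$ as functions. So both sides equal $q^X\pi^{\mathrm{tr}}$. This identity is exactly the claim that, under the two identifications, $J_{\mathcal U_X}$ becomes the identity on path classes.

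The homeomorphism claim follows at once: $J_{\mathcal U_X}=h_{\mathrm{obs}}^{-1}\circ h_{\mathrm{tr}}$ is a composite of homeomorphisms. As a cross-check on the inverse, $J^{-1}$ is induced from the continuous map $q_{\mathrm{tr}}\circ\sigma\circ\pi^{\mathrm{obs}}:T_{\mathrm{obs}}\to G^{\mathrm{tr}}$. This map is constant on $\approx$-classes, because $\sigma\pi^{\mathrm{obs}}(u)\approx u$ by the universal rules. It is continuous because $\sigma:X^I\to T_{\mathrm{tr}}$ is continuous into the trace-sensitive topology.

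The main obstacle is making sure the section is continuous into $T_{\mathrm{tr}}$, not merely into $T_{\mathrm{obs}}$. Concretely, $\gamma\mapsto\langle\gamma\rangle$ must be continuous into the one-letter stratum $W_1\cong\widetilde E=X^I\sqcup X^I$ of the coproduct. It is, because the step space is $X^I$ itself with its compact-open topology, so the map is the coprojection into $E^+$. This step is essential: it is the whole reason the trace-sensitive topology collapses here, unlike in Example~\ref{ex:quotient-topology-separation}.
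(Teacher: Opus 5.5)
Your proposal is correct and follows the paper's route. Like the paper, you identify each universal quotient with $\Pi_1^q(X)$ via the one-letter section and the universal rules, check on representatives that both identifications are induced by the chosen path $\gamma$, and conclude $J_{\mathcal U_X}=h_{\mathrm{obs}}^{-1}\circ h_{\mathrm{tr}}$; your direct description of $J_{\mathcal U_X}^{-1}$ as the map induced by $q_{\mathrm{tr}}\circ\sigma\circ\pi^{\mathrm{obs}}$ is a valid extra check that the paper does not spell out.
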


\begin{proof}
The coherent-path quotient identification lemma applies to both representative
topologies and identifies each universal scoped quotient with
$\Pi_1^q(X)$.  The quotient maps are induced by the same representative
realization $\gamma$, so the two identifications commute with the identity on
path classes.  Hence the comparison map in
Proposition~\ref{prop:trace-sensitive-refinement}
is the composite of one homeomorphism with the inverse of the other, and is
therefore a homeomorphism.
\end{proof}

\begin{figure}[htbp]
\centering
\begin{tikzcd}[column sep=5em,row sep=4em,
 every label/.append style={font=\normalsize},
 every cell/.append style={font=\large}]
 X^I \arrow[r,shift left=1ex,"\sigma"] \arrow[d,"q_I"']
 & T_{\mathcal U_X}^{*} \arrow[l,shift left=1ex,"\pi"] \arrow[d,"q_*"] \\
 \Pi_1^q(X) \arrow[r,shift left=1ex,"\overline\sigma"]
 & G_{\mathcal U_X}^{*} \arrow[l,shift left=1ex,"\overline\pi"]
\end{tikzcd}

\smallskip
\(\displaystyle
 \pi\sigma=\mathrm{id}_{X^I},\qquad
 G_{\mathcal U_X}^{\mathrm{tr}}\cong\Pi_1^q(X)
 \cong G_{\mathcal U_X}^{\mathrm{obs}}.
\)
\caption{The universal one-letter section supplies the topology, and the
universal rules supply the equivalence relation.  Here
$*\in\{\mathrm{tr},\mathrm{obs}\}$ and
$\sigma(\gamma)=(\langle\gamma\rangle,\gamma)$; $q_I$ is the homotopy
quotient map.  The lower maps are induced by $\sigma$ and $\pi$.
The upper maps are continuous, with $\pi\sigma=\mathrm{id}$, so $\pi$ is
quotient.  Agreement of scoped and homotopy relations makes the lower maps
mutually inverse homeomorphisms.  Under these identifications, $J_{\mathcal U_X}$
is the identity on path classes.}
\label{fig:universal-section}
\end{figure}

\begin{remark}
The multiplication in this completion is understood on the final composable
domain.  This is the unconditional topological statement.  For compact final
composable domains with Hausdorff ordinary domains, the compact-Hausdorff
theorem above identifies this multiplication with the ordinary pullback
operation.
\end{remark}

\section{Circle and torus presentations}

These examples use traces whose endpoints are the chosen base point.
Throughout this section, $G_{\mathcal P}(a,a)$ means the quotient of the
coherent based-loop carrier $T(a,a)$ with its own quotient topology.
It is not assumed to have the subspace topology inherited from the global
arrow quotient.  We first reduce words by the scoped rules, then use
winding numbers to distinguish the resulting normal forms.

\subsection{The additive circle: a finite generator and its completion}

Let $\mathbb T=\mathbb R/\mathbb Z$ carry the quotient topology.  The minimal
presentation has one oriented primitive step $a$ at the base point $0$, with
realization
\[
  a(t)=[t]\in\mathbb T.
\]
The formal reversal supplies $a^{-1}$; no integer-indexed primitive step is
assumed.  A based trace is therefore a finite signed word in $a$ and
$a^{-1}$.  No additional cancellation generator is needed: the structural
inverse coherences in Definition~\ref{def:scoped-rewrite-equality} instantiate to
\(a;a^{-1}\RwEq1_0\) and \(a^{-1};a\RwEq1_0\), and are closed under the
scoped congruence.

For a signed word $w$, let $\operatorname{exp}(w)$ be the number of positive
letters minus the number of negative letters.  Let $c_n$ be the canonical
word consisting of $n$ copies of $a$ when $n\geq 0$ and $-n$ copies of
$a^{-1}$ when $n<0$.

\begin{lemma}[finite-generator reduction]
Every signed word $w$ has a finite scoped derivation
\[
  w\RwEq c_{\operatorname{exp}(w)}.
\]
If two canonical words have endpoint-fixed homotopic realizations, then their
exponents are equal.  In particular, scoped-equivalent canonical words have
equal exponents by soundness.
\end{lemma}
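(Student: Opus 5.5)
The lemma has three parts, and I would prove them separately: a syntactic reduction, a winding-number invariant for canonical words, and a one-line appeal to soundness.

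\emph{Reduction.} I argue by strong induction on the length $\ell(w)$. If $w$ contains no adjacent pair $a;a^{-1}$ or $a^{-1};a$, then all its letters have the same sign. In that case $w$ is literally $c_{\operatorname{exp}(w)}$, and reflexivity suffices; this includes the empty word $1_0=c_0$.

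Otherwise write $w=u;x;x^{-1};v$ with $x\in\{a,a^{-1}\}$. Using the associativity coherences, I rebracket $w$ to $u;((x;x^{-1});v)$. The inverse coherence gives $x;x^{-1}\RwEq1_0$, which I apply through congruence under concatenation. The unit coherences then remove $1_0$. Since $x^{-1}$ for $x=a^{-1}$ is $(a^{-1})^{-1}$, I use the double-inverse coherence $(a^{-1})^{-1}\RwEq a$ to keep the letters in the signed-word form.

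This produces $w\RwEq u;v$ with $\ell(u;v)=\ell(w)-2$ and $\operatorname{exp}(u;v)=\operatorname{exp}(w)$. The induction hypothesis and transitivity finish this part. Because traces are flat words, associativity and unit are literal equalities, so these steps are largely bookkeeping.

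\emph{Exponent invariance.} This is the genuinely geometric step. I use the covering map $\mathbb R\to\mathbb T$ and unique path lifting \cite{Hatcher02}.

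First, for $n\geq1$, the equal-slot realization $|c_n|$ lifts from $0$ to the path $t\mapsto nt$, because on slot $i$ it is $[nt-i+1]=[nt]$. Its lifted endpoint is therefore $n$. For $n<0$, the realization is $t\mapsto[-(-n)t]$, whose lift ends at $n$. The path $|c_0|$ is constant, with lift ending at $0$.

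Next, an endpoint-fixed homotopy between $|c_m|$ and $|c_n|$ lifts, by the homotopy lifting property, to a homotopy in $\mathbb R$ that starts at $0$. The right edge of the lifted homotopy takes values in the discrete fiber $\mathbb Z$ and is connected, so it is constant. Hence $m=n$.

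\emph{Consequence.} If two canonical words satisfy $c_m\RwEq c_n$, then realization soundness (Theorem~\ref{thm:realization-soundness}) gives $|c_m|\simeq_{\partial I}|c_n|$, and the previous part yields $m=n$.

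I expect the invariance step to be the main obstacle, because it needs the standard covering-space lifting theorem. Everything else is careful but routine use of the structural coherences of Definition~\ref{def:scoped-rewrite-equality}.
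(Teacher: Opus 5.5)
Your proof is correct and takes essentially the same route as the paper: you delete adjacent inverse pairs using the structural cancellation coherence under concatenation congruence, terminate because length strictly decreases while $\operatorname{exp}$ is invariant, and separate canonical words by the endpoint of their lift to $\mathbb R$. You also make explicit the computation $|c_n|(t)=[nt]$ and the homotopy-lifting step, which the paper states in a single sentence.
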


\begin{proof}
Scan the word from left to right while retaining a reduced signed word.  If
the next sign agrees with the last retained sign, append it.  If the signs
differ, delete the adjacent inverse pair using the corresponding structural
coherence inside the surrounding prefix and suffix, using concatenation
congruence.  Each deletion strictly decreases the word length, so the
procedure terminates.  A terminal reduced word has no adjacent sign change
and hence consists entirely of positive letters or entirely of negative
letters.  The integer $\operatorname{exp}$ is unchanged by every deletion,
so the terminal word is exactly $c_{\operatorname{exp}(w)}$.  The same
invariant is the endpoint of the lift to $\mathbb R$; therefore endpoint-fixed
homotopy of canonical words forces equality of their exponents.
\end{proof}

For example, the cancellation mechanism gives the concrete scoped reduction
\[
  a;a^{-1};a^{-1};a;a
  \;\RwEq\;a^{-1};a;a
  \;\RwEq\;a.
\]
The first step cancels the initial $a;a^{-1}$ pair and the second cancels the
remaining $a^{-1};a$ pair.

The reduction lemma is the normal-form certificate from
Section~6: the reduction derivation gives clause (a), and the lifted endpoint
gives clause (b).  Thus the
minimal circle presentation is geometrically complete on its based loop
fiber.

It is also useful to package the derived integer normal forms as a completed
presentation with a primitive step $\lambda_n$ for each $n\in\Z$,
\[
  \lambda_n(t)=[tn].
\]
Its named rules are the derived integer rules
\[
  (n;m)\RwEq(n+m),\qquad 1_0\RwEq 0,
  \qquad (n)^{-1}\RwEq(-n).
\]
This integer-indexed presentation packages normal forms already derived
from the single generator.  It is not an assumption used to prove the
finite-generator classification.

Let $L$ be the endpoint-fixed homotopy quotient of the based circle loop space,
with its quotient topology.  The map sending a loop class to its winding
number and the map
sending $n$ to the standard loop class are inverse maps.
Under the completeness conclusion below, write
\[
  S_0:L\longrightarrow G_{\mathcal P}(0,0)
\]
for the inverse of the realization comparison on the based-loop fiber.

\begin{theorem}[circle winding classification]
\label{thm:circle-winding}
The finite-generator scoped presentation is geometrically complete on the
based loop fiber, and there is an equivalence
\[
  L\simeq\Z.
\]
Moreover, the inverse realization comparison
\[
  S_0:L\xrightarrow{\ \cong\ }G_{\mathcal P}(0,0)
\]
is a homeomorphism.  In particular the scoped presentation is
nondegenerate: the images of the zero and one winding loops are distinct.
\end{theorem}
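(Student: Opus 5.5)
The proof assembles the finite-generator reduction lemma, Corollary~\ref{cor:fiberwise-normal-form}, and the faithfulness theorem. The last two are applied on the based-loop fiber $T(0,0)$. The one genuinely new ingredient is the identification $L\simeq\Z$ and its compatibility with the comparison map.

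\textbf{Step 1: a normal-form certificate on $T(0,0)$.} Take $\mathsf{NF}=\Z$ and $N(p,\gamma)=\operatorname{exp}(p)$. For $\nu(n)$, take $(c_n,|c_n|)$, which is coherent by the constant homotopy. Clause (a) is the reduction lemma: $p\RwEq c_{\operatorname{exp}(p)}$, and changing the coherent representative is absorbed by the definition of $\approx$, since only the trace coordinate matters. For clause (b), suppose $h(u)=h(v)$. Coherence makes the realizations homotopic: $|p|\simeq\gamma\simeq\delta\simeq|q|$. Soundness then gives $|c_{\operatorname{exp}(p)}|\simeq|c_{\operatorname{exp}(q)}|$, and the second half of the reduction lemma yields equal exponents.

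I expect clause (b) to be the main obstacle. It needs the classical lifting fact that the endpoint of the lift to $\mathbb R$ of a loop starting at $0$ is invariant under endpoint-fixed homotopy. I would invoke homotopy lifting for $\mathbb R\to\mathbb T$, citing \cite{Hatcher02}. I would also check that the lift of $|c_n|$ ends at $n$; this is immediate, since each slot lifts to a unit or negative-unit segment. Corollary~\ref{cor:fiberwise-normal-form} and Theorem~\ref{thm:normal-form-completeness} then give geometric completeness and a homeomorphism $R_{\mathcal P}:G_{\mathcal P}(0,0)\cong H_{\mathrm{geo}}(0,0)$. The fiberwise version of this homeomorphism follows by the same quotient-factorization argument, since both spaces are quotients of the same fiber carrier.

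\textbf{Step 2: identifying $H_{\mathrm{geo}}(0,0)$ with $L$.} The projection $T(0,0)\to\Omega(\mathbb T,0)$ is continuous, so it descends to a continuous injection $H_{\mathrm{geo}}(0,0)\to L$. This injection is surjective: by the winding bijection each class in $L$ contains a standard loop, and that loop is realized by $c_n$. To get a homeomorphism, I would show that the inverse $L\to H_{\mathrm{geo}}(0,0)$ is continuous. Continuity follows from the quotient criterion if there is a continuous map $\Omega\to H_{\mathrm{geo}}(0,0)$ compatible with homotopy. I would use $\gamma\mapsto h(c_{w(\gamma)},\gamma)$, where $w$ is the winding number.

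The winding map $w:\Omega\to\Z$ is locally constant in the compact-open topology, because nearby loops are homotopic. On each open set $w^{-1}(n)$, the map $\gamma\mapsto(c_n,\gamma)$ is continuous into $T_{\mathrm{obs}}$. Indeed, its observable code $(0,0,|n|,|c_n|,\gamma)$ is continuous in $\gamma$, and the pair is coherent because $w(\gamma)=n$. Pasting over this open cover gives continuity. Thus $L\cong H_{\mathrm{geo}}(0,0)$, and composing with $R_{\mathcal P}^{-1}$ gives the homeomorphism $S_0$.

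\textbf{Step 3: the remaining claims.} The set-level equivalence $L\simeq\Z$ is the stated winding bijection. Nondegeneracy follows because $S_0$ is injective and the zero and one winding loops have distinct winding numbers, hence distinct classes in $L$; their images in $G_{\mathcal P}(0,0)$ are therefore distinct. At the level of traces, $c_0=1_0$ and $c_1=a$ receive different codes under $N$, so no scoped derivation can relate them.
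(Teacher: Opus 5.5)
Your proof is correct, but it organizes the topological half differently from the paper. The paper never passes through $H_{\mathrm{geo}}(0,0)$. It defines $S_0$ directly as the composite $L\to\Z\to T(0,0)\to G_{\mathcal P}(0,0)$: the winding number, then $n\mapsto(c_n,\lambda_n)$, then the scoped quotient map. Once winding descends continuously to $L$, continuity of $S_0$ is free because $\Z$ is discrete, and the inverse is just projection to the chosen loop.

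You instead factor through the fiberwise faithfulness theorem, $G_{\mathcal P}(0,0)\cong H_{\mathrm{geo}}(0,0)$. You then build a separate homeomorphism $H_{\mathrm{geo}}(0,0)\cong L$ by pasting $\gamma\mapsto h(c_{w(\gamma)},\gamma)$ over the open cover $\{w^{-1}(n)\}_{n\in\Z}$ of $\Omega$, keeping the loop itself as the coherent representative. This adds a layer, but it reuses the general machinery and yields a small bonus. On this fiber, the continuous injection $H_{\mathrm{geo}}\to\Pi_1^q$, which the paper declines to call an embedding in general, is a homeomorphism onto $L$.

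Both routes depend on the same two geometric facts, and you assert them rather than prove them. First, equal winding implies endpoint-fixed homotopy. You need this for coherence of $(c_{w(\gamma)},\gamma)$, for surjectivity onto $L$, and for $L\simeq\Z$; the paper gets it by straight-line interpolation of lifts. Second, $w$ is locally constant on $\Omega$. The paper proves this by identifying the compact-open topology with the uniform one. For loops less than $1/2$ apart, it lifts their difference to $d:I\to(-1/2,1/2)$ with $d(0)=d(1)=0$ and uses $\gamma(t)+[u\,d(t)]$. You should include these arguments, since local constancy is the real content of the continuity claim.
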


\begin{proof}
A based loop has a unique lift to $\mathbb R$ starting at $0$.
Its endpoint is an integer, and endpoint-fixed homotopy preserves that
integer.  Straight-line interpolation of the lift with $t\mapsto tn$
shows that every loop of winding $n$ is homotopic to $\lambda_n$.
Thus winding identifies $L$ with $\mathbb Z$.

The preceding reduction lemma sends every trace to $c_n$, where $n$ is
its winding number, so the normal-form criterion gives geometric
completeness.  Explicitly, $S_0$ sends a loop class of winding $n$ to
$q(c_n,\lambda_n)$.  Every scoped class has this form by normalization,
and soundness prevents different windings from giving the same scoped
class.  Hence $S_0$ is inverse to realization.  In particular, the zero
and one winding classes are distinct.

It remains to prove continuity of $S_0$.  Give $\mathbb T$ its standard
quotient metric.  If two based loops $\gamma,\delta$ are uniformly less
than $1/2$ apart, their difference has a unique continuous representative
$d:I\to(-1/2,1/2)$, with $d(0)=d(1)=0$.
Then $\gamma(t)+[u\,d(t)]$ joins them relative to endpoints.
Compact-open and uniform topologies agree here since $I$ is compact and
$\mathbb T$ is metric.  Winding is therefore locally constant and descends
continuously from $L$ to discrete $\Z$.
Choosing $(c_n,\lambda_n)$ for each $n$ gives a continuous map from
discrete $\Z$ to the coherent carrier and then to its scoped quotient.
This constructs $S_0$ continuously; projection to the chosen loop
supplies its continuous inverse.  No restriction-of-quotient assertion is
needed.  This also proves discreteness of the based loop-class quotient,
not of the loop space itself.
\end{proof}

The continuity argument is part of the mathematical exposition.  It is not
claimed as a field of the additive classification selected by the Palomar
certificate.

\subsection{The torus and its two generators}

Let
\[
  \mathbb T^2=(\mathbb R/\mathbb Z)\times(\mathbb R/\mathbb Z)
\]
with base point $(0,0)$.  Use two finite oriented generators
\[
  a(t)=([t],[0]),
  \qquad b(t)=([0],[t]).
\]
The scoped rules are the inverse cancellations for $a$ and $b$, together with
the commuting square
\[
  a;b\RwEq b;a.
\]
The commuting square is sound: it is the projection modulo $\mathbb Z^2$ of
the evident square homotopy in $\mathbb R^2$.  The other signed commutations
are derived, rather than added as generators.  For example, with unit
insertions and structural cancellations displayed schematically,
\[
\begin{aligned}
a^{-1};b
  &\RwEq a^{-1};(b;a);a^{-1}
   \RwEq a^{-1};(a;b);a^{-1}
   \RwEq b;a^{-1},\\
a;b^{-1}
  &\RwEq b^{-1};(b;a);b^{-1}
   \RwEq b^{-1};(a;b);b^{-1}
   \RwEq b^{-1};a,\\
b^{-1};a^{-1}
  &\RwEq a^{-1};b^{-1}
\end{aligned}
\]
and symmetry gives the reversed orientations.  Thus closure under the
scoped congruence supplies all four signed commuting rules.

For a signed word $w$, let $m(w)$ and $n(w)$ be its signed counts of $a$ and
$b$.  Move every $b$-letter to the right using the commuting square and then
perform the two cancellation algorithms from the circle.  This is a finite
terminating derivation
\[
  w\RwEq a^{m(w)};b^{n(w)}.
\]
The pair $(m(w),n(w))$ is invariant under every scoped rule.
Each signed commutation strictly decreases the number of pairs in which a
$b^{\pm1}$-letter occurs to the left of an $a^{\pm1}$-letter.
For a short worked instance, sorting and cancellation give
\[
  b;a;b^{-1};a;b
  \;\RwEq\;a;b;b^{-1};a;b
  \;\RwEq\;a;a;b
  \;\RwEq\;a^2;b.
\]
The first step uses the commuting square, the second cancels $b;b^{-1}$,
and the last step displays the resulting $(2,1)$ winding pair.

\begin{lemma}[simultaneous and sequential torus representatives]
\label{lem:torus-sequential}
Let
\[
  d_{m,n}(t)=([mt],[nt])
\]
be the simultaneous standard loop, and let $|a^m;b^n|$ be the equal-slot
realization of the sequential word.  Then
\[
  d_{m,n}\simeq_{\partial I}|a^m;b^n|.
\]
\end{lemma}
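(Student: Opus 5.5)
Both loops lift to paths in $\mathbb R^2$ starting at the origin and ending at the same point $(m,n)$. I will therefore build the homotopy upstairs by straight-line interpolation in the convex space $\mathbb R^2$ and project it to $\mathbb T^2$ through the covering map $\mathbb R^2\to\mathbb T^2$. This is the same mechanism used for the circle in Theorem~\ref{thm:circle-winding}.

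\emph{Step 1: lift the simultaneous loop.} Write $\widetilde d_{m,n}(t)=(mt,nt)$. This is a continuous path in $\mathbb R^2$ from $(0,0)$ to $(m,n)$, and it projects to $d_{m,n}$.

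\emph{Step 2: lift the sequential word.} Let $k=|m|+|n|$. If $k=0$, both loops are the constant loop and there is nothing to prove. Otherwise, by the equal-slot definition of realization, $|a^m;b^n|$ traverses each of its $k$ letters on an interval of length $1/k$. Each letter $a^{\pm1}$ is realized as $t\mapsto([\pm t],[0])$, and each letter $b^{\pm1}$ as $t\mapsto([0],[\pm t])$; this uses the reversal convention $\widetilde\rho(e^-)(t)=\rho(e)(1-t)$ together with $[1-t]=[-t]+[1]=[-t]$ in $\mathbb R/\mathbb Z$. The sequential loop therefore lifts to the explicit piecewise-linear path $\widetilde w$. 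On $[0,|m|/k]$ it moves linearly from $(0,0)$ to $(m,0)$, and on $[|m|/k,1]$ it moves linearly from $(m,0)$ to $(m,n)$. The lift of each letter starts where the previous one ended, so the pieces agree at the slot boundaries and the pasting lemma (as in Lemma~\ref{lem:trace-realization-continuous}) makes $\widetilde w$ continuous. Composing with the projection recovers $|a^m;b^n|$.

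\emph{Step 3: interpolate and project.} Define
\[
  H(u,t)=\bigl[(1-u)\,\widetilde d_{m,n}(t)+u\,\widetilde w(t)\bigr]\in\mathbb T^2 .
\]
This map is continuous. At $u=0$ it equals $d_{m,n}$ and at $u=1$ it equals $|a^m;b^n|$. For every $u$, the value at $t=0$ is $[(0,0)]$ and the value at $t=1$ is $[(m,n)]=[(0,0)]$. Hence $H$ is an endpoint-fixed homotopy.

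The only point that needs care is the bookkeeping in Step 2. One must check that the lifted letter paths really concatenate to $\widetilde w$ with exactly the displayed breakpoints, including the signs produced by inverse letters and the degenerate cases $m=0$ or $n=0$. Once that check is done, convexity of $\mathbb R^2$ does the rest.
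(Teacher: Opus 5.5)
Your proof is correct and follows the paper's own argument: lift both loops to $\mathbb R^2$ starting at the origin, note that both lifts end at $(m,n)$, interpolate along straight lines, and project through the covering map $\mathbb R^2\to\mathbb T^2$. You also check explicitly that the piecewise-linear path lifts the equal-slot realization, using $[1-t]=[-t]$ for inverse letters and treating the degenerate cases; the paper leaves this verification implicit.
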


\begin{proof}
Choose lifts of both loops to paths in $\mathbb R^2$ beginning at $(0,0)$.
The lift of $d_{m,n}$ is $t\mapsto(mt,nt)$, while the lift of the sequential
equal-slot realization is piecewise linear.  Both lifts end at $(m,n)$.
The straight-line interpolation between these two lifted paths is therefore
fixed at both endpoints.  Composing it with the covering map
$\mathbb R^2\to\mathbb T^2$ gives the required endpoint-fixed homotopy.
\end{proof}

Let $L_{\mathbb T^2}$ be the endpoint-fixed homotopy quotient of the based
loop space of $\mathbb T^2$, with its quotient topology.  Under geometric
completeness, write
\[
  S_{\mathbb T^2}:L_{\mathbb T^2}\longrightarrow
  G_{\mathcal P}((0,0),(0,0))
\]
for the inverse of the realization comparison on the based-loop fiber.

\begin{theorem}[torus winding classification]
\label{thm:torus-winding}
For the torus $\mathbb T^2$, the quotient $L_{\mathbb T^2}$ satisfies
\[
  L_{\mathbb T^2}\cong\Z\times\Z.
\]
The finite-generator scoped presentation is geometrically complete on the
based loop fiber.  Moreover, the inverse realization comparison
\[
  S_{\mathbb T^2}:L_{\mathbb T^2}\xrightarrow{\ \cong\ }
  G_{\mathcal P}((0,0),(0,0))
\]
is a homeomorphism.
\end{theorem}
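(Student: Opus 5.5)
The proof follows the circle argument (Theorem~\ref{thm:circle-winding}) in three steps: classify $L_{\mathbb T^2}$, prove completeness by a normal-form certificate, and then prove continuity of $S_{\mathbb T^2}$ through local constancy of the winding pair.

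\textbf{Step 1: the classification $L_{\mathbb T^2}\cong\Z\times\Z$.}
A based loop $\gamma$ in $\mathbb T^2$ has a unique lift $\tilde\gamma$ to $\mathbb R^2$ starting at $(0,0)$. Its endpoint lies in $\Z^2$, and endpoint-fixed homotopies lift to homotopies with a fixed endpoint. So the winding pair $w(\gamma)=\tilde\gamma(1)$ is well defined on $L_{\mathbb T^2}$.

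Straight-line interpolation in $\mathbb R^2$ between $\tilde\gamma$ and $t\mapsto(mt,nt)$ shows that $\gamma\simeq_{\partial I}d_{m,n}$, where $(m,n)=w(\gamma)$. The pairs $(m,n)\mapsto[d_{m,n}]$ and $w$ are therefore inverse bijections.

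\textbf{Step 2: a normal-form certificate on $T((0,0),(0,0))$.}
I would apply Corollary~\ref{cor:fiberwise-normal-form} with $\mathsf{NF}=\Z^2$ and normalizer $N(p,\gamma)=(m(p),n(p))$, the signed letter counts. The representative map is
\[
  \nu(m,n)=\bigl(a^m;b^n,\ |a^m;b^n|\bigr).
\]
Clause (a) comes from the sorting-and-cancellation derivation $p\RwEq a^{m(p)};b^{n(p)}$ displayed before Lemma~\ref{lem:torus-sequential}. This handles the trace coordinate; the chosen path $\gamma$ does not matter, since $\approx$ only compares traces and endpoints.

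For clause (b), the lift of $|p|$ ends at $(m(p),n(p))$, because each letter $a^{\pm1}$ or $b^{\pm1}$ contributes a unit step in the corresponding coordinate. Coherence gives $\gamma\simeq_{\partial I}|p|$. Hence $h(u)=h(v)$ forces equal lifted endpoints, and so $N(u)=N(v)$.

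Theorem~\ref{thm:normal-form-completeness}, in its fiberwise form, then gives geometric completeness and a homeomorphism $G_{\mathcal P}((0,0),(0,0))\cong H_{\mathrm{geo}}$ on the fiber. Together with Lemma~\ref{lem:torus-sequential}, every loop class of winding $(m,n)$ is represented by $\nu(m,n)$. So the realization comparison is a bijection onto $L_{\mathbb T^2}$, with inverse
\[
  S_{\mathbb T^2}[\gamma]=q\bigl(\nu(w(\gamma))\bigr).
\]

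\textbf{Step 3: continuity of $S_{\mathbb T^2}$ (the main obstacle).}
The comparison $H_{\mathrm{geo}}\to L_{\mathbb T^2}$ is only known to be a continuous injection, so $S_{\mathbb T^2}$ must be built directly. The plan is to show that $w$ is locally constant on the based loop space $\Omega\mathbb T^2$ with the compact-open topology.

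Use the product quotient metric on $\mathbb T^2$. If two loops are uniformly within $1/2$ in each coordinate, their difference has a unique continuous lift $d:I\to(-1/2,1/2)^2$ with $d(0)=d(1)=0$. Then $\gamma+[u\,d]$ is an endpoint-fixed homotopy between them, so they have the same winding pair. Since $I$ is compact and $\mathbb T^2$ is metric, the compact-open and uniform topologies agree, and uniform neighborhoods are open. Thus $w$ is continuous into discrete $\Z^2$, and it descends continuously to $L_{\mathbb T^2}$ by the quotient criterion.

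Any map out of discrete $\Z^2$ is continuous, so $S_{\mathbb T^2}$ is the continuous composite
\[
  L_{\mathbb T^2}\xrightarrow{\ w\ }\Z^2\xrightarrow{\ \nu\ }T((0,0),(0,0))\xrightarrow{\ q\ }G_{\mathcal P}((0,0),(0,0)).
\]
Its inverse is the continuous realization comparison: projection to the chosen loop, which descends through $q$. Hence $S_{\mathbb T^2}$ is a homeomorphism.
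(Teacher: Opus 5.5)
Your proposal is correct and follows essentially the same route as the paper. You classify loops by lifting to $\mathbb R^2$, obtain geometric completeness from the sorting/cancellation normal form together with lifted endpoints, and build $S_{\mathbb T^2}$ continuously through discrete $\Z^2$ using local constancy of the winding pair. The differences are only cosmetic: you interpolate directly in $\mathbb R^2$ rather than coordinatewise via the circle argument, and your chosen representative $|a^m;b^n|$ (rather than $d_{m,n}$) moves the use of Lemma~\ref{lem:torus-sequential} from coherence to surjectivity.
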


\begin{proof}
The universal cover is the product map
\[
  \mathbb R^2\longrightarrow\mathbb T^2,
  \qquad (x,y)\longmapsto([x],[y]).
\]
The lift of a based loop has endpoint $(m,n)\in\mathbb Z^2$, and endpoint-fixed
homotopy preserves this pair.  Conversely, the two coordinate projections of
any based torus loop are circle loops; the circle lifting theorem gives
endpoint-fixed homotopies to the standard loops of windings $m$ and $n$.
Taking their product gives a homotopy to the simultaneous loop $d_{m,n}$;
Lemma~\ref{lem:torus-sequential} then gives a homotopy to the sequential
realization $|a^m;b^n|$.  Thus the coordinate winding pair classifies
geometric loops.  The word reduction above gives the corresponding scoped
derivation, so the normal-form criterion proves geometric completeness.
Projection from coherent representatives to their chosen loops induces
the continuous realization comparison.  The local-constancy argument
constructs its inverse continuously: each coordinate
winding number is locally constant by the circle argument.  The winding pair thus
descends continuously to discrete $\Z^2$, and the map
$(m,n)\mapsto(a^m;b^n,d_{m,n})$ is continuous because its domain is discrete.
Lemma~\ref{lem:torus-sequential} ensures that each pair is coherent.
Composing these maps with the scoped quotient map gives $S_{\mathbb T^2}$.
Again, discreteness concerns the based loop-class quotient, not the
geometric torus loop space.
\end{proof}

Figure~\ref{fig:winding} illustrates the lifted paths used in both proofs.
\begin{figure}[htbp]
\centering
\begin{tikzpicture}[font=\small,>=Stealth]
\node[font=\small\bfseries] at (1.7,3.9) {(a) Circle: lift endpoint $2$};
\draw[->] (0,0) -- (3.5,0) node[right] {$t$};
\draw[->] (0,0) -- (0,2.9) node[above] {$\widetilde\gamma(t)$};
\draw[pathblue,thick] (0,0) .. controls (0.55,1.6) and (1.15,0.25) .. (1.65,1.4)
 .. controls (2.25,2.8) and (2.7,1.65) .. (3.15,2.4);
\draw[pathorange,thick,dashed] (0,0) -- (3.15,2.4);
\fill (3.15,2.4) circle (1.6pt);
\node[left] at (0,2.4) {$2$};
\node[below] at (3.15,0) {$1$};
\node[pathorange,rotate=35] at (1.55,0.77) {$t\mapsto 2t$};
\node[font=\small\bfseries] at (6.75,3.9) {(b) Torus: lift endpoint $(2,1)$};
\begin{scope}[xshift=5.1cm,x=1.5cm,y=2.4cm]
\draw[step=1,gray!25,very thin] (0,0) grid (2,1);
\draw[->] (-0.1,0) -- (2.25,0);
\draw[->] (0,-0.08) -- (0,1.15);
\draw[pathblue,very thick,->] (0,0) -- (2,0);
\draw[pathblue,very thick,->] (2,0) -- (2,1);
\draw[pathorange,thick,dashed,->] (0,0) -- (2,1);
\fill (0,0) circle(1.1pt) node[below left] {$(0,0)$};
\fill (2,1) circle(1.1pt) node[above right] {$(2,1)$};
\node[pathblue,below] at (1,0) {$a^2$};
\node[pathblue,right] at (2,0.5) {$b$};
\node[pathorange,rotate=37] at (0.8,0.58) {$(2t,t)$};
\end{scope}
\end{tikzpicture}
\caption{Winding separates normal forms because lifts remember endpoints.
Left: a lift beginning at $0$ and ending at $2$ is straightened to $2t$.
Right: the sequential word $a^2;b$ and the simultaneous loop have lifts
with common endpoints $(0,0)$ and $(2,1)$; straight-line interpolation in
$\mathbb R^2$ gives a homotopy after projection to the torus.
The plots lie in the covering spaces, not in the quotient circle or torus.}
\label{fig:winding}
\end{figure}

\section{Relation to logic and rewriting}

\subsection{Relation to previous computational-path work}

Earlier work develops the computational-path calculus, its higher coherence,
and particular fundamental-group calculations
\cite{RamosDeQueirozDeOliveira2016propositional,Ramos2017identity,
RamosDeQueirozDeOliveiraDeVeras2018explicit,
RamosDeQueirozDeOliveira2021calculation,
RamosDeQueirozDeOliveira2021fundamental,
DeVerasRamosDeQueirozDeOliveira2025weak,RamosEtAl2025OmegaPreprint}.  The
present paper gives these presentations a topological interpretation.
A scoped derivation proves equality in the presented groupoid
$G_{\mathcal P}$, and soundness maps that equality to a geometric homotopy
class.  Normal forms can prove the converse for a particular presentation.
In the universal presentation, the converse holds by definition because
every endpoint-fixed homotopy is allowed as a named rewrite.

The topological and Lean developments are supporting case studies
\cite{DeVerasRamosDeQueirozDeOliveira2025topological,
RamosEtAl2025LeanPreprint,RamosEtAl2025SVKPreprint}.

\subsection{Logic, rewriting, and semantic scope}

The semantic factorization theorem (Theorem~\ref{thm:semantic-factorization})
is the logical universal property of the construction.  At this level, a
scoped derivation is a finite proof object built from named rules and
structural constructors; soundness is an induction principle on that
derivation.  Completeness is a separate theorem: whenever two represented paths are
homotopic, their traces are related by the allowed rewrites.  It concerns
equality of represented homotopy classes, not a representation of each
individual homotopy witness.

The construction is compatible with the standard algebra of rewriting
\cite{ChurchRosser36,Newman42,KnuthBendix70,Squier87,GuiraudMalbos09}:
congruence, symmetry, and transitivity are already part of scoped equality.
Additional named rules require homotopy witnesses, as in the definition of a
presentation.  Termination and confluence are not assumptions of the
topological construction.  They may help establish a normalizer, but the
normal-form criterion still requires both a scoped normalization derivation
and a proof that geometrically equivalent inputs have the same code.
The finite-generator circle and torus reductions discharge those checks
explicitly.

The quotient retains trace classes, not the derivations between them.
Likewise, coherence requires a homotopy to exist but does not retain its
witness.  The construction therefore separates the computational data used
before quotienting from the ordinary equality of arrows after quotienting.
It makes no claim to construct a proof-relevant identity type or a
higher groupoid of homotopy witnesses.

\section{Formalization boundary and artifact}

\subsection{Parent artifact and mathematical exposition}
The earlier development uses Lean 4.24.0 and Mathlib 4.24.0.
This subsection describes that parent artifact; Section~10.2 describes the
separate result registered in Palomar.  The mathematical statements and
proofs in this paper do not depend on reading the implementation.

\paragraph{Traces, soundness, and quotient operations.}
The implementation represents traces by the parenthesized inductive type
\LeanName{GeometricTrace}.  Section~2 instead uses flat signed words.
The declarations \LeanName{WeightedSlotReparam},
\LeanName{balancedSlotReparam}, and
\LeanName{weightedConcatenationInvariant} relate the corresponding path
parametrizations.  They include the null-homotopy assumptions needed when a
factor has zero weight.  The scoped-rewrite modules check soundness and the
quotient operations.  The comparison with Mathlib's fundamental groupoid
identifies the represented homotopy classes and checks preservation of
identities, inverses, and composition.  The induced map on final
composable-pair domains is continuous.

\paragraph{Representative topologies.}
The module \LeanName{TraceSensitiveTopologicalCompPath} flattens internal
traces to signed words.  It checks continuity of the full-word realization,
the map to observable coordinates, and the identity, reversal, and weighted
composition operations.  These explicit topology constructions do not
replace the default observable topology used by the earlier quotient modules.

The module \LeanName{TraceSensitiveSeparation} checks a finite model of
Example~\ref{ex:quotient-topology-separation}: the one-letter trace codes
differ, their observable codes agree, and the identity from the discrete
trace topology to the indiscrete observable topology is continuous but its
inverse is not.  The further assertion that the two scoped quotient classes
remain distinct is justified in the paper by the reduced-word invariant.
The module \LeanName{TraceSensitiveUniversalCollapse} checks the continuous
one-letter section and the homeomorphism between the two universal quotient
topologies.

\paragraph{Circle and torus examples.}
The module \LeanName{FiniteCircleTorusPresentation} uses one circle
generator and two torus generators.  It checks integer and integer-pair
codes for traces, completion certificates, and soundness of the torus
commuting square.  These results do not replace the explicit cancellation
and sorting proofs in Section~8.  Those proofs establish normalization in the
minimal scoped presentations, whereas the completion certificates package
normal forms through a specified relation.  The separate torus development
is in \LeanName{ComputationalPaths/Path/Topology/TopologicalTorusScoped.lean};
it includes the comparison between simultaneous and sequential loop
representatives.

\paragraph{Hawaiian-earring transfer.}
The dedicated module checks that an externally supplied failure of the
product-quotient property yields failure of ordinary-pair compatibility.
It also transfers discontinuity of multiplication along the comparison
homeomorphism.  Fabel's classical results are assumptions of this transfer,
not facts reproved by the Lean development.

The root import file is \LeanName{ComputationalPaths.lean}.
Appendix~\ref{app:declarations}, Table~\ref{tab:declarations}, lists the
principal declarations for source lookup.

The Lean core artifact, version \texttt{0.6.1+lean-only-v3}, is archived on Zenodo
under the version DOI
\href{https://doi.org/10.5281/zenodo.21938980}
{\nolinkurl{doi:10.5281/zenodo.21938980}}; its permanent concept DOI is
\href{https://doi.org/10.5281/zenodo.21817207}
{\nolinkurl{doi:10.5281/zenodo.21817207}} \cite{RamosEtAl2026TopologyArtifact}.
The source is also available in the
\href{https://github.com/Arthur742Ramos/ComputationalPathsLean}{GitHub repository}.
It contains the finite-generator module, the torus simultaneous/sequential
bridge, and the trace-sensitive topology module described above.  Build the project with \texttt{lake build}; commands for individual modules
are listed in the artifact README and can be run independently.

The checked modules contain no unfinished proofs and no custom axiom
declarations.  Declaration-level axiom inspection reports only Lean's standard
logical and quotient principles, namely propositional extensionality,
classical choice, and quotient soundness.  These checks apply to the specified declarations, not automatically to every
statement in the manuscript.  The development complements earlier work on
computational paths and the Lean theorem prover
\cite{deMoura2021,RamosEtAl2025LeanPreprint,RamosEtAl2025OmegaPreprint};
it is not used as a substitute for the mathematical definitions or arguments
in the main text.

\subsection{The independently registered subset}
\label{sec:palomar}
The focused extraction
\href{https://github.com/Arthur742Ramos/TopologicalComputationalPaths}
{\emph{TopologicalComputationalPaths}} has a registered Palomar result,
\href{https://palomar-registry.org/entry?id=PALOMAR-2026-08-29-000005&version=1}
{\texttt{PALOMAR-2026-08-29-000005}}, version~1,
published on 29~August~2026 \cite{PalomarScoped2026}.
It selects the single bundled declaration
\LeanName{TopologicalComputationalPaths.main_result} at the immutable commit
\[
 \texttt{8254c40d0de03ff469c7c9c05087b8bc154e9a87}.
\]
The selected statement is an
\LeanName{OrdinaryTopologyComparisonCertificate}; it is not a registration
of this entire manuscript.  The snapshot uses Lean~4.32.0 with its
checked-in dependency manifest, separately from the parent artifact above.

The selected statement has three parts.  First, it supplies the continuous
comparison between final and ordinary pair spaces, the equivalent criteria
for their topologies to agree, and sufficient compact-Hausdorff and discrete
conditions.  It also records the consequence for continuity of ordinary
composition and the obstruction obtained from discontinuity.
Second, the circle and torus fields classify based-loop homotopy classes by
integers and integer pairs.  They check the invariant, standard
representatives, the identity value, additivity under concatenation, and that
the classification maps are mutual inverses.  Third, the Hawaiian-earring
field transfers the external obstruction under the hypotheses described
below.  The scope is summarized in Table~\ref{tab:palomar-scope}.

\begin{table}[htbp]
\centering\small
\begin{tabular}{>{\raggedright\arraybackslash}p{0.27\linewidth}
                >{\raggedright\arraybackslash}p{0.65\linewidth}}
\toprule
Part of the paper & Scope of the registered certificate\\
\midrule
Final/ordinary comparison &
Canonical continuous bijection; exact quotient/homeomorphism/topology
criteria; sufficient conditions and ordinary-composition consequence.\\
Circle and torus &
Actual loop-quotient additive classifications, with explicit standard
representatives and inverse laws.\\
Hawaiian earring &
Concrete observable based fiber and quotient comparison; obstruction
transfer conditional on \LeanName{FabelHawaiianEarringFacts}.\\
Broader exposition &
The trace-sensitive refinement, minimal finite-generator rewrite
algorithms, new explanatory homotopies and figures, and the full manuscript
are not certified as a whole by this selection.\\
\bottomrule
\end{tabular}
\caption{The Palomar registration boundary.  Supporting code and
mathematical exposition are distinguished from selected conclusions.}
\label{tab:palomar-scope}
\end{table}

The Hawaiian-earring qualification is essential.  In the registered
statement, \LeanName{hawaiianObservableOpenFiberTopology} is induced by
the geometric projection, and its one-letter section identifies its
quotient with the standard loop quotient.  This is the topology defined specifically for the registered based-fiber
result; it should not be identified without proof with each topology in
Section~2.  The field
\LeanName{hawaiian_based_fiber} assumes
\LeanName{FabelHawaiianEarringFacts}; the failure of the product-quotient property and the discontinuity of
multiplication are hypotheses, not theorems
reproved by the extraction.  In the mathematical account,
Proposition~\ref{prop:hawaiian-earring} supplies these hypotheses from
Fabel's published result \cite{Fabel11}.

To reproduce the selected result, check out the pinned commit and run
\texttt{lake build}.  The selection file \LeanName{comparator.json} pairs
\LeanName{Challenge.lean} with \LeanName{Solution.lean}.
The theorem placeholder in \LeanName{Challenge.lean} specifies the
statement to be checked against the proof in \LeanName{Solution.lean}; it
is not an unfinished proof in the solution.
Later follow-up and preimage-solver development is outside this registered
snapshot.
Registration provides a versioned verification record for the selected
formal statement; it is not journal acceptance, an endorsement of novelty,
or verification of the prose and illustrations.
\FloatBarrier

\section*{Funding}
No external funding was received for this work.

\section*{Acknowledgements}
AI-assisted tools were used during manuscript preparation for editorial
support and assistance with software, documentation, and figure preparation.
The geometric exposition and figures were developed from Tiago M.~L.~de
Veras's suggestions on reversal, coherence, and homotopy obstructions.
The figures are mathematical schematics with explicitly stated maps and
boundary conditions; they are not computational evidence for the theorems.

\FloatBarrier
\clearpage
\section{Conclusion}

This paper separates two questions that arise when computational paths are
interpreted geometrically.  The first is algebraic: which traces should count
as equal?  A scoped presentation answers this using its declared rewrites
and the groupoid laws.  Soundness guarantees that equal traces realize
homotopic paths, but the converse requires a completeness proof.  The
normal-form criterion provides such a proof for the based-loop circle and
torus presentations.

The second question is topological: on which space of composable classes is
multiplication continuous?  Taking the quotient of composable representatives
always gives a suitable domain.  This final topology need not agree with the
ordinary subspace topology on pairs of quotient arrows.
Product-quotient compatibility is exactly the condition that makes them
agree.  The compact-Hausdorff and discrete results give sufficient conditions;
the Hawaiian-earring example shows why an additional condition is needed.

For the universal presentation, every path is available as a primitive step
and every endpoint-fixed homotopy is allowed as a rewrite.  Its scoped
quotient is therefore the usual fundamental groupoid, with the
quotient topology on its arrow space.  Even in this complete case, continuity
of multiplication for the ordinary pullback topology is a separate issue.
Thus geometric completeness and continuity of ordinary composition should
not be conflated.

The Lean developments check specified parts of this construction.
The Palomar record identifies one versioned result covering the topology
comparison, additive circle and torus classifications, and the conditional
Hawaiian-earring transfer.  It does not certify the entire manuscript.
The general outcome is a way to interpret a chosen rewrite calculus
continuously while keeping both its geometric completeness and the topology
required for composition explicit.

\clearpage
\appendix
\section{Declaration index for the parent artifact}\label{app:declarations}
This index records the representative declarations discussed in Section~10.1.
It describes the parent Lean artifact, not the independently registered
Palomar selection of Section~10.2.  Fully qualified names are retained for
source lookup.
\begingroup
\small
\renewcommand{\arraystretch}{1.12}
\begin{longtable}{>{\raggedright\arraybackslash}p{0.23\linewidth}>{\raggedright\arraybackslash}p{0.69\linewidth}}
\caption{Representative declarations in the parent artifact, not the Palomar selection.}\label{tab:declarations}\\
\toprule
Mathematical layer & Representative checked declarations\\
\midrule
\endfirsthead
\multicolumn{2}{l}{\small\itshape Table \thetable\ continued}\\
\toprule
Mathematical layer & Representative checked declarations\\
\midrule
\endhead
\bottomrule
\endfoot
Scoped syntax and soundness &
\LeanName{ScopedGeometricRewritePresentation},
\LeanName{ScopedRwEq}, \LeanName{ScopedRwEq.sound}\\
Weighted realization &
\LeanName{WeightedSlotReparam}, \LeanName{balancedSlotReparam},
\LeanName{weightedConcatenationInvariant}\\
Trace-sensitive topology &
\LeanName{TraceSensitiveTopologicalCompPath.lean},
\LeanName{continuous_fullTraceRealization},
\LeanName{continuous_traceSensitive_to_observable},
\LeanName{traceSensitiveTopologyCertificate},
\LeanName{TraceSensitiveQuotient.continuous_quotientComparison},
\LeanName{TraceSensitiveQuotient.quotientComparisonHomeomorph_of_realization_section}\\
Finite topology separation &
\LeanName{TraceSensitiveSeparation.oneLetter_e_ne_f},
\LeanName{TraceSensitiveSeparation.observableCode_e_eq_f},
\LeanName{TraceSensitiveSeparation.not_continuous_observable_to_trace},
\LeanName{TraceSensitiveSeparation.certificate}\\
Quotient groupoid and topology &
\LeanName{ScopedComposableClass},
\LeanName{scopedCompositionOnStrong},
\LeanName{scopedFinalTopologicalGroupoidCertificate}\\
Ordinary/final comparison &
\LeanName{scopedProductCompatibility_iff_four_way},
\LeanName{scopedProductCompatibility_of_compact_final_t2}\\
Comparison and normal forms &
\LeanName{ComparisonFunctorCertificate},
\LeanName{comparisonHomeomorph_of_complete},
\LeanName{ScopedGeometricNormalFormCertificate}\\
Universal and obstruction cases &
\LeanName{universalHomeomorph},
\LeanName{ScopedGeometricRewrite.universalTraceSensitiveHomeomorph},
\LeanName{QuotientObstructionTransfer},
\LeanName{MultiplicationComparison}\\
Finite-generator presentation data &
\LeanName{FiniteCircleTorusPresentation.circleFinitePresentation},
\LeanName{FiniteCircleTorusPresentation.circleCompletionEquivInt},
\LeanName{FiniteCircleTorusPresentation.torusFinitePresentation},
\LeanName{FiniteCircleTorusPresentation.torusCompletionEquivIntProd},
\LeanName{FiniteCircleTorusPresentation.torusFiniteCommutingSquare}\\
Completed circle and genuine torus &
\LeanName{circleBasedNormalFormCertificate},
\LeanName{TopologicalTorus.standardLoop_homotopic_sequentialLoop},
\LeanName{TopologicalTorus.equivIntProd},
\LeanName{TopologicalTorus.certificate}\\
\end{longtable}
\endgroup

\clearpage
\bibliographystyle{amsplain}
\begingroup
\small
\bibliography{refs}

@inproceedings{deMoura2021,
  author    = {Leonardo de Moura and Sebastian Ullrich},
  title     = {The {Lean} 4 Theorem Prover and Programming Language},
  booktitle = {Proceedings of the 28th International Conference on Automated Deduction (CADE-28)},
  series    = {Lecture Notes in Computer Science},
  volume    = {12699},
  pages     = {625--635},
  publisher = {Springer},
  year      = {2021},
  doi       = {10.1007/978-3-030-79876-5_37},
}

@article{ChurchRosser36,
  author    = {Alonzo Church and J. Barkley Rosser},
  title     = {Some Properties of Conversion},
  journal   = {Transactions of the American Mathematical Society},
  volume    = {39},
  number    = {3},
  pages     = {472--482},
  year      = {1936},
  doi       = {10.2307/1989762},
}

@incollection{HofmannStreicher98,
  author    = {Martin Hofmann and Thomas Streicher},
  title     = {The Groupoid Interpretation of Type Theory},
  booktitle = {Twenty-five Years of Constructive Type Theory},
  editor    = {Giovanni Sambin and Jan M. Smith},
  series    = {Oxford Logic Guides},
  volume    = {36},
  pages     = {83--111},
  publisher = {Oxford University Press},
  doi       = {10.1093/oso/9780198501275.003.0008},
  year      = {1998},
}

@article{Newman42,
  author    = {M. H. A. Newman},
  title     = {On Theories with a Combinatorial Definition of ``Equivalence''},
  journal   = {Annals of Mathematics},
  volume    = {43},
  number    = {2},
  pages     = {223--243},
  year      = {1942},
  doi       = {10.2307/1968867},
}

@incollection{KnuthBendix70,
  author    = {Donald E. Knuth and Peter B. Bendix},
  title     = {Simple Word Problems in Universal Algebras},
  booktitle = {Computational Problems in Abstract Algebra},
  editor    = {J. Leech},
  pages     = {263--297},
  publisher = {Pergamon Press},
  address   = {Oxford},
  doi       = {10.1016/B978-0-08-012975-4.50028-X},
  year      = {1970},
}

@book{Hatcher02,
  author    = {Allen Hatcher},
  title     = {Algebraic Topology},
  publisher = {Cambridge University Press},
  address   = {Cambridge},
  isbn      = {978-0-521-79540-1},
  year      = {2002},
}

@article{Squier87,
  author    = {Craig C. Squier},
  title     = {Word Problems and a Homological Finiteness Condition for Monoids},
  journal   = {Journal of Pure and Applied Algebra},
  volume    = {49},
  number    = {1--2},
  pages     = {201--217},
  year      = {1987},
  doi       = {10.1016/0022-4049(87)90129-0},
}

@article{GuiraudMalbos09,
  author    = {Yves Guiraud and Philippe Malbos},
  title     = {Higher-Dimensional Categories with Finite Derivation Type},
  journal   = {Theory and Applications of Categories},
  volume    = {22},
  number    = {18},
  pages     = {420--478},
  publisher = {Mount Allison University, Department of Mathematics and Computer Science},
  url       = {https://www.tac.mta.ca/tac/volumes/22/18/22-18.pdf},
  eprint    = {0810.1442},
  archivePrefix = {arXiv},
  year      = {2009},
}

@book{May99,
  author    = {J. Peter May},
  title     = {A Concise Course in Algebraic Topology},
  publisher = {University of Chicago Press},
  series    = {Chicago Lectures in Mathematics},
  year      = {1999},
}

@book{Brown06,
  author    = {Ronald Brown},
  title     = {Topology and Groupoids},
  publisher = {BookSurge LLC},
  address   = {Charleston, SC},
  isbn      = {978-1-4196-2722-4},
  year      = {2006},
  edition   = {3rd},
}

@article{BrownDaneshNaruie75,
  author    = {Ronald Brown and Gholamreza Danesh-Naruie},
  title     = {The Fundamental Groupoid as a Topological Groupoid},
  journal   = {Proceedings of the Edinburgh Mathematical Society},
  volume    = {19},
  number    = {3},
  pages     = {237--244},
  year      = {1975},
  doi       = {10.1017/S0013091500015509},
}

@article{BrazasFabel13,
  author    = {Jeremy Brazas and Paul Fabel},
  title     = {On Fundamental Groups with the Quotient Topology},
  journal   = {Journal of Homotopy and Related Structures},
  volume    = {10},
  pages     = {71--91},
  year      = {2015},
  doi       = {10.1007/s40062-013-0042-7},
  url       = {https://doi.org/10.1007/s40062-013-0042-7},
}

@article{Fabel11,
  author    = {Paul Fabel},
  title     = {Multiplication Is Discontinuous in the {Hawaiian Earring Group}
               (with the Quotient Topology)},
  journal   = {Bulletin of the Polish Academy of Sciences. Mathematics},
  volume    = {59},
  number    = {1},
  pages     = {77--83},
  year      = {2011},
  doi       = {10.4064/ba59-1-9},
  url       = {https://doi.org/10.4064/ba59-1-9},
}

@article{PakdamanShahini21,
  author    = {Ali Pakdaman and Fereshteh Shahini},
  title     = {The Fundamental Groupoid as a Topological Groupoid: Lasso Topology},
  journal   = {Topology and its Applications},
  volume    = {302},
  pages     = {107838},
  year      = {2021},
  doi       = {10.1016/j.topol.2021.107838},
}

@article{PakdamanShahini23,
  author    = {Ali Pakdaman and Fereshteh Shahini},
  title     = {Topological Fundamental Groupoids: Brown's Topology},
  journal   = {Hacettepe Journal of Mathematics and Statistics},
  volume    = {52},
  number    = {4},
  pages     = {896--906},
  year      = {2023},
  doi       = {10.15672/hujms.1205441},
}

@article{RamosDeQueirozDeOliveira2016propositional,
  author    = {Ruy J. G. B. de Queiroz and Anjolina Grisi de Oliveira and Arthur Freitas Ramos},
  title     = {Propositional Equality, Identity Types, and Computational Paths},
  journal   = {South American Journal of Logic},
  volume    = {2},
  number    = {2},
  pages     = {245--296},
  year      = {2016},
  url       = {https://www.sa-logic.org/sajl-v2-i2/05-De%20Queiroz-De%20Oliveira-Ramos-SAJL.pdf}
}

@article{Ramos2017identity,
  author    = {Arthur Freitas Ramos and Ruy J. G. B. de Queiroz and Anjolina G. de Oliveira},
  title     = {On the Identity Type as the Type of Computational Paths},
  journal   = {Logic Journal of the IGPL},
  volume    = {25},
  number    = {4},
  pages     = {562--584},
  year      = {2017},
  doi       = {10.1093/jigpal/jzx015}
}

@article{RamosDeQueirozDeOliveiraDeVeras2018explicit,
  author    = {Arthur F. Ramos and Ruy J. G. B. de Queiroz and Anjolina G. de Oliveira and Tiago M. L. de Veras},
  title     = {Explicit Computational Paths},
  journal   = {South American Journal of Logic},
  volume    = {4},
  number    = {2},
  pages     = {441--484},
  year      = {2018},
  url       = {https://www.sa-logic.org/sajl-v4-i2/10-Ramos-de%20Queiroz-de%20Oliveira-de-Veras-SAJL.pdf}
}

@misc{DeVerasRamosDeQueirozDeOliveira2025topological,
  author    = {Tiago M. L. de Veras and Arthur F. Ramos and Ruy J. G. B. de Queiroz and Anjolina G. de Oliveira},
  title     = {A Topological Application of Labelled Natural Deduction},
  howpublished = {South American Journal of Logic, Advance Access, forthcoming},
  url       = {https://www.sa-logic.org/aaccess/ruy.pdf},
  eprint    = {1906.09105},
  archivePrefix = {arXiv},
  primaryClass = {cs.LO},
}

@inproceedings{RamosDeQueirozDeOliveira2021calculation,
  author    = {Tiago M. L. de Veras and Arthur F. Ramos and Ruy J. G. B. de Queiroz and Anjolina G. de Oliveira},
  title     = {Calculation of Fundamental Groups via Computational Paths},
  booktitle = {Anais do VI Encontro de Teoria da Computa{\c c}{\~a}o (ETC 2021)},
  pages     = {17--21},
  publisher = {Sociedade Brasileira de Computação -- SBC},
  month     = jul,
  url       = {https://doi.org/10.5753/etc.2021.16370},
  year      = {2021},
  doi       = {10.5753/etc.2021.16370}
}

@inproceedings{RamosDeQueirozDeOliveira2021fundamental,
  author    = {Arthur F. Ramos and Ruy J. G. B. de Queiroz and Anjolina G. de Oliveira},
  title     = {Computational Paths and the Fundamental Groupoid of a Type},
  booktitle = {Anais do VI Encontro de Teoria da Computa{\c c}{\~a}o (ETC 2021)},
  pages     = {22--25},
  publisher = {Sociedade Brasileira de Computação -- SBC},
  month     = jul,
  url       = {https://doi.org/10.5753/etc.2021.16371},
  year      = {2021},
  doi       = {10.5753/etc.2021.16371}
}

@article{DeVerasRamosDeQueirozDeOliveira2025weak,
  author    = {Tiago M. L. de Veras and Arthur F. Ramos and Ruy J. G. B. de Queiroz and Anjolina G. de Oliveira},
  title     = {Computational Paths -- A Weak Groupoid},
  journal   = {Journal of Logic and Computation},
  volume    = {35},
  number    = {5},
  pages     = {exad071},
  year      = {2025},
  note      = {Published online 24 November 2023; issue dated 2025},
  publisher = {Oxford University Press},
  url       = {https://doi.org/10.1093/logcom/exad071},
  doi       = {10.1093/logcom/exad071}
}

@misc{RamosEtAl2025LeanPreprint,
  author    = {Arthur F. Ramos and Anjolina G. de Oliveira and Ruy J. G. B. de Queiroz and Tiago M. L. de Veras},
  title     = {Formalizing Computational Paths and Fundamental Groups in {Lean}},
  year      = {2025},
  note      = {arXiv preprint arXiv:2511.19142},
  eprint    = {2511.19142},
  archivePrefix = {arXiv},
  primaryClass = {cs.LO},
  url       = {https://arxiv.org/abs/2511.19142}
}

@misc{RamosEtAl2025OmegaPreprint,
  author    = {Arthur F. Ramos and Tiago M. L. de Veras and Ruy J. G. B. de Queiroz and Anjolina G. de Oliveira},
  title     = {Computational Paths Form a Weak $\omega$-Groupoid},
  year      = {2025},
  note      = {arXiv preprint arXiv:2512.00657},
  eprint    = {2512.00657},
  archivePrefix = {arXiv},
  primaryClass = {cs.LO},
  url       = {https://arxiv.org/abs/2512.00657}
}

@misc{RamosEtAl2025SVKPreprint,
  author    = {Arthur F. Ramos and Tiago M. L. de Veras and Ruy J. G. B. de Queiroz and Anjolina G. de Oliveira},
  title     = {The {Seifert}--van {Kampen} Theorem via Computational Paths: A Formalized Approach to Computing Fundamental Groups},
  year      = {2025},
  note      = {arXiv preprint arXiv:2512.03175},
  eprint    = {2512.03175},
  archivePrefix = {arXiv},
  primaryClass = {cs.LO},
  url       = {https://arxiv.org/abs/2512.03175}
}

@book{RamosDeQueirozDeOliveiraGabbay2026,
  author    = {Arthur Freitas Ramos and Ruy J. G. B. de Queiroz and Anjolina Grisi de Oliveira and Dov M. Gabbay},
  title     = {Computational Paths: The Calculus of Equality},
  publisher = {College Publications},
  year      = {2026},
  isbn      = {978-1-84890-515-3},
  url       = {https://www.collegepublications.co.uk/logic/mlf/?00037}
}

@misc{RamosEtAl2026TopologyArtifact,
  author       = {Arthur Freitas Ramos and Ruy J. G. B. de Queiroz and Anjolina Grisi de Oliveira and Tiago M. L. de Veras},
  title        = {Topological Semantics for Scoped Computational Paths---Lean Artifact},
  year         = {2026},
  howpublished = {Zenodo software archive},
  note         = {Version 0.6.1+lean-only-v3, doi:10.5281/zenodo.21938980},
  doi          = {10.5281/zenodo.21938980},
  url          = {https://doi.org/10.5281/zenodo.21938980}
}

@misc{PalomarScoped2026,
  author = {Ramos, Arthur Freitas and de Queiroz, Ruy J. G. B. and
            de Oliveira, Anjolina Grisi and de Veras, Tiago M. L. and
            Hulak, David Barros},
  title = {{TopologicalComputationalPaths}: registered Lean result},
  year = {2026},
  howpublished = {Palomar, {PALOMAR-2026-08-29-000005}, version 1},
  note = {Registered 29 August 2026. Source commit
          8254c40d0de03ff469c7c9c05087b8bc154e9a87},
  url = {https://palomar-registry.org/entry?id=PALOMAR-2026-08-29-000005&version=1}
}
\endgroup

\end{document}